\def\lstRomanletters{a,b,c,d,e,f,g,h,i,j,k,l,m,n,o,p,q,r,s,t,u,v,w,x,y,z}
\def\lstRomanLETTERS{A,B,C,D,E,F,G,H,I,J,K,L,M,N,O,P,Q,R,S,T,U,V,W,X,Y,Z}
\def\lstRomanSca{a,b,c,d,e,f,g,h,i,j,k,l,m,n,o,p,q,r,s,t,u,v,w,x,y,z,%
	A,B,C,D,E,F,G,H,I,J,K,L,M,N,O,P,Q,R,S,T,U,V,W,X,Y,Z}
\def\lstRomanBold{avec,bvec,cvec,dvec,evec,fvec,gvec,hvec,ivec,jvec,kvec,lvec,mvec,nvec,ovec,pvec,qvec,rvec,svec,tvec,uvec,vvec,wvec,xvec,yvec,zvec,%
	Amat,Bmat,Cmat,Dmat,Emat,Fmat,Gmat,Hmat,Imat,Jmat,Kmat,Lmat,Mmat,Nmat,Omat,Pmat,Qmat,Rmat,Smat,Tmat,Umat,Vmat,Wmat,Xmat,Ymat,Zmat}
\def\lstscript{mat,cal,scr,bb,frak}
\def\MkRomanVec#1.{\expandafter\def\csname#1vec\endcsname{\expandafter\mathsf{#1}}}
\def\MkRomanFonts#1.#2.{\expandafter\def\csname#1#2\endcsname{\csname math#2\endcsname{#1}}}
\lstRomanletters\do{\expandafter \MkRomanVec \i.}
\lstRomanLETTERS\do{\@for\j:=\lstscript\do{\edef\temp{\i.\j.} \expandafter\MkRomanFonts \temp}}
\def\lstGreekletters{alpha,beta,gamma,delta,epsilon,zeta,eta,theta,iota,kappa,lambda,mu,nu,xi,omicron,pi,rho,sigma,tau,upsilon,phi,chi,psi,omega}
\def\lstGreekLETTERS{Gamma,Delta,Theta,Lambda,Xi,Pi,Sigma,Upsilon,Phi,Psi,Omega}
\def\lstGreekAll{alpha,beta,gamma,delta,epsilon,zeta,eta,theta,iota,kappa,lambda,mu,nu,xi,omicron,pi,rho,sigma,tau,upsilon,phi,chi,psi,omega,alphavec,betavec,gammavec,deltavec,epsilonvec,zetavec,etavec,thetavec,iotavec,kappavec,lambdavec,muvec,nuvec,xivec,omicronvec,pivec,rhovec,sigmavec,tauvec,upsilonvec,phivec,chivec,psivec,omegavec,Gamma,Delta,Theta,Lambda,Xi,Pi,Sigma,Upsilon,Phi,Psi,Omega,Gammamat,Deltamat,Thetamat,Lambdamat,Ximat,Pimat,Sigmamat,Upsilonmat,Phimat,Psimat,Omegamat}
\def\MkGreekVec#1.{\expandafter\def\csname#1vec\endcsname{\expandafter\boldsymbol{\csname#1\endcsname}}}
\def\MkGreekMat#1.{\expandafter\def\csname#1mat\endcsname{\expandafter\boldsymbol{\csname#1\endcsname}}}
\def\MkGreekAdorn#1.#2.{\expandafter\def\csname#1#2\endcsname{\csname math#2\endcsname{\csname#1\endcsname}}}
\def\MkRomanAdorn#1.#2.{\expandafter\def\csname#1#2\endcsname{\csname math#2\endcsname{#1}}}
\def\MkGreekStar#1.#2.{\expandafter\def\csname#1#2\endcsname{\csname math#2\endcsname{\csname#1\endcsname}^{\csname super#2\endcsname}}}
\def\MkRomanStar#1.#2.{\expandafter\def\csname#1#2\endcsname{\csname math#2\endcsname{#1}^{\csname super#2\endcsname}}}
\def\lstadornments{hat,tilde,bar}
\def\lstsuperadornments{star}
\lstGreekletters\do{\expandafter \MkGreekVec \i.}
\lstGreekLETTERS\do{\expandafter \MkGreekMat \i.}
\lstRomanSca\do{\@for\j:=\lstadornments\do{\edef\temp{\i.\j.} \expandafter\MkRomanAdorn \temp}}
\lstRomanBold\do{\@for\j:=\lstadornments\do{\edef\temp{\i.\j.} \expandafter\MkGreekAdorn \temp}}
\lstGreekAll\do{\@for\j:=\lstadornments\do{\edef\temp{\i.\j.} \expandafter\MkGreekAdorn \temp}}
\lstRomanSca\do{\edef\temp{\i.star.} \expandafter\MkRomanStar \temp}
\lstRomanBold\do{\edef\temp{\i.star.} \expandafter\MkGreekStar \temp}
\lstGreekAll\do{\edef\temp{\i.star.} \expandafter\MkGreekStar \temp}
\def\lstnumbers{zero.0,one.1,two.2,three.3,four.4,five.5,six.6,seven.7,eight.8,nine.9}
\def\MkNums#1.#2.{\expandafter\def\csname#1\endcsname{#2}}
\def\MkRomanCons#1.#2.#3.{\expandafter\def\csname#1#2\endcsname{#1_{#3}}}
\def\MkGreekCons#1.#2.#3.{\expandafter\def\csname#1#2\endcsname{\csname#1\endcsname_{#3}}}
\lstnumbers\do{\expandafter\MkNums \i.}
\lstRomanSca\do{\@for\j:=\lstnumbers\do{\edef\temp{\i.\j.} \expandafter\MkRomanCons \temp}}
\lstRomanBold\do{\@for\j:=\lstnumbers\do{\edef\temp{\i.\j.} \expandafter\MkGreekCons \temp}}
\lstGreekAll\do{\@for\j:=\lstnumbers\do{\edef\temp{\i.\j.} \expandafter\MkGreekCons \temp}}
\lstnumbers\do{\edef\temp{Vert.\j.} \expandafter\MkGreekCons \temp}
\lstGreekAll\do{\@for\j:=\lstadornments\do{\@for\k:=\lstnumbers\do{\edef\temp{\i\j.\k.} \expandafter\MkGreekCons \temp}}}
\lstGreekAll\do{\@for\j:=\lstsuperadornments\do{\@for\k:=\lstnumbers\do{\edef\temp{\i\j.\k.} \expandafter\MkGreekCons \temp}}}
\lstRomanSca\do{\@for\j:=\lstadornments\do{\@for\k:=\lstnumbers\do{\edef\temp{\i\j.\k.} \expandafter\MkGreekCons \temp}}}
\lstRomanSca\do{\@for\j:=\lstsuperadornments\do{\@for\k:=\lstnumbers\do{\edef\temp{\i\j.\k.} \expandafter\MkGreekCons \temp}}}
\lstRomanBold\do{\@for\j:=\lstadornments\do{\@for\k:=\lstnumbers\do{\edef\temp{\i\j.\k.} \expandafter\MkGreekCons \temp}}}
\lstRomanBold\do{\@for\j:=\lstsuperadornments\do{\@for\k:=\lstnumbers\do{\edef\temp{\i\j.\k.} \expandafter\MkGreekCons \temp}}}
\let\epsilonold\epsilon
\let\epsilon\varepsilon
\let\varepsilon\epsilonold
\let\phiold\phi
\let\phi\varphi
\let\varphi\phiold
\def\comp{\mathsf{C}}
\DeclareMathOperator{\pr}{pr}
\DeclareMathOperator\Ber{Ber}
\def\Bkl{B_{k\ell}}
\def\bkln{b_{k\ell, n}}
\def\bklN{b_{k\ell, N}}
\def\bklnone{b_{k\ell, n-1}}
\def\btildeklN{\btilde_{k\ell,N}}
\def\pkl{p_{k\ell}}
\def\qkl{q_{k\ell}}
\def\qkld{\qkl^{(d)}}
\def\qtildekl{\qtilde_{k\ell}}
\def\qhatkl{\qhat_{k\ell}}
\def\Bkld{\Bkl^{d}}
\def\pr{P}
\def\P{P}
\def\Q{Q}
\def\Qd{Q^{(d)}}
\def\fac#1{(#1)!}
\DeclareMathOperator\kl{KL}
\def\lstadornments{hat,tilde,bar,check}
\lstRomanSca\do{\@for\j:=\lstadornments\do{\edef\temp{\i.\j.} \expandafter\MkRomanAdorn \temp}}
\let\inold\in
\def\MkRomanAlphaCons#1.#2.#3.#4.{\expandafter\def\csname#1#4#2\endcsname{#1_{#4-#3}}}
\def\MkRomanAlpha#1.{\expandafter\def\csname#1n\endcsname{#1_{n}} \expandafter\def\csname#1N\endcsname{#1_{N}}}
\def\MkGreekAlphaCons#1.#2.#3.#4.{\expandafter\def\csname#1#4#2\endcsname{\csname#1\endcsname_{#4-#3}}}
\def\MkGreekAlpha#1.{\expandafter\def\csname#1n\endcsname{\csname#1\endcsname_{n}} \expandafter\def\csname#1N\endcsname{\csname#1\endcsname_{N}}}
\lstRomanSca\do{\edef\temp{\i.} \expandafter\MkRomanAlpha \temp}
\lstGreekAll\do{\edef\temp{\i.} \expandafter\MkGreekAlpha \temp}
\lstRomanSca\do{\@for\j:=\lstnumbers\do{\@for\k:=\lstRomanSca\do{\edef\temp{\i.\j.\k.} \expandafter\MkRomanAlphaCons \temp}}}
\lstRomanSca\do{\@for\j:=\lstadornments\do{\@for\k:=\lstnumbers\do{\@for\l:=\lstRomanSca\do{\edef\temp{\i\j.\k.\l.} \expandafter\MkGreekAlphaCons \temp}}}}
\lstRomanSca\do{\@for\j:=\lstadornments\do{\edef\temp{\i\j.} \expandafter\MkGreekAlpha \temp}}
\let\in\inold
\def\ONE{\mathbbm{1}}
\theoremstyle{plain}
\newtheorem{theorem}{Theorem}
\newtheorem{lemma}{Lemma}
\newtheorem{corollary}{Corollary}[theorem]
\newtheorem{proposition}{Proposition}
\theoremstyle{definition}
\theoremstyle{remark}
\newtheorem{remark}{Remark}
\theoremstyle{plain}
\DeclareMathOperator{\Leb}{Leb}
\title{A Rank-Based Sequential Test of Independence}
\author[1]{Alexander Henzi\thanks{Supported in part by European Research Council (ERC) under European Union’s Horizon 2020 research and innovation
programme, grant agreement No. 786461}}
\author[1]{Michael Law\thanks{Supported in part by NSF Grant DMS-2203012.}}
\affil[1]{Seminar for Statistics, ETH Z\"urich}
\begin{document}

\maketitle

\begin{abstract}
    We consider the problem of independence testing for two univariate random variables in a sequential setting.  By leveraging recent developments on safe, anytime-valid inference, we propose a test with time-uniform type I error control and derive explicit bounds on the finite sample performance of the test.   We demonstrate the empirical performance of the procedure in comparison to existing sequential and non-sequential independence tests.  Furthermore, since the proposed test is distribution free under the null hypothesis, we empirically simulate the gap due to Ville's inequality --- the supermartingale analogue of Markov's inequality --- that is commonly applied to control type I error in anytime-valid inference, and apply this to construct a truncated sequential test.
\end{abstract}

\section{Introduction}

Let $(X_n, Y_n) \in \mathbb{R}\times\mathbb{R}$, $n \in \mathbb{N}$, be a stream of independent and identically distributed (iid) random variables with unknown distribution $\P$, and denote by $(X,Y)$ a generic pair from this distribution. In this article, we consider the classical problem of testing independence of $X$ and $Y$; that is, the null hypothesis is that for all Borel measurable sets $A, B \subseteq \mathbb{R}$,
\[
H_0\colon \P(X \in A, Y \in B) = \P(X\in A)\P(Y\in B).
\]
This problem has been studied for almost a century and there exists a vast literature on fixed sample size procedures for testing $H_0$, which we do not attempt to summarize. However, only little work has been done on testing independence sequentially. By a sequential test, we mean a decision rule $\varphi\colon \bigcup_{n=1}^{\infty} \mathbb{R}^{2n} \rightarrow \{0, 1\}$, with $\varphi = 1$ for ``reject $H_0$'' and $\varphi = 0$ for ``do not reject $H_0$'', which satisfies
\begin{equation} \label{eq:type_i_error}
	\P\{\varphi(X_1, Y_1, \dots, X_n, Y_n) = 1 \text{ for some } n \in \mathbb{N}\} \leq \alpha
\end{equation}
for a prescribed significance level $\alpha \in (0,1)$ under $H_0$. As can be seen from the definition, compared to non-sequential tests, a sequential test allows us to monitor the data stream $(X_n, Y_n)_{n \in \mathbb{N}}$ continuously over time, while still controlling the overall probability of a false rejection, without having to fix a sample size $n$ in advance.

Sequential tests of independence have been studied only recently by \citet[Appendix A.2]{Balsubramani2016}, \citet[Section 5.2.2]{Shekhar2021}, \citet{Podkopaev2022}, and \citet{Podkopaev2023}; a more extensive literature review is in Section \ref{sec:literature_review}. These approaches belong to the area of safe, anytime-valid inference, which is reviewed in detail by \citet{Ramdas2022game}. We refer the interested reader to this article for an introduction to the area and a general overview of the definitions and methods that we apply throughout this paper.

A common tool for constructing sequential tests, applied in the articles cited above, are test martingales. A test martingale is a non-negative $(\mathcal{G}_n)_{n\in\mathbb{N}}$-supermartingale $(M_n)_{n \in \mathbb{N}}$ with initial value $M_0 = 1$, where $(\mathcal{G}_n)_{n\in\mathbb{N}}$ is a certain filtration. By Ville's inequality, $(M_n)_{n\in\mathbb{N}}$ satisfies
\begin{equation} \label{eq:ville}
	\P(M_n \geq 1/\alpha \text{ for some } n \in \mathbb{N}) \leq \alpha, \ \alpha \in (0,1),
\end{equation}
under $H_0$, so that the function $\varphi = \ONE(M_n \geq 1/\alpha)$ is a sequential test with type I error probability $\alpha$. More precisely, the above cited articles consider the data in pairs $Z_n = \{(X_{2n-1}, Y_{2n-1}), (X_{2n}, Y_{2n})\}$ and build test martingales $(M_n)_{n\in\mathbb{N}}$ of the form
\begin{equation} \label{eq:betting_martingale}
	M_N = \prod_{n=1}^N [1+\lambda_n \{g_n(Z_n) - g_n(\tilde{Z}_n)\}],
\end{equation}
where $\tilde{Z}_n = \{(X_{2n-1}, Y_{2n}), (X_{2n}, Y_{2n-1})\}$ swaps the $Y$-variable in the pairs, and $\lambda_n$ and $g_n$ are weights and test functions such that $\lambda_n \{g_n(Z_n) - g_n(\tilde{Z}_n)\} \geq -1$ for all $n \in \mathbb{N}$. 

The first contribution of this article is the result that with the original data filtration $\mathcal{F}_n = \sigma(X_i, Y_i, i = 1, \dots, n)$, any $(\mathcal{F}_n)_{n\in\mathbb{N}}$- test martingale $(M_n)_{n \in \mathbb{N}}$ for $H_0$ is non-increasing. Hence $M_n \leq 1$ for all $n \in \mathbb{N}$ and the corresponding test has no power to detect violations of the null hypothesis. The same phenomenon also occurs for tests of other large nonparametric hypothesis classes, namely, exchangeability of binary sequences \citep{Ramdas2022exchangeability} and log-concavity \citep{Gangrade2023}. There are two strategies to bypass this problem. One is to construct so-called e-processes, which satisfy weaker conditions than test martingales but still yield anytime-valid tests; see \citet{Ramdas2022game} for more details. The second approach is to replace $(\mathcal{F}_n)_{n \in \mathbb{N}}$ by a coarser filtration. The above mentioned articles on independence testing take $Z_n = \{(X_{2n-1}, Y_{2n-1}), (X_{2n}, Y_{2n})\}$ as observation units, and hence follow the second approach with $(\mathcal{G}_n)_{n\in\mathbb{N}}$ defined as $\mathcal{G}_n = \mathcal{F}_n$ for even $n$ and $\mathcal{G}_n = \mathcal{F}_{n-1} \subset \mathcal{F}_{n}$ for odd $n$.

In this article, we propose a new rank-based sequential test of independence. Instead of working with the original observations $(X_n, Y_n)$, we replace them by their normalized sequential ranks
\begin{align} \label{eq:sequential_ranks}
	\hat{F}_n(X_n) = \frac{1}{n}\sum_{i=1}^n \ONE(X_i \leq X_n), \quad \hat{G}_n(Y_n) = \frac{1}{n}\sum_{i=1}^n \ONE(Y_i \leq Y_n).
\end{align}
With suitable randomization, this transformation allows us to reduce the composite null hypothesis of independence to the simple hypothesis that the randomized ranks are iid uniform random variables on $[0,1]^2$ --- a point null hypothesis for which one can construct powerful test martingales. The resulting test martingales are adapted to the filtration generated by the sequential ranks and, hence, we follow a different strategy than the existing methods. As a beneficial side effect, since our method is based on ranks, it is invariant under strictly increasing transformations of $(X_n)_{n \in \mathbb{N}}$ and $(Y_n)_{n \in \mathbb{N}}$ and also distribution free.

To construct our test martingale, we discretize the unit square $[0,1]^2$ into equally sized bins and then assess whether the bin frequencies follow a uniform multinomial distribution. Binning approaches are an established strategy for independence testing; see for example the recent articles by \citet{Ma2019}, \citet{Zhang2019}, and the references therein. There are several reasons why we consider this approach suitable for sequential independence testing. From a theoretical point of view, the resulting test martingales have mathematically tractable forms, which make it possible to derive uniform bounds on the type II error of the test. From a practical point of view, the approach is computationally attractive since the computation of the test martingale up to time $n$ only requires $\mathcal{O}\{n\log(n)\}$ operations, and each update is of complexity $\mathcal{O}\{\log(n)\}$, if implemented properly. Also, the test is interpretable in the sense that it comes along with an estimate of bin frequencies on discretizations of $[0,1]^2$, from which one can visualize which regions deviate from the uniform distribution and explain the dependence between $X$ and $Y$.

The following notation is used throughout the article. For elements $x_1, \dots, x_n$ of a set $\mathcal{X}$, we let $\xvec_{n} = (x_{1}, \dots, x_{n}) \in \Xcal^{n}$, with no distinction between row and column vectors; we analogously write $\Xmat_n = (X_1, \dots, X_n)$ for random vectors. The concatenation of $\uvec_n \in \mathcal{X}^n$ and $\vvec_m \in \mathcal{X}^m$ is denoted by $(\uvec_n, \vvec_m)$. We use $\mathbb{E}_P$ for the expectation operator under the distribution $P$.  The indicator function is denoted by $\mathbbm{1}(\cdot)$. We use the convention $0\log(0) = 0$. The Lebesgue measure of $A\subseteq \mathbb{R}^n$ is denoted by $\mathrm{Leb}(A)$, where the dimension of the space is apparent from the context. For a sequence $(X_n, Y_n)_{n \in \mathbb{N}}$ of iid pairs $(X_n, Y_n)$ with distribution $P$, we let $(X,Y)$ denote a generic pair with that distribution. The notation $\mathcal{U}$ is used for the uniform distribution on a bounded or finite set, and $\kl(P\|Q) = \int p(x)\log\{p(x)/q(x)\} \, \mu(dx)$ is the Kullback-Leibler (KL) divergence between two distributions $P, Q$ dominated by a measure $\mu$.

\section{Related literature} \label{sec:literature_review}
This work belongs to the area of safe, anytime-valid inference, which is reviewed in great detail by \citet{Ramdas2022game}; see also \citet{Shafer2011}, \citet{Grunwald2019} and \citet{Vovk2021} for the general methodology. Test martingales, which are a main tool for constructing sequential tests, have close connections to game theoretic probability \citep{ShaferVovk2019}, in that they can be interpreted as a sequence of bets against the null hypothesis, each of which has an expected payoff less or equal to $1$ if the null hypothesis is true. The value of the test martingale is then the accumulated capital if all payoffs are invested in the bets as new observations arrive. In this article we are mainly interested in type I and II error of our test, but the validity of test martingales holds in more generality. Namely, if $(M_n)_{n\in\mathbb{N}}$ is a $(\mathcal{G}_n)_{n\in\mathbb{N}}$ test martingale, then by Doob's martingale convergence theorem,
\begin{equation} \label{eq:stopped_martingale}
	\mathbb{E}_P(M_{\tau}) \leq 1
\end{equation}
for any $(\mathcal{G}_n)_{n\in\mathbb{N}}$ stopping time $\tau$ under $H_0$. The stopped process is also referred to as an e-variable, and satisfies $P(M_{\tau} \geq 1/\alpha) \leq \alpha$ and, hence, type I error control by Markov's inequality. Tests based on e-variables are also called e-tests. We are primarily interested in the aggressive stopping rule $\tau = \inf(n\in\mathbb{N}\colon M_n \geq 1/\alpha)$ for a level $\alpha$ test, but in many practical settings experiments are stopped early due to other criteria, both external and data-dependent, and hence the validity under arbitrary stopping rule is an important property of a statistical test.

As mentioned above, the articles by \citet{Balsubramani2016}, \citet{Shekhar2021}, \citet{Podkopaev2022}, \citet{Podkopaev2023} develop methods for independence testing in sequential settings. Other authors treat more general hypothesis testing problems, or independence testing in special cases. \citet{Gruenwald2022modelx}, \citet{Shaer2022}, and \citet{Duan2022} develop anytime-valid tests of conditional independence under the model-X assumption, which, for non-conditional independence tests, requires one of the marginal distributions of $(X,Y)$ to be known. The case of binary observations, e.g., contingency tables, is treated by \citet{Turner2021}, and extended to stratified count data by \citet{Turner2023}.

The special case of testing independence of Gaussian random variables, which is equivalent to testing correlations, has attracted most attention in the sequential testing literature. This line of work started, to the best of our knowledge, with \citet{Cox1952}, and includes many follow-up articles \citep{Choi1971, Kowalski1971, Pradhan1975, Kollerstrom1979, Kocherlakota1986}. In the literature on anytime-valid testing, the problem is covered by the recently developed group-invariant tests by \citet{Perez2022}.

The only work applying sequential ranks for independence testing we are aware of is by \citet{Choi1973}, but their test is computationally infeasible and one has to rely on approximations under which its validity is not clear. In other sequential testing problems sequential ranks have been applied more extensively; see \citet{Kalina2017} for an overview.

\section{Rank-based tests}

\subsection{An impossibility result}
We begin with the result that any non-negative $(\mathcal{F}_n)_{n\in\mathbb{N}}$-supermartingale for $H_0$ almost surely has non-increasing paths. The proof of this theorem, as well as all other proofs, can be found in Appendix \ref{sec:proofs}.

\begin{theorem} \label{thm:no_supermartingale}
	% Let $(X_n, Y_n)_{n \in \Nbb}$ be iid and let $(M_n)_{n \in \mathbb{N}}$ a sequence of measurable functions, where $M_n: \mathbb{R}^{2n} \rightarrow [0, \infty)$. Define the filtration $\mathcal{F}_n = \sigma(X_i, Y_i, i = 1, \dots,n)$.
	% \begin{itemize}
	% 	\item[(i)] If $\{M_n(\Xmat_n,\Ymat_n)\}_{n \in \mathbb{N}}$ is an $(\mathcal{F}_n)_{n\in\mathbb{N}}$-adapted supermartingale for every $P$ satisfying $H_0$, then $M_{n}(\xvec_n, \yvec_n) \geq M_{n+1}(\xvec_{n+1},\yvec_{n+1})$ for all $\xvec_{n+1} = (\xvec_n, x_{n+1}), \yvec_{n+1} = (\yvec_n, y_{n+1}) \in \mathbb{R}^{n+1}$ and $n \in \mathbb{N}$.
	% 	\item[(ii)] If $\{M_n(\Xmat_n,\Ymat_n)\}_{n \in \mathbb{N}}$ is an $(\mathcal{F}_n)_{n\in\mathbb{N}}$-adapted supermartingale for every Lebesgue continuous $P$ satisfying $H_0$, then for all $n \in \mathbb{N}$,
	% 	\[
	% 	\mathrm{Leb}[\{(\xvec_n, x_{n+1}, \yvec_n, y_{n+1}) \in \mathbb{R}^{2n+2}\colon M_{n+1}(\xvec_n, x_{n+1}, \yvec_n, y_{n+1}) > M_{n}(\xvec_n, \yvec_n)\}] = 0.
	% 	\]
	% \end{itemize}
    Let $(X_n, Y_{n})_{n \in \mathbb{N}}$ be iid and define the filtration $\mathcal{F}_{n} = \sigma(X_{i}, Y_{i}, i = 1, \dots, n)$.
    \begin{itemize} 
        \item[(i)] If $M_{n}$ is an $\mathcal{F}_{n}$-supermartingale for every $P$ satisfying $H_{0}$, then $M_{n} \geq M_{n+1}$ everywhere.

        \item[(ii)] If $M_{n}$ is an $\mathcal{F}_{n}$-supermartingale for every $P$ satisfying $H_{0}$ and absolutely continuous with respect to Lebesgue measure, then $P'(M_{n} \geq M_{n+1}) = 1$ for every $P'$ absolutely continuous with respect to Lebesgue measure.
    \end{itemize}
\end{theorem}

The second part of the theorem shows that even if we restrict the family of distributions to Lebesgue continuous distributions satisfying $H_0$, there still only exist trivial supermartingales, in the sense that they decrease almost surely if the data follows any Lebesgue continuous distribution. The above impossibility result motivates us to consider martingales with a coarser filtration than $\sigma(X_i, Y_i, i = 1, \dots,n)$, $n \in \mathbb{N}$, which we introduce in the next section.

\subsection{Constructing the test martingale} \label{sec:construction}
Before going into the details of our method, we emphasize that although our test is based on sequential ranks and hence most suitable for distributions with continuous marginals, all results on validity and power hold for any type of marginal distributions of $(X,Y)$.

As explained in the Introduction, to construct a test martingale, we transform the observations $(X_n, Y_n)$ to their sequential ranks $\hat{F}_n(X_n)$, $\hat{G}_n(Y_n)$, where $\hat{F}_n$ and $\hat{G}_n$ are the empirical cumulative distribution functions (CDFs) of $X_1, \dots, X_n$ and $Y_1, \dots, Y_n$, respectively. Since the support of $\hat{F}_n(X_n)$, $\hat{G}_n(Y_n)$ changes with $n$, it is convenient to randomize the ranks as follows,
\begin{equation} \label{eq:randomized_ranks}
	R_n = U_n\hat{F}_n(X_n) + (1-U_n) \hat{F}_n(X_n-), \quad S_n = V_n\hat{G}_n(Y_n) + (1-V_n) \hat{G}_n(Y_n-),
\end{equation}
with independent uniform variables $(U_n)_{n \in \mathbb{N}}$, $(V_n)_{n \in \mathbb{N}}$ on $(0,1)$, independent of all other random quantities. The following proposition, part (i) of which is by \citet{BarndorffNielsen1963}, summarizes useful properties of the sequential ranks. 

\begin{proposition} \label{prop:sequential_ranks}
	Let $(X,Y) \sim \pr$ with marginal CDFs $F$ and $G$, respectively.
	\begin{itemize}
		\item[(i)] If $F$ is continuous, then the sequential ranks $\hat{F}_n(X_n)$, $n = 1, \dots, N$, are independent with
		\[
		\pr\left\{\hat{F}_n(X_n) = \frac{k}{n}\right\} = \frac{1}{n}, \ k = 1, \dots, n, \ n = 1, \dots, N.
		\]
		for all $N \in \mathbb{N}$. The same result holds for $G$ and $\{\hat{G}_n(Y_n)\}_{n \in \mathbb{N}}$.
		\item[(ii)] The randomized ranks $(R_n)_{n \in \mathbb{N}}$, $(S_n)_{n \in \mathbb{N}}$ are independent, but not necessarily mutually independent, sequences uniformly distributed on $[0,1]$.
		\item[(iii)] If $X$ and $Y$ are independent, then $(R_n, S_n)_{n\in\mathbb{N}}$ are independent and uniform on $[0,1]^2$.
	\end{itemize}
\end{proposition}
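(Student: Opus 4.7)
The plan is to prove the three parts in order, relying on a classical Lehmer-code argument for (i), a reduction to the continuous case for (ii), and the independence hypothesis for (iii).

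For (i), I would appeal to the factorial number system. Continuity of $F$ gives that $X_1,\ldots,X_N$ are almost surely distinct, and exchangeability implies that the permutation $\pi\in S_N$ which sorts them into increasing order is uniform. The tuple $(1\cdot\hat F_1(X_1),\ldots,N\cdot\hat F_N(X_N))\in\prod_{n=1}^N\{1,\ldots,n\}$ records, for each $n$, the position of $X_n$ among $X_1,\ldots,X_n$, and the map $\pi\mapsto(1\cdot\hat F_1(X_1),\ldots,N\cdot\hat F_N(X_N))$ is a bijection onto the product set. Uniformity of $\pi$ therefore pushes forward to uniformity on the product, yielding the independence of the $\hat F_n(X_n)$'s together with the stated marginals; the argument for $G$ is identical.

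For (ii), I would reduce to the continuous case by augmenting the observations to $\tilde X_i=(X_i,U_i)$ with the lexicographic order on $\mathbb{R}\times(0,1)$. Continuity of $U_i$ makes the $\tilde X_i$'s almost surely lex-distinct, so the argument from (i) applies to them and the lex sequential ranks $\tilde K_n$ are mutually independent with $\tilde K_n$ uniform on $\{1,\ldots,n\}$. Introducing fresh independent randomizers $W_n\sim U(0,1)$ on an extended probability space and setting $\tilde R_n=(\tilde K_n-1+W_n)/n$ produces an i.i.d.\ $U(0,1)$ sequence. The crux is to show $(R_n)\stackrel{d}{=}(\tilde R_n)$, which I would do by conditioning on $(X_1,\ldots,X_N)$. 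Conditionally, the $R_n$'s are mutually independent (each depends only on its own $U_n$) and $R_n\sim U(I_n)$ with $I_n=[\hat F_n(X_n-),\hat F_n(X_n)]$. For $\tilde R_n$, writing $\tilde K_n=1+T_n+J_n$ with $T_n=|\{i<n:X_i<X_n\}|$ (determined by $X$) and $J_n=|\{i<n:X_i=X_n,U_i<U_n\}|$, one observes that within each tie group $\mathcal G_\ell=\{i:X_i=v_\ell\}$ the shifted quantities $J_n+1$ are precisely the sequential ranks of the i.i.d.\ uniforms $\{U_i:i\in\mathcal G_\ell\}$; applying (i) to these uniforms and combining with the independent $W_n$'s yields $\tilde R_n\mid(X_1,\ldots,X_N)\sim U(I_n)$ with conditional independence across $n$, so the conditional laws of $(R_n)$ and $(\tilde R_n)$ coincide and therefore so do their unconditional laws.

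Part (iii) is a direct consequence of (ii): since $X\perp Y$ and the $U$'s, $V$'s are independent of everything else, the sequences $(X_k,U_k)_k$ and $(Y_k,V_k)_k$ are independent, and hence so are $(R_n)_n$ and $(S_n)_n$ as measurable functions of disjoint sources of randomness; combining with the within-sequence independence from (ii) gives that the pairs $(R_n,S_n)$ are i.i.d.\ with law $U(0,1)\otimes U(0,1)$. The main obstacle is the conditional matching step in (ii)---showing that within a tie group of size $m$ the joint laws $(k\,U_{i^\ell_k})_{k=1}^m$ (arising from $R_n$) and $(J_{i^\ell_k}+W_{i^\ell_k})_{k=1}^m$ (arising from $\tilde R_n$) coincide. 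Both produce mutually independent marginals uniform on $(0,k)$, which makes the equality plausible, but pinning it down requires invoking (i) a second time on the within-group uniforms plus the elementary fact that a $U\{0,\ldots,k-1\}$ variable plus an independent $U(0,1)$ is $U(0,k)$; this bookkeeping is where the real work in (ii) lies.
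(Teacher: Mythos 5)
Your proposal is correct and takes essentially the same approach as the paper: part (i) is the classical sequential-rank result (the paper cites Barndorff-Nielsen, 1963, which you reprove via the uniform-permutation/Lehmer-code bijection), and for (ii)--(iii) the paper likewise reduces the discontinuous case to an almost-surely totally ordered augmentation and matches the conditional law of $(R_n)$ given $X_1,\dots,X_N$ through the within-tie-group sequential ranks of i.i.d.\ uniforms. The only cosmetic difference is the role of the auxiliary variables: the paper breaks ties lexicographically with fresh continuous $\delta_i$ and smooths with the original $U_n$, whereas you break ties with the $U_i$ themselves and smooth with fresh $W_n$ --- the key lemma and the conditioning argument are identical.
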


By part (iii) of the above proposition, the transformation to randomized sequential ranks reduces the composite hypothesis of independence to the simpler problem of testing whether $(R_n, S_n)_{n \in \mathbb{N}}$ are uniformly distributed on $[0,1]^2$. Admissible test martingales for this hypothesis, in the sense of \citet{Ramdas2020}, are likelihood ratio processes
\begin{equation} \label{eq:likelihood_ratio_process}
	M_N = \prod_{n=1}^N f_n(R_n, S_n), \ N \in \mathbb{N},
\end{equation}
where $(f_n)_{n \in \mathbb{N}}$ are densities on $[0,1]^2$, and the likelihood ratio is relative to the uniform density, which is constant $1$. The functions $f_n$ may, and in practice do, depend on $\Rmat_{n-1}$, $\Smat_{n-1}$; that is, $f_n(r,s) = g_n(r, s, \Rmat_{n-1}, \Smat_{n-1})$ for a measurable function $g_n$ so that $(r, s) \mapsto g_n(r, s, \Rmat_{n-1}, \Smat_{n-1})$ is a density for all values of $\Rmat_{n-1}, \Smat_{n-1}$. We indicate the dependence of $f_n$ on $\Rmat_{n-1}, \Smat_{n-1}$ only with the subscript $n$.

A process $(M_n)_{n \in \mathbb{N}}$ of the form \eqref{eq:likelihood_ratio_process} is a test martingale and hence yields a valid sequential test for any predictable choice of $(f_n)_{n\in\mathbb{N}}$. The challenge is to find a method for constructing $(f_n)_{n\in\mathbb{N}}$ such that the corresponding test has power against a large class of alternatives. Our strategy relies on a simple but effective binning approach. For an integer $d > 0$, we partition $[0,1]^2$ into $d^2$ grid cells $\Bkld = (k/d, (k+1)/d] \times (\ell/d, (\ell+1)/d]$ and $B^d_{k0} = (k/d, (k+1)/d] \times [0, 1/d]$, $B^d_{0\ell} = [0,1/d] \times (\ell, (\ell+1)/d]$ for $k,\ell = 1, \dots, d-1$; the convention that the cells are left open and right closed is not essential since values on the boundaries occur with probability zero due to the randomization in $R_n, S_n$. We then construct $f_n$ with a bivariate histogram estimator; for each cell $\Bkld$, we count the frequency of observations in $\Bkld$ up to time $n-1$, and divide it by the cell size $d^{-2}$. Following \citet{Hall1988} and \citet{Yu1992}, we include an initial count of $1$ in each cell to avoid density estimates of exactly zero, so that the estimator becomes
\begin{align} \label{eq:density_estimator}
	f_n^{(d)}(r,s) = d^2\prod_{k,\ell=0}^{d-1} \left[\frac{1+\sum_{j=1}^{n-1} \ONE\big\{(R_j, S_j) \in \Bkld\big\}}{d^2+n-1}\right]^{\ONE\{(r,s) \in \Bkld\}}.
\end{align}
With this choice of $(f_n^{(d)})_{n\in\mathbb{N}}$, the test martingale $\mathcal{M}^{d} = (M_n^{(d)})_{n\in\mathbb{N}}$ equals
\begin{align} \label{eq:test_martingale_d}
	\MN^{(d)} 
    = \prod_{n=1}^{N} \prod_{k,\ell=0}^{d-1} d^2 \Big( \frac{\bklnone + 1}{n-1+d^{2}} \Big)^{\ONE\{(\Rn,\Sn) \in \Bkld\} }
	= \frac{d^{2N}\fac{d^{2}-1}}{\fac{N-1+d^{2}}} \prod_{k,\ell=0}^{d-1} \fac{\bklN},
\end{align}
where $\bkln = \sum_{j=1}^n \ONE\{(R_j, S_j) \in \Bkld\}$ is the number of observations in $\Bkld$ up to time $n$. We do not indicate the dimension $d$ in $\bkln$ since it is always clear from the context.

It is possible to weaken the iid assumption for our test if the densities $(f_n)_{n\in\mathbb{N}}$ have uniform marginals, as shown in the following proposition.
\begin{proposition} \label{prop:non_iid}
Assume that $(f_n)_{n\in\mathbb{N}}$ in \eqref{eq:likelihood_ratio_process} have uniform marginals. Then $(M_n)_{n\in\mathbb{N}}$ is a test martingale under $H_0$ if at least one of the sequences $(X_n)_{n\in\mathbb{N}}$ and $(Y_n)_{n\in\mathbb{N}}$ is iid.
\end{proposition}
In Section \ref{sec:implementation}, we present a method to modify the densities $(f_n^{(d)})_{n\in\mathbb{N}}$ to achieve uniform marginals, which also turns out to have positive effects on the power of the test.

\begin{remark}[Filtration] \label{remark:filtration}
	The process $\mathcal{M}^{d}$ is a $(\mathcal{G}_n)_{n\in\mathbb{N}}$-martingale, where $\mathcal{G}_n = \sigma(R_i, S_i, i = 1, \dots, n)$, $n \in \mathbb{N}$, but not an $(\mathcal{F}_n)_{n\in\mathbb{N}}$-martingale. If $(X,Y)$ has continuous marginal distributions, then $R_n = \hat{F}_n(X_n) - U_n / n$ and so $U_n = R_n - \lfloor nR_n \rfloor /n$; analogously for $S_n$. Thus,
	\begin{equation} \label{eq:filtration}
		\mathcal{G}_n = \sigma\{\hat{F}_i(X_i
		), \hat{G}_i(Y_i), U_i, V_i,  i = 1, \dots, n\},
	\end{equation}
	and no information is lost when passing from the sequential ranks $\hat{F}_n(X_n), \hat{G}_n(Y_n)$ to the randomized ranks $(R_n, S_n)$. If the marginal distributions of $(X,Y)$ are not continuous, then \eqref{eq:filtration} does not hold. This difference is important if one wants to deramdomize the test martingales, c.f.~Section \ref{sec:implementation} and Section \ref{sec:derandomization}. Notice that, related to the discussion in Section \ref{sec:literature_review}, our test martingales $\mathcal{M}^{d}$ are safe --- i.e., satisfy \eqref{eq:stopped_martingale} --- only with respect to stopping rules under $(\mathcal{G}_n)_{n\in\mathbb{N}}$, but not under the original filtration $(\mathcal{F}_n)_{n\in\mathbb{N}}$. Such a restriction also appears in the group-invariant anytime-valid tests by \citet{Perez2022}, which are only safe with respect to the filtration generated by a maximal invariant.
\end{remark}

\begin{remark}[Connection to multinomial testing and prediction]
	The density $f_n^{(d)}$ equals the likelihood ratio between the observed bin frequencies, including one initial count per bin, and the uniform multinomial distribution with probabilities $d^{-2}$. So for fixed $d$, our test martingale $\mathcal{M}^d$ gives a test of the uniform multinomial distribution. Including an initial count of $1$ in each bin is equivalent to a uniform prior over the bin probabilities $\P\{(R_n, S_n) \in \Bkld\}$; that is,
	\begin{equation} \label{eq:prior}
		M_N^{(d)} = d^{2N} \int_{\Theta}\prod_{k,\ell=0}^{d-1} p_{k,\ell}^{\bkln} \, 
		\text{d}\pi\{(\pkl)_{k,\ell=0}^{d-1}\},
	\end{equation}
	where the integration is with respect to the uniform distribution $\pi$ over the unit simplex $\Theta$. Hence our method belongs to the general class of Bayes factor approaches for constructing test martingales \citep[Section 3.2.3]{Ramdas2022game}. An established measure for the power of a test martingale $(M_n)_{n\in\mathbb{N}}$ is the growth rate $\mathbb{E}_P\{\log(M_n)\}$ under an alternative hypothesis $P$ \citep{Grunwald2019, Shafer2021}. As discussed in \citet[Example 5]{Grunwald2019}, an asymptotically optimal choice for the prior in \eqref{eq:prior} in terms of growth rate would be Jeffreys' prior, i.e., the Dirichlet$(1/2,\dots,1/2)$ prior, and not the uniform distribution; see also \citet{Clarke1994, Xie2000}. We choose the uniform prior because it greatly simplifies the algebra in the derivation of bounds on type II error. If the bin probabilities $\P\{(R_n, S_n) \in \Bkld\}$ are bounded away from zero and one, which is typically the case if the dependence of $X$ and $Y$ is not very strong, then the loss in growth rate compared to Jeffreys' prior is only of constant order \citep[Chapter 8]{Gruenwald2007}. However, we emphasize that exactly the same calculations in the proofs would work with $1/2$, replacing factorials with the gamma function.
\end{remark}

\subsection{Power for fixed discretization} \label{sec:power}
We proceed to analyze the power of our test martingale under violations of the null hypothesis. Under independence, the randomized sequential ranks $(R_n, S_n)_{n\in\mathbb{N}}$ are a sequence of independent uniform random variables on $[0,1]^2$. The main challenge in deriving results on power is that when $X$ and $Y$ are dependent, the distribution of $(R_n, S_n)$ changes with $n$. As $n$ increases, convergence of the empirical CDFs $\hat{F}_n, \hat{G}_n$ implies that the distribution of $(R_n, S_n)$ becomes similar to that of the generalized probability integral transform
\begin{equation} \label{eq:honest_ranks}
	(\tilde{R}_n, \tilde{S}_n) = \{U_nF(X_n) + (1-U_n)F(X_n-), V_nG(Y_n)+(1-V_n)G(Y_n-)\}
\end{equation}
with the true marginals $F, G$ instead of the empirical CDFs $\hat{F}_n, \hat{G}_n$. Denote the distribution of $(\tilde{R}_n, \tilde{S}_n)$ by $Q$ in the following. Notice that the distribution of $(\tilde{R}_n, \tilde{S}_n)$ is indeed independent of $n$. Let $\Qd$ be the approximation of $Q$ on the $(d\times d)$-grid with Lebesgue density
\[
q^{(d)}(r,s) = d^2\prod_{k,\ell=0}^{d-1}\left(\qkld\right)^{\ONE\{(r,s) \in \Bkld\}}, \ \ (r,s) \in [0,1]^2,
\]
where $\qkld = \Q(\Bkld)$, $k, \ell = 0, \dots, d-1$. This density can be regarded as the population counterpart of the estimator $f_n^{(d)}$ in equation \eqref{eq:density_estimator}. All objects $Q$, $\Qd$ and $\qkld$ depend on the distribution $P$ of $(X,Y)$, which is, however, not indicated explicitly to lighten the notation.

In Theorem \ref{theorem:stopping} below, we show that the power of $\mathcal{M}^d$ depends on the Kullback-Leibler divergence $\kl(\Qd\|\mathcal{U})$. As a first step, we prove some useful results about $Q$ and $\Qd$.

\begin{proposition} \label{prop:population_properties}
	\begin{enumerate}
		\item[(i)] The distribution $Q$ is uniform if and only if $X$ and $Y$ are independent.
		
		\item[(ii)] If $\done$, $\dtwo$ are positive integers such that $\dtwo$ is a multiple of $\done$, then 
		\begin{align*}
			\kl(Q^{(d_1)}\|\mathcal{U}) \leq \kl(Q^{(d_2)}\|\mathcal{U}),
		\end{align*}
		with equality if and only if $Q^{(d_1)} = Q^{(d_d)}$.
		\item[(iii)] If $Q$ admits a density with respect to the Lebesgue measure on $[0,1]^2$, then
		\[
		\kl(\Qd\|\mathcal{U}) \rightarrow \kl(\Q\|\mathcal{U}), \ d \rightarrow \infty.
		\]
		\item[(iv)] If $Q$ is not absolutely continuous with respect to the Lebesgue measure on $[0,1]^2$, then
		\[
		\kl(\Qd\|\mathcal{U}) \rightarrow \infty, \ d \rightarrow \infty.
		\]
		
	\end{enumerate}
\end{proposition}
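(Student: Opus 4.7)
For (i), I would invoke the generalized probability integral transform: $\tilde R_n$ and $\tilde S_n$ are each marginally uniform on $[0,1]$ regardless of $F, G$, and the construction is almost surely invertible in the sense that $X_n = F^{-1}(\tilde R_n)$ and $Y_n = G^{-1}(\tilde S_n)$. Since $(U_n, V_n)$ is independent of $(X_n, Y_n)$, it follows that $\tilde R_n$ and $\tilde S_n$ are independent if and only if $X_n$ and $Y_n$ are, and the former is equivalent to $Q = \mathcal{U}$.

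For (ii), the $d_1$-grid is a coarsening of the $d_2$-grid since $d_1 \mid d_2$, and the associated cell-aggregation map pushes $Q^{(d_2)}$ forward to $Q^{(d_1)}$ and the uniform distribution to itself. The data processing inequality for KL divergence yields the monotonicity directly, with equality requiring, by strict convexity of $x \log x$, that the density $q^{(d_2)}$ be constant on every $d_1$-cell, i.e.\ $Q^{(d_1)} = Q^{(d_2)}$ as measures on $[0,1]^2$.

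For (iii), the upper bound $\kl(Q^{(d)} \| \mathcal{U}) \leq \kl(Q \| \mathcal{U})$ is another application of the data processing inequality, now to the map from $[0,1]^2$ to its cell index. For the matching lower bound, $q^{(d)}$ on $B^d_{k\ell}$ equals the Lebesgue average $\lambda(B^d_{k\ell})^{-1} \int_{B^d_{k\ell}} q\, d\lambda$, so the Lebesgue differentiation theorem gives $q^{(d)} \to q$ Lebesgue-a.e.\ on $[0,1]^2$. Since $\phi(x) = x \log x$ is continuous and bounded below by $-1/e$, Fatou's lemma applied to the nonnegative sequence $\phi(q^{(d)}) + 1/e$ yields $\liminf_d \kl(Q^{(d)} \| \mathcal{U}) \geq \int \phi(q)\, d\lambda = \kl(Q \| \mathcal{U})$, which handles both the finite and infinite cases.

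Part (iv) I expect to be the main obstacle, as it requires a covering argument to ensure that cells carrying singular mass occupy vanishingly small Lebesgue measure at large $d$. Decomposing $Q = Q_{ac} + Q_s$ with $Q_s \neq 0$, inner regularity of the finite Borel measure $Q_s$ provides a compact set $K \subseteq [0,1]^2$ with $\lambda(K) = 0$ and $Q_s(K) = \eta > 0$. For any $\epsilon \in (0, \eta)$, monotone convergence applied to the open $\delta$-neighborhoods $K_\delta \downarrow K$ yields some $\delta > 0$ with $\lambda(K_\delta) < \epsilon$; once $d > \sqrt{2}/\delta$, the union $S_d$ of all $d$-cells meeting $K$ is contained in $K_\delta$, giving $\lambda(S_d) < \epsilon$ while $Q(S_d) \geq \eta$. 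Applying the data processing inequality to the two-element partition $\{S_d, S_d^c\}$ then lower bounds $\kl(Q^{(d)} \| \mathcal{U})$ by $\kl(\Ber(Q(S_d)) \| \Ber(\lambda(S_d)))$, which diverges as $\epsilon \to 0$ since the first argument stays above $\eta$; choosing $\epsilon$ below any prescribed threshold and then $d$ large enough to force $\lambda(S_d) < \epsilon$ establishes $\kl(Q^{(d)} \| \mathcal{U}) \to \infty$.
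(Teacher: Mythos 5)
Your proposal is correct, and parts (ii)--(iii) follow essentially the paper's route (data processing inequality for monotonicity and the upper bound; Lebesgue differentiation plus Fatou with the bound $x\log x \geq -e^{-1}$ for the lower bound), with the small bonus that you spell out the equality case in (ii), which the paper leaves to the cited reference. Parts (i) and (iv) take genuinely different paths. For (i), the paper fixes a witness point $(x_0,y_0)$ with $P(X\leq x_0, Y\leq y_0)\neq F(x_0)G(y_0)$ and transfers that single event to the ranks via right-continuity of $F,G$; you instead argue that the randomized probability integral transform is a.s.\ invertible, $X=F^{-1}(\tilde R)$ and $Y=G^{-1}(\tilde S)$, so independence passes in both directions at once. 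This is cleaner and gives the equivalence in one stroke, but the a.s.\ identity $X=F^{-1}(\tilde R)$ deserves a line of justification (on atoms of $F$ it is exact since $\tilde R\in(F(X-),F(X))$; at continuity points failure requires $X$ to sit at the right end of a flat stretch of $F$, a set of $P$-probability zero). For (iv), the paper covers the singular carrier by countably many rectangles, extracts a finite subcollection capturing most of the $Q$-mass, and explicitly counts the grid squares needed to cover it; you use inner regularity to get a compact null carrier $K$, shrink open $\delta$-neighborhoods to control $\lambda(K_\delta)$, and note that for $d>\sqrt{2}/\delta$ every cell meeting $K$ lies in $K_\delta$, then conclude via the two-set partition $\{S_d,S_d^{\comp}\}$ exactly as the paper does with $\{A'',(A'')^{\comp}\}$. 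Your version avoids the bookkeeping with rectangle side lengths and yields the same divergence $\kl(Q^{(d)}\|\mathcal{U})\geq \eta\log(\eta/\epsilon)-e^{-1}$ with the correct quantifier order (fix the threshold, pick $\epsilon$, then $\delta$, then all large $d$); the paper's version is more explicit but proves nothing extra. Both of your deviations are valid and, if anything, slightly tidier.
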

Proposition \ref{prop:population_properties} guarantees that $\kl(\Qd\|\mathcal{U}) > 0$ for $d$ large enough if $X$ and $Y$ are dependent. This result relies on established properties of the Kullback-Leibler divergence, and for part (iv), where $\kl(\Q\|\Ucal)$ is not defined, on the fact that $\Q$ has continuous marginal distributions.

\begin{remark}[Connection to mutual information]
	If $P$ is Lebesgue continuous with joint and marginal densities $p$ and $f,g$, respectively, then $Q$ is the distribution of $\{F(X),G(Y)\}$, and
	\[
	\kl(\Q\|\mathcal{U}) = \int_{\mathbb{R}^2} p(x,y)\log\left\{\frac{p(x,y)}{f(x)g(y)}\right\} \, d(x,y);
	\]
	see, e.g., \citet[Section 3]{Blumentritt2012}. Hence $\kl(\Qd\|\mathcal{U})$ is an approximation of the mutual information measure of dependence. This also holds for discrete $(X,Y)$, where the densities $p, f, g$ are replaced by probability mass functions and the integral by a sum over the support of $(X,Y)$. We are not aware of a simple characterization of $\kl(\Q\|\Ucal)$ in the case of mixed discrete-continuous distributions, but it directly follows from Proposition \ref{prop:population_properties} (i) that $X$ and $Y$ are dependent if and only if $\kl(\Q\|\Ucal) > 0$.
\end{remark}

Independence of $X$ and $Y$ can be rejected at the level $\alpha \in (0,1)$ as soon as the test martingale $\mathcal{M}^{d}$ exceeds $1/\alpha$. We define the corresponding stopping time
\[
\tau = \tau(\alpha, \mathcal{M}^{d}) = \inf(n \in \mathbb{N}\colon M_n^{(d)} \geq 1/\alpha),
\]
so that a rejection of independence before $N$ is equivalent to the event $\{\tau \leq N\}$. We first state our main technical result about the probability of this event, and then its implications for testing.

\begin{theorem}\label{theorem:stopping}
	If $N \geq 3$, $d \geq 2$, and $40d\log(N)^{1/2} \leq N^{1/2} e^{-1}$, then
	\begin{align*}
		\pr( \tau \leq N ) 
		& \geq \ONE\Big[N \kl(\Qd || \Ucal) - 47 d^{3} \{N\log(N)\}^{1/2} \geq \log(1/\alpha) \Big] 
		\times \\
		& \qquad\quad \Big[1 - 4\exp\Big\{ -127 d^{2} \log(N) \Big\}\Big].
	\end{align*}
	If, in addition, $\kl(\Q || \Ucal) < 1/d^{4}$, then
	\begin{align*}
		& \pr( \tau \leq N ) 
		\geq \mathbbm{1}\Big( \Big[ \{N \kl(\Qd || \Ucal)\}^{1/2} - \frac{40 d^{5} \log(N)^{1/2}}{1 - 2^{-1/2}} \Big] \{N \kl(\Qd || \Ucal)\}^{1/2} \geq \log(1/\alpha) \\
		& \qquad\qquad\qquad\qquad \ \ +3(d^2-1)\log(N)/2 + 5 d^{2}\Big)
		\times \Big[1 - 4\exp\Big\{ -127 d^{2} \log(N) \Big\}\Big].
	\end{align*}
\end{theorem}

As an immediate consequence of Theorem \ref{theorem:stopping}, we get the following corollary.

\begin{corollary}
	If $\kl(\Qd || \Ucal) > 0$, then $\mathbb{E}_P\{\tau(\alpha, \mathcal{M}^{d})\} < \infty$ for all $\alpha \in (0,1)$.
\end{corollary}

\begin{remark}
	Viewing our test in the classical fixed sample contiguity framework where we only collect $N$ samples, Theorem \ref{theorem:stopping} (ii) implies that the sum of type I and type II errors for testing
	\begin{align*}
		H_{0}:  \kl(\Qd || \Ucal) = 0, \quad
		H_{1}:  \kl(\Qd || \Ucal) = \frac{h_{d} \log(N)^2}{N}
	\end{align*}
	is strictly less than one if $h_{d} > 0$ is sufficiently large, depending only on $d$ but not on $N$.  The dependence on $N$ is optimal for testing multinomial distributions up to logarithmic factors. To see this, consider the binomial testing problem where we have $N$ iid Bernoulli variables with success probability $p$. Pinsker's inequality implies that
	\begin{align*}
		\Big(p - \frac{1}{2} \Big)^{2} \leq \frac{1}{2} \mathrm{KL}\Big\{\Ber(p) || \Ber(1/2)\Big\} \leq 4\Big(p - \frac{1}{2} \Big)^{2}
	\end{align*}
	if $1/4 \leq p \leq 3/4$.  Then, the detection boundary for this parametric problem is $H_0\colon p = 1/2$ versus $H_1\colon p = 1/2 + h/(N)^{1/2}$, implying that the detection boundary in terms of the Kullback-Leibler divergence is $H_0 \colon \mathrm{KL}\{\Ber(p) || \Ber(1/2)\} = 0$ versus $H_1\colon \mathrm{KL}\{\Ber(p) || \Ber(1/2)\} = h'/N$, for some value $h' > 0$ independent of $N$.  
	From our proofs, we note that $h_{d}$ can be bounded from above by $hd^{10}$ for some $h > 0$, though we believe this is an artifact of our proof technique and a more refined analysis can lower the dependence on $d$.
\end{remark}

In the literature on sequential tests for independence, the result most similar to our Theorem \ref{theorem:stopping} is by \citet[Theorem 2]{Shekhar2021}, which also yields uniform detection bounds. However, our uniform power guarantees are more limited in the sense that they require distributions whose deviation from independence can be detected on a the $(d\times d)$-discretization of $[0,1]^2$. Such a restriction is inevitable with a binning approach and also appears in Theorem 4.4 of \citet{Zhang2019}. \citet{Podkopaev2023} and \citet{Podkopaev2022} show that their tests have power $1$, but they do not derive uniform bounds on the type II error or prove finiteness of the stopping time for rejecting at level $\alpha$.

\subsection{Combining different discretizations} \label{sec:combination}
We have shown that for a fixed $d$, tests based on $\mathcal{M}^{d}$ have uniform power and finite expected stopping times if $\kl(\Qd\|\Ucal) > 0$, and now turn to the question of how to choose $d$. It follows from the proof of Theorem \ref{theorem:stopping} that $    M_N^{(d)} \geq \exp\{N\cdot \kl(\Qd\|\Ucal) + o_p(N)\}$,
so larger values of $d$ imply that $M_N^{(d)}$ eventually grows at a faster exponential rate. However, for high values of $d$, the estimators $f_n^{(d)}$ and the resulting test martingales $\mathcal{M}^{d}$ may perform poorly at small sample sizes.
To balance this trade-off, we propose to aggregate over different values of $d$. Such a mixture approach is standard in the construction of test martingales and confidence sequences.

There are two ways to aggregate over different discretization depths $d$. One strategy is to combine the density estimators $f_n^{(d)}$, and the other to combine the test martingales $\mathcal{M}^{d}$. For suitable weights, both strategies yield sequential tests with finite expected stopping time.

\begin{proposition}\label{proposition:aggregation}
	Assume that $P$ does not satisfy $H_0$. Let $(w_{d})_{d\in\mathbb{N}_0}$ be non-negative weights with $\sum_{d=1}^{\infty}w_d = 1$, such that $w_{d^*} > 0$ and $\kl(Q^{(\dstar)} || \Ucal) > 0$ for some $d^* \geq 1$. Define
	\begin{align*}
		& \mathcal{M}_0 = (M_{n,0})_{n\in\mathbb{N}}, \quad M_{N,0} = \prod_{n=1}^N \left\{w_0 + (1-w_0)\sum_{d=1}^{\infty} w_d f_{n}^{(d)}(R_n,S_n)\right\}, \ N \in \mathbb{N}, \\
		& \mathcal{M}_1 = (M_{n,1})_{n\in\mathbb{N}}, \quad M_{N,1} = \sum_{d=1}^{\infty}w_d M_n^{(d)}, \ N \in \mathbb{N}.
	\end{align*}
	Then $\mathbb{E}_P\{\tau(\alpha, \mathcal{M}_1)\} < \infty$ for all $\alpha \in (0,1)$. If $\sum_{d=1}^{\infty}w_dd^2 < \infty$, then $\mathbb{E}_P\{\tau(\alpha, \mathcal{M}_0)\} < \infty$.
\end{proposition}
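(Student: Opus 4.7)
My plan treats the two martingales separately. For $\mathcal{M}_1$, the conclusion follows in one line from the corollary of Theorem~\ref{theorem:stopping}. For $\mathcal{M}_0$, which is a product of mixtures rather than a mixture of products, I would apply Jensen's inequality to pass to a weighted combination of the individual log-martingales and then use the hypothesis $\sum_d w_d d^2 < \infty$ to absorb the contribution of all scales $d \neq d^*$.

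For $\mathcal{M}_1$: non-negativity of each $M_N^{(d)}$ gives $M_{N,1} \geq w_{d^*} M_N^{(d^*)}$ for every $N$, hence $\tau(\alpha, \mathcal{M}_1) \leq \tau(w_{d^*}\alpha, \mathcal{M}^{(d^*)})$. Since $\kl(Q^{(d^*)}\|\mathcal{U}) > 0$, the indicator in Theorem~\ref{theorem:stopping} equals $1$ for every $N$ past some $N_0$, leaving the summable tail $\pr(\tau(w_{d^*}\alpha, \mathcal{M}^{(d^*)}) > N) \leq 4 N^{-128 (d^*)^2}$, which integrates to a finite expected stopping time.

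For $\mathcal{M}_0$: each factor $g_n := w_0 + (1-w_0)\sum_{d \geq 1} w_d f_n^{(d)}(R_n, S_n)$ is a convex combination of positive densities, with the uniform density~$1$ carrying weight $w_0$. Concavity of $\log$, followed by summing over $n$, gives
\begin{align*}
    \log M_{N,0} \;\geq\; (1-w_0) \sum_{d=1}^\infty w_d \log M_N^{(d)} \;=\; (1-w_0) w_{d^*}\log M_N^{(d^*)} + (1-w_0)\!\! \sum_{d \neq d^*,\, d \geq 1}\!\! w_d \log M_N^{(d)}.
\end{align*}
The proof of Theorem~\ref{theorem:stopping} supplies, on an event of probability at least $1 - 4 N^{-128 (d^*)^2}$, the lower bound $\log M_N^{(d^*)} \geq N \kl(Q^{(d^*)}\|\mathcal{U}) - 23 (d^*)^3 \sqrt{N}\log(d^*)\log^{3/2}(N)$, so the first term grows linearly in $N$. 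For the remaining terms I would derive from the closed form~\eqref{eq:test_martingale_d} the deterministic lower bound $\log M_N^{(d)} \geq -C d^2 \log(N)$ via Stirling, yielding
\begin{align*}
    (1-w_0) \sum_{d \neq d^*,\,d \geq 1} w_d \log M_N^{(d)} \;\geq\; -C \log(N) \sum_{d=1}^\infty w_d d^2 \;=\; O(\log N),
\end{align*}
which is $o(N)$ under the summability hypothesis and therefore absorbed by the linear drift from $d^*$. Combining the two, $\log M_{N,0} \geq \log(1/\alpha)$ for $N$ past some threshold on the same high-probability event, and the summable tail gives $\mathbb{E}_P[\tau(\alpha, \mathcal{M}_0)] < \infty$.

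The main technical obstacle is the deterministic bound $\log M_N^{(d)} \geq -C d^2 \log(N)$ with a $d^2$ prefactor. The naive bound obtained by discarding the bin-count factorials in~\eqref{eq:test_martingale_d} is linear in $N$ and would swamp the $\sum_d w_d d^2$ weighting. The key observation is that $\sum_{k,\ell}\log(\bklN!)$, constrained to $\sum_{k,\ell} \bklN = N$, is minimized at the balanced allocation $\bklN = N/d^2$ by convexity of $k\mapsto\log(k!)$, and this minimum, via Stirling, cancels the leading $N\log N$ term of $\log((N+d^2-1)!/(d^2-1)!) = \sum_{i=0}^{N-1}\log(d^2+i)$, leaving exactly a remainder of order $d^2\log(N/d^2)$. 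Tracking this cancellation without losing the $d^2$ prefactor is the technical heart of the argument.
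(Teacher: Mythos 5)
Your proposal is correct and follows essentially the same route as the paper's proof: $\mathcal{M}_1$ is handled by the domination $M_{N,1}\geq w_{d^*}M_N^{(d^*)}$ together with Theorem~\ref{theorem:stopping}, and $\mathcal{M}_0$ by Jensen's inequality, the linear-in-$N$ drift of the $d^*$ term, a deterministic floor of order $-d^2\log N$ for every other scale summed against $\sum_d w_d d^2<\infty$, and a summable tail bound. The only difference is the step you single out as the technical heart: the paper does not re-derive this floor via Stirling and convexity of $k\mapsto\log(k!)$, but obtains it immediately from Lemma~\ref{lemma:sequential:full:lrt}, which gives $\log\bigl(\hat{M}_N^{(d)}/M_N^{(d)}\bigr)\leq\bigl(d^2-\tfrac{1}{2}\bigr)\log N+2d^2$, combined with the observation that $\log\hat{M}_N^{(d)}\geq 0$ because it equals $N$ times the Kullback--Leibler divergence of the empirical bin frequencies from the uniform distribution, so your computation amounts to re-proving that lemma rather than introducing a genuinely different argument.
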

In $\mathcal{M}_0$, there is the possibility to include a constant weight $w_0$. Such a correction is often applied the construction of test martingales to ensure that the mulitplicative increments are bounded away from zero; for example, \citet{Shekhar2021} use $w_0 = 0.5$ for their tests.

Interestingly, the two seemingly different combination approaches in Proposition \ref{proposition:aggregation} arise as special cases of established methods for the aggregation of density estimators, known under the names Gibb's estimator or mirror averaging, c.f.~\citet[Chapters 3 and 4]{Catoni2004} and \citet{Juditsky2008}. Reframed to our problem, these methods correspond to a test martingale
\begin{equation} \label{eq:mirror_averaging}
	M_{N,\eta} = \prod_{n=1}^N\left(\sum_{d=1}^{\infty}w_{d,n}f_n^{(d)}\right), \quad w_{d,n} = \frac{w_d(M_{n-1}^{(d)})^{\eta}}{\sum_{k=1}^{\infty}w_k(M_{n-1}^{(k)})^{\eta}},
\end{equation}
where $(w_d)_{d\in\mathbb{N}}$ are initial weights and $\eta \geq 0$ is a tuning parameter, called inverse temperature \citep{Catoni2004} or learning rate \citep{Turner2023}. For $\eta = 1$ one can easily show that \eqref{eq:mirror_averaging} is equivalent to the martingale aggregation, and for $\eta = w_0 = 0$ one obtains the density aggregation. Henceforth we refer to these methods via their value of $\eta$, i.e., $\mathcal{M}_{\eta}$ with $\eta = 0, 1$.

Due to Proposition \ref{prop:population_properties} (iii), the condition that $w_{d^*} > 0$ and $\kl(Q^{(\dstar)} || \Ucal) > 0$ for some $d^*$ is satisfied if for all $k \in \mathbb{N}$ there exists a $d \geq k$ with $w_d > 0$. In this case the computation of $\mathcal{M}_1$ and $\mathcal{M}_0$ is still feasible if one defines $f_n^{(d)} \equiv 1$ for all $n \leq n_d$, where $(n_d)_{d\in\mathbb{N}}$ is a strictly increasing sequence of integers. This does not affect the power guarantees of the test, and we have
\[
\sum_{d=1}^{\infty} w_d f_{n}^{(d)}(R_n,S_n) = \Big(1 - \sum_{d\colon n_d < n}w_d\Big) + \sum_{d\colon n_d < n}^{\infty} w_d f_{n}^{(d)}(R_n,S_n);
\]
the same decomposition can be done for $\sum_{d=1}^{\infty}w_d M_n^{(d)}$.

\subsection{Practical aspects and implementation} \label{sec:implementation}
In this section, we discuss finite sample corrections, implementation details, and practical aspects of our tests. Remarks on the efficient computation of sequential ranks are given in Section \ref{sec:efficient_implementation}.

In practice, one often wants to avoid that results that depend on external randomization, i.e., on $(U_n, V_n)_{n\in\mathbb{N}}$ in our case. If $(X,Y)$ has continuous marginals, a computationally efficient method is to derandomize $f_n^{(d)}$ at each time step. That is, with $\mathcal{H}_n = \sigma\{\hat{F}_i(X_i), \hat{G}_i(Y_i), i = 1, \dots, n\}$, we define the martingale $\bar{M}_N^{(d)} = \prod_{n=1}^N \mathbb{E}\{f_n^{(d)}(R_n, S_n)|\mathcal{H}_n\}$, where
\begin{align}
	& \mathbb{E}\{f_n^{(d)}(R_n, S_n) | \mathcal{H}_n\} = \sum_{k,\ell=0}^{d-1} \frac{\pr[\{\hat{F}_n(X_n) -U_n/n, \hat{G}_n(Y_n) -V_n/n\} \in \Bkld \mid \mathcal{H}_n] d^2h^{(d)}_{k\ell,n}}{n-1+d^2}, \nonumber \\
	& h^{(d)}_{k\ell,n} = 1+\sum_{m=1}^{n-1} \pr[\{\hat{F}_m(X_m) -U_m/m, \hat{G}_k(Y_m) -V_m/m\} \in \Bkld \mid \mathcal{H}_m]. \label{eq:derandomized_freq}
\end{align}
In the above equations, probabilities are only computed over $\Umat_n, \Vmat_n$, and $h_{k\ell,n}^{(d)}$ is the expected number of counts in $\Bkld$. The process $(\bar{M}_n)_{n \in \mathbb{N}}$ is still a martingale, since
\begin{align*}
	\mathbb{E}(\bar{M}_n^{(d)}|\mathcal{H}_{n-1}) & = \mathbb{E}[ \mathbb{E}\{f_n^{(d)}(R_n, S_n)\mid \mathcal{H}_n\} \bar{M}_{n-1}^{(d)}|\mathcal{H}_{n-1}] \\ 
    & =  \bar{M}_{n-1}^{(d)}\mathbb{E}\{f_n^{(d)}(R_n, S_n)| \mathcal{H}_{n-1}\} \\
    & = \bar{M}_{n-1}^{(d)},
\end{align*}
where the last equality holds because $f_n^{(d)}$ is a density, $(R_n, S_n)$ are uniform on $[0,1]^2$ and independent of $\mathcal{H}_{n-1}$. Derandomization for distributions with discontinuous marginals is discussed in Section \ref{sec:derandomization}.

Next, we discuss two finite sample corrections of our method. Due to Proposition \ref{prop:sequential_ranks} (ii), the marginal distributions of the sequential ranks $(R_n, S_n)$ are uniform, but this is not taken into account by the histogram estimator $f_n^{(d)}$ defined in \eqref{eq:density_estimator}. The estimator $f_n^{(d)}$ involves $d^2-1$ parameters, the bin frequencies, which could be reduced to $(d-1)^2$ when taking into account restrictions due to uniform marginals. Since the number of parameters is still of order $d^2$, one cannot expect a substantial improvement of the asymptotic properties of the test, but the finite-sample performance can be improved with suitable corrections. Our proposal is to apply Sinkhorn's algorithm \citep{Sinkhorn1964}. For this method, one arranges the bin frequencies, including the initial count of $1$ per bin, in a $d\times d$ matrix,
\[
C_n = \left(\frac{b_{k\ell,n-1} + 1}{n-1+d^2}\right)_{k,\ell=0}^{d-1}
\]
and alternately normalizes the rows and columns of $C_n$ to sum to $1/d$. Since the entries of $C_n$ are positive, this procedure converges to a matrix $\check{C}_n = (\check{c}_{k\ell})_{k,\ell=0}^{d-1}$ with row and column sums $1/d$. One can view $C_n$ as a probability distribution for a $d\times d$ contingency table, and as shown by \citet{Ireland1968} and \citet{Csiszar1975}, Sinkhorn's algorithm gives the information projection of $C_n$ on the space of distributions for contingency tables with uniform marginals. Define the density $\check{f}_n^{(d)}(r,s) = d^2\check{c}_{k\ell,n}$ for $(r,s)\in\Bkld$, $k,\ell=0,\dots,d-1$. It then follows from Theorem 2.2 of \citet{Csiszar1975} that
\[
\mathbb{E}_{P}\left[\log\left\{\check{f}_n^{(d)}(R_n, S_n)\right\} \middle| \mathcal{G}_{n-1}\right] \geq     \mathbb{E}_{P}\left[\log\left\{f_n^{(d)}(R_n, S_n)\right\} \middle| \mathcal{G}_{n-1}\right]
\]
under any distribution $P$ on $[0,1]^2$ for which $(R_n,S_n)$ have uniform marginals. Since this includes the data generating distribution, the corrected estimator is better in expected growth rate than $f_n^{(d)}$ in the theoretical limit of infinitely many iterations. The same strategy can also be applied to correct the derandomized counts \eqref{eq:derandomized_freq}, and the above result on the growth rate continues to hold. In our implementation, we run Sinkhorn's algorithm with a maximum of $20$ iterations, and stop early if all row and column sums are in $(1.001^{-1}/d, 1.001/d)$.

As for the second correction, since $n$th randomized ranks $(R_n, S_n)$ are randomized over a region of size $(1/n)\times(1/n)$, one cannot expect to gain a lot of information or a good estimator $f_n^{(d)}$ for very small sample sizes; for continuous data, we therefore set $f_n^{(d)} \equiv 1$ for $n \leq d$, and we replace $R_1, \dots, R_d$ by $\hat{F}_d(X_1), \dots, \hat{F}_d(X_d)$ in the computation of $f_n^{(d)}$ for $n > d$; analogously for $S_n$. This is valid since any function of $\hat{F}_1(X_1), \dots, \hat{F}_n(X_n)$ can be used to construct $f_{n+1}^{(d)}$, and the usual non-sequential ranks on the first $d$ observations, $\hat{F}_d(X_1), \dots, \hat{F}_d(X_d)$, are a function of the sequential ranks if the observations follow a continuous distribution.

Finally, we discuss the choice of the weights for aggregation. Motivated by the binary expansion testing (BET) method by \citet{Zhang2019}, we use test martingales with $d = 2^1, \dots, 2^{K}$ for some $K \in \mathbb{N}$ in our applications. This choice of $d$ tests whether the binary number expansions of $R_n, S_n$, $n\in\mathbb{N}$, are independent up to the $K$th binary digit. Following \citet{Zhang2019}, we consider a maximal depth of $K = 4$, which is sufficient for many practical applications even with weak dependence. For $\mathcal{M}_1$, we choose equal weights of $1/4$, and for $\mathcal{M}_0$ we set $w_0 = 0.2$, $w_{d} = 0.25$, $d = 2, 4, 8, 16$, which yields weight $0.2$ for each discretization depth. By our theory, these methods have guaranteed power to detect any dependence of $(X,Y)$ that yields non-uniform frequencies on the discretization of $[0,1]^2$ into the $256$ regular bins of size $(1/16)\times(1/16)$.

\section{Empirical results} \label{sec:simulations}

\subsection{Overview}
We assess the finite-sample performance of our tests in simulation examples by \citet{Zhang2019}, described in Table \ref{tab:simulations} and illustrated in Figure \ref{fig:sim_illustration_main}. Additional simulations and figures for the illustration of our methods are in Section \ref{sec:moresim}. All results, such as rejection rates, are computed over $1000$ independent replications of the simulations.
Computations were performed in \textsf{R} 4.1.3 and Python 3.8.5. Replication materials are available on \url{https://github.com/AlexanderHenzi/sequential_independence}. 

\begin{figure}[t]
	\centering
	\includegraphics[width=\textwidth]{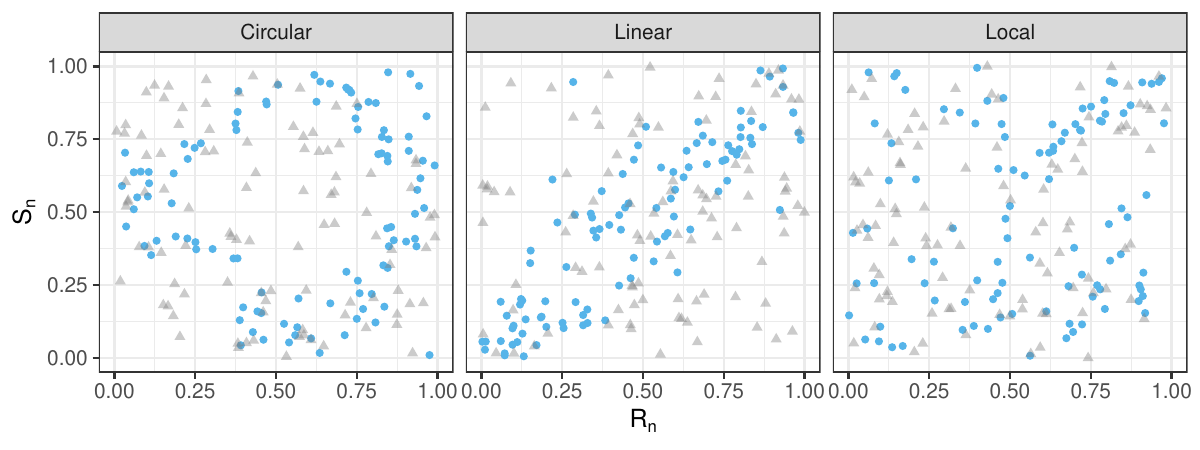}
	\caption{Randomized sequential ranks for the examples from Table \ref{tab:simulations} with $l = 1, 9$ (dots, triangles). \label{fig:sim_illustration_main}}
\end{figure}

\begin{table}[b]
	\centering
        \small
	\begin{tabular}{l l l}
		Scenario        &   Generation of $X$   &   Generation of $Y$   \\[0.25em]
		%Checkerboard    & $X = W + \epsilon$    & $Y = \indic{W = 2}(V_1+4\epsilon') + \indic{W\neq 2}(V_2+4\epsilon'')$    \\[0.25em]
		Circular        & $X = \cos(\theta)+2.5\epsilon$    & $Y = \sin(\theta) + 2.5\epsilon'$ \\
		Linear          & $X = U$               & $Y = X + 6\epsilon$   \\
		Local           & $X = G_1$             & $Y = \ONE(G_1, G_2 \in [0,1])(X+\epsilon) + [1-\ONE\{G_1, G_2 \in [0,1]\}]G_2$  \\[0.5em]
		%Parabolic       & $X = U$               & $Y = (X - 0.5)^2 + 1.5\epsilon$    \\[0.25em]
		%Sine            & $X = U$               & $Y = \sin(4\pi X) + 8\epsilon$    \\
	\end{tabular}
	\caption{Simulation examples, with the following variables: for $l = 1, 3, 5, 7, 9$, $\epsilon, \epsilon' \sim \mathcal{N}\{0,(l/40)^2\}$; $\theta \sim \mathrm{Unif}[-\pi,\pi]$; $U \sim \mathrm{Unif}[0,1]$; $G_1, G_2 \sim \mathcal{N}(0,1/4)$.    \label{tab:simulations}}
\end{table}

\subsection{Methods}
The implementation of our tests is with the corrections described in Section \ref{sec:implementation}, and since the simulations only involve continuous distributions, all test martingales are derandomized. We compare against two competing methods. We implemented the sequential independence test by \citet[Section 5.2.2]{Shekhar2021} based on the Kolmogorov-Smirnov distance, which applies a martingale of the type \eqref{eq:betting_martingale} with functions $g_n$ of the form $g_n(x,y) = \ONE(x \leq u_n, y \leq v_n)$, where $u_n,v_n$ are chosen to maximize growth rate on the past observations. We search the optimal $u_n, v_n$ on a grid of spacing $0.025$. In the limit of an infinitely fine grid, this test has power against any alternative and is, like our test, invariant under strictly increasing transformations of $X$ and $Y$. As a second test, we apply the sequential kernelized independence test by \citet{Podkopaev2023}, which also constructs a martingale of the type \eqref{eq:betting_martingale} but with $g_n$ chosen as the witness function of the Hilbert-Schmidt independence criterion. We used code from \url{https://github.com/a-podkopaev/Sequential-Kernelized-Independence-Testing} and applied the test with the ONS and aGRAPA method for the parameters $\lambda_n$ and truncations of the betting functions at level $0.5$ and $0.9$, respectively. The bandwidths for the kernel are $1$ for the first $20$ observations, and estimated with the median heuristic on the first $20$ observations for the remaining part of the data. The test by \citet{Podkopaev2022} is not invariant under strictly increasing transformations of $X$ and $Y$. However, it remains valid under violations of the iid assumption for both variables, which stronger than our Proposition \ref{prop:non_iid}, and directly extends to multivariate data, which is more difficult for our method and a topic for future research.

The methods are abbreviated as ``Simple'' for our rank test without correction for uniform marginals, ``Sinkhorn'' for the corrected version with Sinkhorn's algorithm, both with $\eta = 0$ and $\eta = 1$ for the two combination approaches from Section \ref{sec:combination}; ``SR'' for the test by \citet{Shekhar2021}; and ``PR (ONS)'', ``PR (aGRAPA)'' for the test by \citet{Podkopaev2022}.

\subsection{Rejection rates}

Figure \ref{fig:rejection_rates} compares the rejection rates of the tests at the level $\alpha = 0.05$. For varying noise levels $l$, we display $N \mapsto P(\tau_{\alpha} \leq N)$, i.e., the distribution function of the rejection times. The method that performs best across all examples is our test with Sinkhorn's correction for uniform marginals. The aggregation of densities, $\eta = 0$, is generally preferable to the aggregation of martingales, $\eta = 1$, except for the regime with very low noise, and the Simple version of the test is outperformed by the Sinkhorn variant in all examples.
The method by \citet{Shekhar2021} generally has less power than our tests and performs best in the Linear simulation example. It seems that test functions of the form $g_n(x,y) = \ONE(x \leq u_n, y \leq v_n)$ are less suitable to detect the dependence at small $N$ in the given simulation examples. There is no uniform ranking between our methods and the test by \citet{Podkopaev2022}. While our tests detect dependence for small $l$ earlier, the test by \citet{Podkopaev2022} has better power in the Circular example for $l = 9$, where power $1$ is reached at about $4500$ observations, compared to $6500$ observations for the Sinkhorn variant with $\eta = 0$. Also in further simulations in Section \ref{sec:moresim}, neither of the methods uniformly dominates the other.

\begin{figure}[t]
	\centering
	\includegraphics[width=\textwidth]{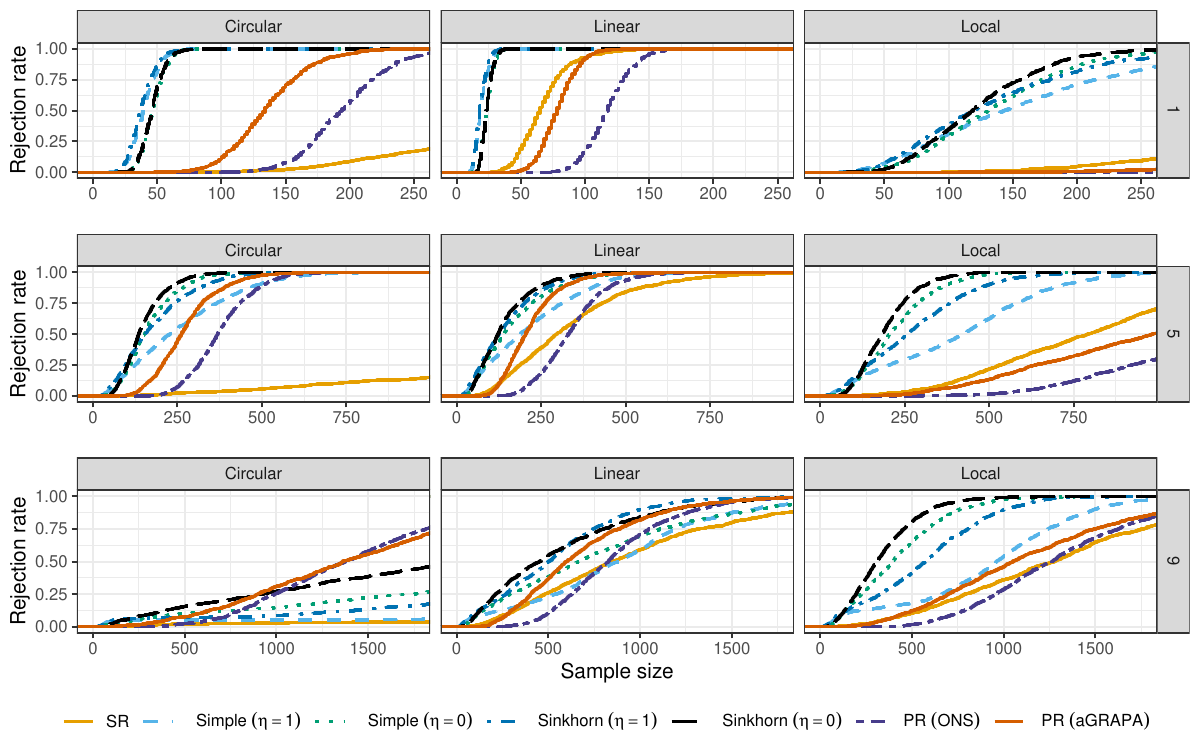}
	\caption{Rejection rates of the different methods for noise levels $l = 1, 5, 9$ (panel rows). \label{fig:rejection_rates}}
\end{figure}

\subsection{The gap in Ville's inequality, and truncated sequential tests} \label{sec:ville_gap}
Most of the recently developed tests for safe, anytime-valid inference rely on Ville's inequality \eqref{eq:ville} to achieve type I error control. Since the tests in this article are based on ranks, our test martingales are distribution free, which allows us to examine the gap in Ville's inequality numerically for our test. Moreover, if one is interested in collecting at most $N$ samples, one can increase the power of the test by approximating the minimal threshold $L_{\alpha, N}$ such that $P(M_n \geq L_{\alpha,N} \text{ for some } n \leq N) \leq \alpha$, for any of our variants for constructing $(M_n)_{n\in\mathbb{N}}$. This yields a sequential test valid up to $N$, and hence a pragmatic middle ground in between sequential tests with validity under infinite continuation and batch tests only valid for a fixed $N$.

We illustrate the above with the Sinkhorn variant of our test with $\eta = 0$. We estimate the CDF of $p_N = (\max_{n=1,\dots,N}M_n)^{-1}$ for different $N$ over $100\,000$ simulations with independent $(X,Y)$; a plot is in Section \ref{sec:moresim}. With $N = 4096$ observations, the probability $P(p_{4096} \leq 0.05)$ is approximately $0.0460$, and the calibrated rejection threshold is $L_{0.05,4096} = 18.3$ instead of $20$, so there is little advantage to performing the adjustment. However, for small $N$ the threshold can of course be substantially lowered, e.g.~to $9.17, 13.8, 16.9$ for $N = 128, 256, 512$, which is beneficial if one is only interested in collecting at most $N$ samples.

\begin{table}[t]
	\centering
	\small
	\begin{tabular}{llllllllcc}
		Simulation $\quad$ & $l \quad$ & \multicolumn{2}{l}{$\kl(Q^{(d)}\|\Ucal)\quad$} & \multicolumn{4}{l}{BET with $n$ samples $\qquad\qquad$} & \multicolumn{2}{c}{Sequential rank test} \\[0.5em]
	& & 2	& 8 & 64 & 128 & 256 & 512 & Power & Mean sample size\\[0.25em]

 & 1 & 0.00 & 0.72 & 1.00 & 1.00 & 1.00 & 1.00 & 1.00 & 45\\
 & 3 & 0.00 & 0.28 & 0.88 & 1.00 & 1.00 & 1.00 & 1.00 & 72\\
Circular & 5 & 0.00 & 0.10 & 0.28 & 0.72 & 1.00 & 1.00 & 1.00 & 144\\
 & 7 & 0.00 & 0.03 & 0.06 & 0.15 & 0.59 & 0.94 & 0.68 & 342\\
 & 9 & 0.00 & 0.01 & 0.02 & 0.04 & 0.13 & 0.29 & 0.17 & 467\\[0.1em]
 & 1 & 0.33 & 0.69 & 1.00 & 1.00 & 1.00 & 1.00 & 1.00 & 23\\
 & 3 & 0.08 & 0.16 & 0.75 & 0.99 & 1.00 & 1.00 & 1.00 & 59\\
Linear & 5 & 0.03 & 0.06 & 0.25 & 0.63 & 0.96 & 1.00 & 1.00 & 135\\
 & 7 & 0.02 & 0.03 & 0.12 & 0.33 & 0.66 & 0.97 & 0.86 & 252\\
 & 9 & 0.01 & 0.02 & 0.07 & 0.18 & 0.44 & 0.82 & 0.58 & 357\\[0.1em]
 & 1 & 0.00 & 0.21 & 0.12 & 0.44 & 0.95 & 1.00 & 1.00 & 121\\
 & 3 & 0.00 & 0.13 & 0.08 & 0.28 & 0.88 & 1.00 & 1.00 & 144\\
Local & 5 & 0.00 & 0.09 & 0.06 & 0.17 & 0.68 & 0.99 & 1.00 & 189\\
 & 7 & 0.00 & 0.06 & 0.05 & 0.13 & 0.45 & 0.85 & 0.95 & 248\\
 & 9 & 0.00 & 0.05 & 0.06 & 0.10 & 0.31 & 0.64 & 0.83 & 305\\
	\end{tabular}
	\caption{Kullback-Leibler divergence $\kl(Q^{(d)}\|\Ucal)$, $d = 2, 8$, for the simulation examples from Table \ref{tab:simulations} with $l = 1, 3, 5,7, 9$; power of the BET with $n = 64, 128, 256, 512$; and power and mean sample size of our sequential test truncated at $N = 512$. \label{tab:seq_vs_nonseq}}
\end{table}

The truncated version of our sequential test allows for a meaningful comparison with non-sequential methods. Consider a situation where a researcher has the budget to collect $512$ samples to test independence, but tries to minimize the sample size while still aiming for high power. With a non-sequential test, the researcher has to make an assumption about the strength of dependence and choose the sample size accordingly. Our truncated sequential test, on the other hand, automatically adapts to the strength of the dependence, and one only has to fix the upper limit on the sample size. In Table \ref{tab:seq_vs_nonseq} we compare the test by \citet{Zhang2019} with sample sizes $64$, $128$, $256$, and $512$ to our sequential test truncated at $512$, i.e., with rejection threshold $16.9$. Under the null our test rejects with probability $0.047$, and, hence, the simulated threshold controls the error probability. With maximum sample size $n = 512$ and for $l=7,9$, the test by \citeauthor{Zhang2019} has more power than ours, except for the Local simulation example. However, balancing power and sample size is a difficult task. For instance, $n = 256$ would be a good choice in the Linear example if one expects that $l = 5$, since it gives a power of $0.96$. But this sample size is unnecessarily large if $l = 1, 2$, and yields insufficient power for $l = 7,9$. Our sequential test adapts and rejects with power $1$ and $135$ observations, on average, for $l = 5$, and with only $23$ or $59$ observations for $l = 1$ and $l = 3$, respectively. For $l = 7,9$, the power is still higher than choosing $n = 256$ for \citeauthor{Zhang2019}'s test, but here it would be better to apply the latter with $n = 512$, since it achieves the highest power. Variations of this simulation example and comparisons against other tests, which yield the same conclusions, are presented in Section \ref{sec:moresim}.

To summarize, if one has prior knowledge about the strength of dependence and required sample sizes, then applying a well-designed non-sequential test can be more efficient than our methods. Otherwise, it is more safe to apply a sequential test, which automatically rejects early under strong dependence and is not underpowered if the dependence is weaker than expected.

\subsection{Illustration on real data}

\begin{figure}[t]
	\centering
	\includegraphics[width=\textwidth]{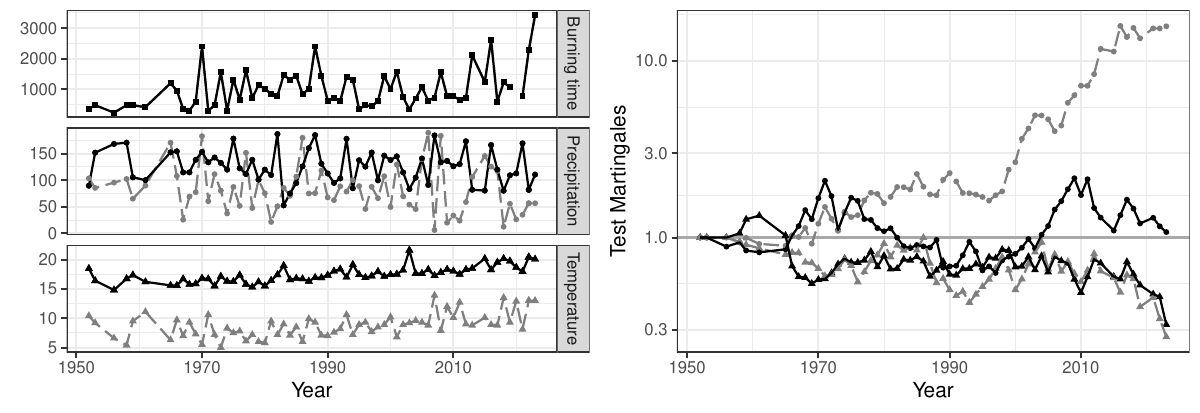}
	\caption{Burning time of the B\"o\"ogg (seconds), average precipitation (millimeters) and average temperature (degrees celsius) in the months June to August (black) and in April (gray); and test martingales for assessing dependence of temperature (lines with triangles) and precipitation (dots) and the burning time. \label{fig:sechselaeuten}}
\end{figure}

S\"achsil\"u\"ute is a traditional spring holiday in April in Z\"urich that celebrates the end of winter working hours and the transition to summer.  The climax of the holiday is the burning of the B\"o\"ogg, a snowman-like figure that is filled with explosives.  Similar to Groundhog Day in the United States, local folklore says that a long burning time until the explosion of the B\"o\"ogg announces a cold and wet summer, while a shorter burning time implies better weather in the upcoming summer.  Despite any scientific foundation, the predictions of the B\"o\"ogg receive a lot of attention in the Swiss media, particularly the record-breaking burning time of 57 minutes in 2023. From a scientific point of view, a natural explanatory factor for the different burning times is the weather at the S\"achsil\"u\"ute, with higher precipitation implying longer burning times.

For regularly recurring, yet infrequent, observations, like the burning time of the B\"o\"ogg and the associated weather, it is natural to apply a sequential test to monitor the outcomes continuously over time. Indeed, in between the first version and the revision of this article the data for 2023 became available, with which we could update the test martingales without any need for corrections for multiple testing. This would not be possible with non-sequential tests.

We applied our test to assess whether there is a dependence between either average temperature or precipitation in the months June to August and in April and the burning time of the B\"o\"ogg, with the April weather serving as a proxy for the weather at the holiday. The data for the burning times is available at \url{https://github.com/philshem/Sechselaeuten-data}, and the weather data is from \url{https://www.meteoswiss.admin.ch}. For our test, we consider the grids with $d = 2, 4$ and we apply the Sinkhorn correction for uniform marginals and $\eta = 0$.  There are a small number of ties in the data, and so we do not derandomize the martingales. As in Section \ref{sec:ville_gap}, we also simulated a rejection threshold for a truncated version of this test, with $\alpha = 0.05$ and a maximal sample size of $N = 128$, which allows continuing the sequential test far into the future. The simulated threshold equals $8.51$. We remark that the temperatures have an increasing trend and, hence, are not iid.  Since one can safely assume that the burning times are iid, our test is still valid due to Proposition \ref{prop:non_iid}.

The test martingales in Figure \ref{fig:sechselaeuten} provide no evidence against independence for the burning times and the summer weather, with final values of $0.32$ and $1.07$ for temperature and precipitation, respectively. The martingale for precipitation in April, on the other hand, reaches $15.71$, which allows to reject independence at the level $\alpha = 0.05$ with the truncated version of the test. While such a test result is unlikely to attract as much attention as the B\"o\"ogg's weather prediction, a formulation in terms of betting, as proposed by \citet{Shafer2021}, might be more intuitive for a broad audience. Namely, someone who believed the folklore and bet an equal amount of money on dependence of the burning time on precipitation and temperature in summer would have lost $31.5\%$ of the investment by $2023$, whereas the scientist who bet on dependence on precipitation in April would have multiplied the investment by a factor of almost $16$, corresponding to a yearly return of about $4\%$.

\section*{Acknowledgement}
We are grateful to Aleksandr Podkopaev and two anonymous referees for helpful comments.  AH is supported in part by European Research Council (ERC) under European Union's Horizon 2020 research and innovation programme, grant agreement No. 786461.  ML is supported in part by NSF Grant DMS-2203012.

\setlength{\bibsep}{0pt plus 0.3ex}
\bibliography{references}
\bibliographystyle{chicago}

\appendix

\section{Proofs} \label{sec:proofs}

\subsection{Proof of Theorem 1}
\begin{proof}
\let\boldsymbol\mathsf
    Since $M_{n}$ is an $\mathcal{F}_{n}$-supermartingale by assumption, we have that $M_{n}$ is $\mathcal{F}_{n}$-adapted.  Hence, by the Doob-Dynkin lemma, there exists functions $h_{n} : \mathbb{R}^{2n} \to \mathbb{R}$ and $h_{n+1} : \mathbb{R}^{2n+2} \to \mathbb{R}$ such that $M_{n} = h_{n}(\boldsymbol{X}_{n}, \boldsymbol{Y}_{n})$ and $M_{n+1} = h_{n}(\boldsymbol{X}_{n+1}, \boldsymbol{Y}_{n+1})$.  

    For part (i), it suffices to show that for any $\boldsymbol{x}_{n+1} = (\boldsymbol{x}_{n}, x_{n+1}), \boldsymbol{y}_{n+1} = (\boldsymbol{y}_{n}, y_{n+1}) \in \mathbb{R}^{n+1}$, we have $h_{n}(\boldsymbol{x}_{n}, \boldsymbol{y}_{n}) \geq h_{n+1}(\boldsymbol{x}_{n+1}, \boldsymbol{y}_{n+1})$.  To this end, fix arbitrary vectors $\boldsymbol{x}_{n+1}, \boldsymbol{y}_{n+1} \in \mathbb{R}^{n+1}$.  Let $\tilde{x}_{1}, \dots, \tilde{x}_{n_{x}}$ denote the distinct values of $\{x_{1}, \dots, x_{n}\} \backslash \{x_{n+1}\}$ and $\tilde{x}_{n_{x}+1} = x_{n+1}$ and similarly for $\tilde{y}_{1}, \dots, \tilde{y}_{n_{y}+1}$.  For $\epsilon > 0$, define the probability measure $P_{\epsilon}$ to be the product measure $P_{\epsilon} = P_{\epsilon, X} \times P_{\epsilon, Y}$ on $\mathbb{R} \times \mathbb{R}$, where $P_{\epsilon, X}(X = \tilde{x}_{i}) = \epsilon / n_{x}$ for all $i = 1, \dots, n_{x}$ and $P_{\epsilon, X}(X = x_{n+1}) = 1 - \epsilon$ and similarly for $P_{\epsilon, Y}$.  Note that $P_{\epsilon}$ satisfies $H_{0}$ for all $\epsilon > 0$.  Now, write $A = \{X_{1} = x_{1}, \dots, X_{n} = x_{n}, Y_{1} = y_{1}, \dots, Y_{n} = y_{n}) \in \mathcal{F}_{n}$.  Since $M_{n}$ is an $\mathcal{F}_{n}$-supermartingale relative to the measure $P_{\epsilon}$, it follows that 
    \begin{align*}
        P_{\epsilon}(A) h_{n}(\mathsf{x}_{n}, \mathsf{y}_{n}) 
        &= \mathbb{E}_{P_{\epsilon}} \{\ONE(A) M_{n}\} \\ 
        &\geq \mathbb{E}_{P_{\epsilon}} \{\ONE(A) \mathbb{E}_{P_{\epsilon}}( M_{n+1} | \mathcal{F}_{n})\} \\
        &= P_{\epsilon}(A) \mathbb{E}_{P_{\epsilon}}\{ h_{n+1}(\mathsf{x}_{n}, X_{n+1}, \mathsf{y}_{n}, Y_{n+1})\}.
    \end{align*}
    By construction, we have $P_{\epsilon}(A) > 0$, implying that 
    \begin{align*}
         h_{n}(\mathsf{x}_{n}, \mathsf{y}_{n}) \geq \mathbb{E}_{P_{\epsilon}}\{h_{n+1}(\mathsf{x}_{n}, X_{n+1}, \mathsf{y}_{n}, Y_{n+1})\}.
    \end{align*}
    As $\epsilon \to 0$, the right hand side converges to $h_{n+1}(\mathsf{x}_{n+1}, \mathsf{y}_{n+1})$, and combining the above calculations yields
    \begin{align*}
        h_{n}(\boldsymbol{x}_{n}, \boldsymbol{y}_{n}) \geq h_{n+1}(\boldsymbol{x}_{n+1}, \boldsymbol{y}_{n+1})
    \end{align*}
    for all $\boldsymbol{x}_{n+1}, \boldsymbol{y}_{n+1} \in \mathbb{R}^{2n+2}$, implying that $M_{n} \geq M_{n+1}$ everywhere.

    For the second part, fix $(\mathsf{x}_{n+1}, \mathsf{y}_{n+1}) = (\mathsf{x}_{n}, x_{n+1}, \mathsf{y}_{n}, y_{n+1}) \in \mathbb{R}^{2n+2}$ such that $x_{1}, \dots, x_{n+1}$ are pairwise distinct and similarly for $y_{1}, \dots, y_{n+1}$.  Furthermore, we assume that $(\mathsf{x}_{n}, \mathsf{y}_{n})$ is a Lebesgue point of $h_{n}$ and $(\mathsf{x}_{n}, x_{i}, \mathsf{y}_{n}, y_{j})$ is a Lebesgue point of $h_{n+1}$ for all $i,j = 1, \dots, n+1$.  For simplicity, we denote this subset of $\mathbb{R}^{2n+2}$ by $\mathcal{V}_{n}$.  Since vectors with non-distinct entries are contained in a finite union of proper linear subspaces and almost every point is a Lebesgue point, the complement of $\mathcal{V}_{n}$ has Lebesgue measure zero.  Next, letting
    \begin{align*}
        \epsilon \in (0, \min\{|x_{i} - x_{j}| + |y_{i} - y_{j}|:i,j = 1, \dots, n+1, i \neq j\} / 4),
    \end{align*}
    define the density of $X$ by
    \begin{align*}
        \frac{dP_{\epsilon, X}}{d\Leb}(x) = \frac{1}{2n} \sum_{i=1}^{n} 1_{|x - x_{i}| < \epsilon} + \frac{1-\epsilon}{2\epsilon} 1_{|x - x_{n+1}| < \epsilon}
    \end{align*}
    and analogously for $Y$.  We set $P_{\epsilon}$ to be the product distribution on $\mathbb{R}\times \mathbb{R}$ with these marginals.  Finally, let 
    \begin{align*}
    A_{\epsilon} & = [x_1 - \epsilon, x_1 + \epsilon] \times \dots \times [x_n - \epsilon, x_n + \epsilon] \times [y_1 - \epsilon, y_1 + \epsilon] \times \dots \times [y_n - \epsilon, y_n + \epsilon], \\
    B_{i,j,\epsilon} & = [x_i - \epsilon, x_i + \epsilon] \times [y_j - \epsilon, y_j + \epsilon] \ \subset \ \mathbb{R}^2, \ i,j = 1, \dots, n +1.
    \end{align*}
    By assumption, $M_{n}$ is an $\mathcal{F}_{n}$-supermartingale, implying
    \begin{align*}
        \mathbb{E}_{P_{\epsilon}} [\ONE\{(\mathsf{X}_{n}, \mathsf{Y}_{n}) \in A_{\epsilon}\} M_{n}] 
        \geq \mathbb{E}_{P_{\epsilon}} [\ONE\{(\mathsf{X}_{n}, \mathsf{Y}_{n}) \in A_{\epsilon}\} M_{n+1}].
    \end{align*}
    Next, we multiply both sides of the above display by $(n / \epsilon)^{2n}$ and compute the limit as $\epsilon \to 0$ separately.  Since 
    \begin{align*}
        [x_i - \epsilon, x_i + \epsilon] \cap [x_j - \epsilon, x_j + \epsilon] =
        [y_i - \epsilon, y_i + \epsilon] \cap [y_j - \epsilon, y_j + \epsilon] 
        = \varnothing
    \end{align*}
    for $i \neq j$ by the definition of $\epsilon$, it follows that
    \begin{align*}
        \Big(\frac{n}{\epsilon}\Big)^{2n} \mathbb{E}_{P_{\epsilon}} [\ONE\{(\mathsf{X}_{n}, \mathsf{Y}_{n}) \in A_{\epsilon}\} M_{n}] 
        &= \Big(\frac{n}{\epsilon}\Big)^{2n}\frac{1}{(2n)^{2n}} \int_{A_{\epsilon}} h_{n}(\mathsf{s}_{n}, \mathsf{t}_{n}) \Leb(d\mathsf{s}_{n}, d\mathsf{t}_{n}) \\
        &= \frac{1}{(2\epsilon)^{2n}} \int_{A_{\epsilon}} h_{n}(\mathsf{s}_{n}, \mathsf{t}_{n}) \Leb(d\mathsf{s}_{n}, d\mathsf{t}_{n}) \\
        &= \frac{1}{\Leb(A_{\epsilon})} \int_{A_{\epsilon}} h_{n}(\mathsf{s}_{n}, \mathsf{t}_{n}) \Leb(d\mathsf{s}_{n}, d\mathsf{t}_{n}).
    \end{align*}
    Since $(\mathsf{x}_{n}, \mathsf{y}_{n})$ is a Lebesgue point of $h_{n}$, as $\epsilon \to 0$, the Lebesgue differentiation theorem implies that
    \begin{align*}
        \lim_{\epsilon \to 0} \Big(\frac{n}{\epsilon}\Big)^{2n} \mathbb{E}_{P_{\epsilon}} [\ONE\{(\mathsf{X}_{n}, \mathsf{Y}_{n}) \in A_{\epsilon}\} M_{n}] 
        = h_{n}(\mathsf{x}_{n}, \mathsf{y}_n).
    \end{align*}
    Similarly, we have
    \begin{align*}
        &\Big(\frac{n}{\epsilon}\Big)^{2n} \mathbb{E}_{P_{\epsilon}} [\ONE\{(\mathsf{X}_{n}, \mathsf{Y}_{n}) \in A_{\epsilon}\} M_{n+1}] \\
        &\phantom{\leq}= \frac{\epsilon^{2}}{n^{2}}\frac{1}{(2\epsilon)^{2n+2}}
        \sum_{i,j=1}^{n} \int_{A_{\epsilon}} \Leb(d\mathsf{s}_{n}, d\mathsf{t}_{n}) \int_{B_{i,j, \epsilon}} h_{n+1}(\mathsf{s}_{n+1}, \mathsf{t}_{n+1}) \Leb(ds_{n+1}, dt_{n+1}) \\
        &\phantom{\leq\leq} + \frac{\epsilon(1-\epsilon)}{n} \frac{1}{(2\epsilon)^{2n+2}}
        \sum_{i=1}^{n} \int_{A_{\epsilon}} \Leb(d\mathsf{s}_{n}, d\mathsf{t}_{n}) \int_{B_{n+1,i, \epsilon}\cup B_{i,n+1, \epsilon}} h_{n+1}(\mathsf{s}_{n+1}, \mathsf{t}_{n+1}) \Leb(ds_{n+1}, dt_{n+1}) \\ 
        &\phantom{\leq\leq}+ (1-\epsilon)^{2}\frac{1}{(2\epsilon)^{2n+2}} \int_{A_{\epsilon}} \Leb(d\mathsf{s}_{n}, d\mathsf{t}_{n}) \int_{B_{n+1,n+1, \epsilon}} h_{n+1}(\mathsf{s}_{n+1}, \mathsf{t}_{n+1}) \Leb(ds_{n+1}, dt_{n+1}).
    \end{align*}
    Noting that $\Leb(A_{\epsilon} \times B_{i,j, \epsilon}) = (2\epsilon)^{-(2n+2)}$ for all $i, j = 1, \dots, n+1$, the Lebesgue differentiation theorem further implies that 
    \begin{align*}
        \lim_{\epsilon \to 0} \Big(\frac{n}{\epsilon}\Big)^{2n} \mathbb{E}_{P_{\epsilon}} [\ONE\{(\mathsf{X}_{n}, \mathsf{Y}_{n}) \in A_{\epsilon}\} M_{n+1}] 
        = h_{n+1}(\mathsf{x}_{n+1}, \mathsf{y}_{n+1}).
    \end{align*}
    Combining the above calculations shows that, for $(\mathsf{x}_{n+1}, \mathsf{y}_{n+1}) \in \mathcal{V}_{n}$, we have 
    \begin{align*}
        h_{n}(\mathsf{x}_{n}, \mathsf{y}_n) \geq h_{n+1}(\mathsf{x}_{n+1}, \mathsf{y}_{n+1}).
    \end{align*}
    Finally, letting $P'$ be an arbitrary distribution absolutely continuous with respect to Lebesgue measure, we conclude that
    \begin{align*}
        P'(M_{n} \geq M_{n+1}) = P'\{h_{n}(\mathsf{X}_{n}, \mathsf{Y}_{n}) \geq h_{n+1}(\mathsf{X}_{n+1}, \mathsf{Y}_{n+1}), (\mathsf{X}_{n+1}, \mathsf{Y}_{n+1}) \in \mathcal{V}_{n}\} = 1.
    \end{align*}
\end{proof}

% \newpage
\subsection{Proof of Proposition 1} \label{sec:discontinuous_marginals}
\begin{proof}
In the case of continuous marginals, part (i) of the Proposition is Theorem 1.1 by \citet{BarndorffNielsen1963}, and parts (ii) and (iii) are direct consequences of (i). It only remains to prove (ii) and (iii) for the case that the marginal CDFs are not continuous. We do this by giving an alternative definition of randomized sequential ranks, for which we prove that (i), (ii), and (iii) hold, and then show that the randomized sequential ranks \eqref{eq:randomized_ranks} are equivalent in distribution to the alternative definition given here.

Let $\delta_1, \dots, \delta_N$ be independent random variables, also independent of $X_1, \dots, X_N$, with a continuous distribution. Define
\begin{align} \label{eq:alternative_seq_ranks}
	\check{R}_n = \frac{1}{n}\sum_{i=1}^n \ONE(X_i < X_n) + \frac{1}{n}\sum_{i=1}^n\ONE(X_i = X_n, \delta_i \leq \delta_n), \ n = 1, \dots, N.
\end{align}
These are the sequential ranks of the bivariate observations $(X_n, \delta_n)$, $n = 1, \dots, N$, with the lexicographic ordering, i.e., $(X_i, \delta_i) \prec (X_j, \delta_j)$ if $X_i < X_j$ or if $X_i = X_j$ and $\delta_i \leq \delta_j$.

We first show that these sequential ranks satisfy part (i), with completely analogous arguments as in the proof of Theorem 1 of \citet{BarndorffNielsen1963}.
Since $\delta_1, \dots, \delta_N$ follow a continuous distribution, there are no pairs $i,j$ for which $(X_i, \delta_i) \prec (X_j, \delta_j)$ and $(X_j, \delta_j) \prec (X_i, \delta_i)$ both hold. Let $\xi_1, \dots, \xi_N \in \{1, \dots, N\}$ be the usual ranks of $(X_1, \delta_1), \dots, (X_N, \delta_N)$ in the order $\prec$. Because $(X_1, \delta_1), \dots, (X_N, \delta_N)$ are iid, we have
\[
    P(\xi_1 = k_1, \dots, \xi_N = k_N) = 1/N!
\]
for all permutations $(k_1, \dots, k_N)$ of $(1, \dots, N)$. Moreover, there is a bijection between $\xi_1, \dots, \xi_N$ and $\check{R}_1, \dots, \check{R}_N$, see the explicit formulas in \citet{Khmaladze1986}. Hence,
\[
    1/N! = P(\xi_1 = k_1, \dots, \xi_N = k_N) = P(\check{R}_1 = r_1, \dots, \check{R}_N = r_N)
\]
for all combinations of 
\[
    r_1 = 1, \ r_2 \in \{1/2, 1\}, \ r_3 \in \{1/3, 2/3, 1\}, \ \dots, \ r_N \in \{1/N, 2/N, \dots, 1\},
\]
which is the set where $\check{R}_1, \dots, \check{R}_N$ take their values. Consequently,
\[
    P(\check{R}_n = r_n) = \frac{1}{N!}\prod_{i = 1, \dots, N, \, i \neq n} \!\!\! i \ = \ \frac{1}{n}, \ r_n \in \{1/n \dots, 1\}, \ n = 1, \dots, N, 
\]
and so $P(\check{R_1} = r_1, \dots, \check{R}_N = r_N) = 1/N! = \prod_{n=1}^N P(\check{R}_n = r_N)$, which implies independence of $\check{R}_1, \dots, \check{R}_N$. Hence, $\check{R}_n - U_n / n$, $n = 1, \dots, N$, are iid uniform on $[0,1]$. The same arguments can be applied to $Y_1, \dots, Y_N$, and so (ii) and (iii) hold for the alternative definition \eqref{eq:alternative_seq_ranks} of randomized sequential ranks.

It remains to show that $R_1, \dots, R_N$ are equal in distribution to $\check{R}_n - U_n/n$, $n = 1, \dots, N$. Condition on $X_1, \dots, X_N$. Let $x \in \{X_1, \dots, X_N\}$, and let $n_1, \dots, n_k$ be the indices of the observations with $X_{n_j} = x$. Then,
\[
	D_{j} = \frac{1}{n_j}\sum_{i=1}^{n_j}\ONE(X_{i} = x, \delta_{i} \leq \delta_{j}) \in \{1/n_j, \dots, \hat{F}_{n_j}(x) - \hat{F}_{n_j}(x-)\}, \ j = 1, \dots, k,
\]
are, up to normalization, the sequential ranks of $\delta_{n_1}, \dots, \delta_{n_k}$, and so they are independent with
\[
	P\left(D_{j} = i/n_j\right) = \frac{1}{j}, \ i = 1, \dots, j, \ j = 1, \dots, k.
\]
Because $\check{R}_{n_j} = \hat{F}_{n_j}(x-) + D_{j}$,  $j = 1, \dots, k$, the above derivations show that $\check{R}_1, \dots, \check{R}_N$ are equal in distribution to
\[
	\hat{F}_n(X_n-) + B_n, \ B_n \sim \mathrm{Unif}\left[\left\{\frac{1}{n}, \frac{2}{n}, \dots, \hat{F}_n(X_n) - \hat{F}_n(X_n-)\right\}\right], \ n = 1, \dots, N,
\]
where $B_1, \dots, B_N$ are independent. Since $B_n - U_n/n$, $n = 1, \dots, N$, are independent and $(B_n-U_n/n)/\{\hat{F}_n(X_n) - \hat{F}_n(X_n-)\}$ is uniform on $(0,1)$, we obtain that $\check{R}_n - U_n/n$, $n = 1, \dots, N$, are equal in distribution to $\hat{F}_n(X_n-) + U_n\{\hat{F}_n(X_n) - \hat{F}_n(X_n-)\}$, $n = 1, \dots, N$, which completes the proof.
\end{proof}

\subsection{Proof of Proposition 2}
\begin{proof}
Assume that $(Y_n)_{n\in\mathbb{N}}$ is iid. Then,
\begin{align*}
    & \mathbb{E}\{f_n(R_n, S_n) \mid R_i, S_i, i = 1, \dots, n - 1\} \\
    & = \mathbb{E}[\mathbb{E}\{f_n(R_n, S_n) \mid R_n\} \mid R_i, S_i, i = 1, \dots, n - 1] \\
    & = \mathbb{E}\left\{\int_{0}^1 f_n(R_n, s) \, ds \mid R_i, S_i, i = 1, \dots, n - 1\right\} \\
    & = \mathbb{E}\left(1 \mid R_i, S_i, i = 1, \dots, n - 1\right) \\
    & = 1.
\end{align*}
The second equality above holds by independence of $S_n$ from $R_n, R_i, S_i$, $i = 1, \dots, n - 1$ under the null hypothesis and because $S_n$ is uniformly distributed, and the third holds because $r \mapsto \int_{0}^1 f_n(r, s) \, ds$ is the marginal of $f_n$, which is uniform on $[0,1]$ by assumption, so constant $1$. It follows that $\mathbb{E}(M_N | \mathcal{G}_{N-1}) = M_{N-1}$ almost surely, for $N \in \mathbb{N}$.
\end{proof}

\subsection{Proof of Proposition 3}
\begin{proof}
For part (i), assume that $X$ and $Y$ are not independent. Then there exists $(x_0, y_0)$ such that $P(X \leq x_0, Y \leq y_0) \neq F(x_0)G(y_0)$. Let $r_0 = F(x_0)$, $s_0 = G(y_0)$, and note that $0 < r_0, s_0 < 1$. Let $x_0' = \inf(x \in \mathbb{R}\colon F(x) > r_0)$, $y_0' = \inf(y \in \mathbb{R}\colon G(y) > s_0)$. By right-continuity we know that $F(x_0') = F(x_0)$ and $G(y_0') = G(y_0)$, and also $P(X \leq x_0', Y \leq y_0') = P(X \leq x_0, Y \leq y_0)$. By definition of $x_0'$, we know that $X \leq x_0'$ holds if and only if $F(X) \leq r_0$, which is equivalent to $\tilde{R} = F(X-) + U(F(X)-F(X-)) \leq r_0$, and analogously for $Y$ and $\tilde{S}$. So
\[
    P(\tilde{R} \leq r_0, \tilde{S} \leq s_0) = P(X \leq x_0', Y \leq y_0') \neq r_0s_0,
\]
which implies that the distribution $Q$ of $(\tilde{R},\tilde{S})$ is not uniform on $[0,1]^2$.

For part (ii), assume that $d_2 = d_1 K$ for an integer $K > 1$. Every bin $B_{k\ell}^{d_1}$ in the coarser grid contains exactly $K^2$ bins from the finder grid $B_{k'\ell'}^{d_2}$, $k',\ell'=0,\dots,d_2-1$, and we define $S(k,\ell) = \{(k',\ell')\colon B_{k'\ell'}^{d_2} \subseteq B_{k\ell}^{d_1}\}$. Then,
\begin{align*}
    \kl(Q^{(d_2)}\| \mathcal{U}) & = \sum_{k,\ell=0}^{d_2-1} d_2^2 q_{k\ell}^{(d_2)}\log(d_2^2 q_{k\ell}^{(d_2)})/d_2^2\\
    & = \sum_{k,\ell=0}^{d_1 - 1} K^2 \sum_{(k',\ell') \in S(k,l)} q_{k'\ell'}^{(d_2)} \log(q_{k'\ell'}^{(d_2)})/ K^2\\
    & \geq \sum_{k,\ell=0}^{d_1 - 1} K^2 \left(\sum_{(k',\ell') \in S(k,l)} q_{k'\ell'}^{(d_2)} / K^2\right) \log\left(\sum_{(k',\ell') \in S(k,l)}q_{k'\ell'}^{(d_2)} d_2^2 / K^2\right) \\
    & =  \sum_{k,\ell=0}^{d_1-1} q_{k\ell}^{(d_1)}\log(d_1^2 q_{k\ell}^{(d_1)}) \\
    & = \kl(Q^{(d_1)}\| \mathcal{U}).
\end{align*}
The inequality above holds due to Jensen's inequality applied to $x\mapsto x\log(x)$, and the second last equality uses the fact that $q_{k\ell}^{(d_1)} = Q(B_{k\ell}^{d_1}) = \sum_{(k',\ell') \in S(k,l)} q_{k'\ell'}^{(d_2)}$.

For part (iii), notice that
\[
    \mathbb{E}_{Q}[\log\{q^{(d)}(R,S)\}] = \mathbb{E}_{Q^{(d)}}[\log\{q^{(d)}(R,S)\}] = \int_{[0,1]^2} q^{(d)}(r,s)\log\{q^{(d)}(r,s)\} \, d(r,s),
\]
because $q^{(d)}$ is constant on the rectangles $\Bkld$. If $(r_0, s_0)$ is a Lebesgue point of $q$, then
\[
    q^{(d)}(r_0,s_0)\log\{q^{(d)}(r_0,s_0)\} \rightarrow q(r_0,s_0)\log\{q(r_0,s_0)\}
\]
by Theorem 7.10 of \citet{Rudin1987}, because $q$ is Lebesgue integrable. Since the Lebesgue points of $q$ have measure $1$ and because $x\log(x) \geq -\exp(-1)$ for all $x \geq 0$, we can apply Fatou's Lemma to obtain
\begin{align*}
    \int_{[0,1]^2} q(r,s)\log\{q(r,s)\} \, d(r,s) & = \int_{[0,1]^2} \liminf_{d \rightarrow \infty} q^{(d)}(r,s)\log\{q^{(d)}(r,s)\} \, d(r,s) \\
    & \leq \liminf_{d \rightarrow \infty} \int_{[0,1]^2} q^{(d)}(r,s)\log\{q^{(d)}(r,s)\} \, d(r,s) \\
    & \leq \limsup_{d \rightarrow \infty} \int_{[0,1]^2} q^{(d)}(r,s)\log\{q^{(d)}(r,s)\} \, d(r,s) \\
    & =  \limsup_{d \rightarrow \infty} \sum_{k,\ell=0}^{d} q_{k\ell}^{(d)}\log(d^2q_{k\ell}^{(d)}) \\
    & = \limsup_{d \rightarrow \infty} \sum_{k,\ell=0}^{d} \int_{\Bkld}q(r,s)\,d(r,s)\log\left\{d^2\int_{\Bkld}q(r,s)\,d(r,s)\right\} \\
    & \leq \limsup_{d \rightarrow \infty} \sum_{k,\ell=0}^{d} \int_{\Bkld}q(r,s)\log\left\{q(r,s)\right\}\,d(r,s) \\
    & = \int_{[0,1]^2} q(r,s)\log\{q(r,s)\} \, d(r,s),
\end{align*}
where the second last line applies Jensen's inequality, analogously to the proof of part (i). Notice that part (ii) would also follow from Theorem 21 of \citet{vanErven2014} if the sequence of grid sizes $(d_k)_{k \in \mathbb{N}}$ was such that $d_{k+1}$ is a multiple of $d_k$ for all $k \in \mathbb{N}$.

For the last part, if $Q$ is not absolutely continuous with respect to the Lebesgue measure, then there exist $\delta \in [0,1)$, a Lebesgue density $p$, and a probability measure measure $\nu$ singular with respect to the Lebesgue measure on $[0,1]^2$ such that for all Lebesgue measurable sets $B \subseteq [0,1]^2$,
\[
    Q(B) = \delta \int_B p(r,s) \, d(r,s) + (1-\delta)\nu(B).
\]
Let $A$ be the set on which $\nu$ is concentrated, so $\mathrm{Leb}(A) = 0$ and $\nu(A) = 1$. Let $\epsilon > 0$. Since $A$ is Lebesgue measurable with $\mathrm{Leb}(A) = 0$, it can be covered by a countable collection of rectangles whose union has at most Lebesgue measure $\epsilon$. We can take a finite subcollection $A_i$, $i = 1, \dots, M$, of these rectangles such that $Q(\bigcup_{i=1}^M A_i) \geq 1-\delta'$ for some $\delta' > \delta$.
These $M$ potentially overlapping rectangles can be written as a disjoint union of $M'$ rectangles $A'_i$, $i = 1, \dots, M'$, each with side lengths $\alpha_m$ and $\beta_m$, $m = 1, \dots, M'$. The Lebesgue measure of these rectangles satisfies
\begin{align*}
    \mathrm{Leb}\left(\bigcup_{i=1}^{M'}A'_i\right) = \sum_{m=1}^{M'} \alpha_m\beta_m \leq \epsilon.
\end{align*}
For $d$ sufficiently large, we have $\min(\alpha_m, \beta_m) > 1/d$, $m = 1, \dots, M'$.  A rectangle with side lengths $\alpha_m$ and $\beta_m$ can be covered by at most $M'' = (\lceil \alpha_m/d \rceil + 1)(\lceil \beta_m / d \rceil + 1)$ squares $A''_i$, $i = 1, \dots, M''$, with side length $1/d$. Since
\[
    \sum_{m=1}^{M'}\alpha_m = \sum_{m=1}^{M'}\alpha_m\beta_m / \beta_m \leq \epsilon d,
\]
which analogously holds for the sum over $\beta_1, \dots, \beta_{M'}$, the total Lebesgue measure of the covering squares is bounded by 
\begin{align*}
    \frac{1}{d^{2}}\sum_{m=1}^{M'} (\lceil \alpha_m/d \rceil + 1)(\lceil \beta_m / d \rceil + 1)
    & \leq \frac{1}{d^{2}} \sum_{m=1}^{M'} (\alpha_m / d + 2)(\beta_m / d + 2) \\
    & \leq \frac{1}{d^{2}} \Big( \epsilon/d^2 + 4 \epsilon d + 4M' \Big).
\end{align*}
Let $A'' = \bigcup_{i=1}^{M''}A''_i$. Applying Jensen's inequality, like in the proof of (ii), we obtain
\begin{align*}
    \mathbb{E}_Q[\log\{q^{(d)}(R,S)\}] & = \mathbb{E}_{Q^{(d)}}[\log\{q^{(d)}(R,S)\}] \\
    & \geq Q(A'') \log\left\{\frac{Q(A'')}{\mathrm{Leb}(A'')}\right\} + \{1-Q(A'')\} \log\left\{\frac{1-Q(A'')}{1-\mathrm{Leb}(A'')}\right\}.
\end{align*}
Since $\mathrm{Leb}(A'') \leq \epsilon$, $Q(A'') \geq 1-\delta'$, and $x\log(x) \geq -\exp(-1)$ for $x \geq 0$, we have 
\[
    \mathbb{E}_Q[\log\{q^{(d)}(R,S)\}] \geq (1-\delta')\log\left(\frac{d}{\epsilon/d^3 + 4\epsilon + 4M'/d}\right) -2\exp(-1) \rightarrow \infty, \ d \rightarrow \infty.
\]
\end{proof}

\subsection{Proof of Theorem 2}
For the proof of Theorem 2, we introduce the following quantities,
\begin{align*}
    &\MhatN = \prod_{k,\ell=0}^{d-1} \Big(d^{2} \frac{\bklN}{N} \Big)^{\bklN} = \frac{d^{2N}}{N^{N}} \prod_{k,\ell=0}^{d-1} \bklN^{\bklN}, \\
    &\btildeklN = \sum_{n=1}^{N} \Big\{(\Rtilden,\Stilden) \in \Bkld\Big\}, \\ 
    &\MtildeN = \prod_{k,\ell=0}^{d-1} \Big(d^{2} \frac{\btildeklN}{N} \Big)^{\btildeklN}
    = \frac{d^{2N}}{N^{N}} \prod_{k,\ell=0}^{d-1} \btildeklN^{\btildeklN}.
\end{align*}
Here $\MhatN$ is as \eqref{eq:test_martingale_d} but with the actual frequencies observed up to time $N$ instead of $f_n^{(d)}$; $\btildeklN$ are the bin counts with the probability integral transform \eqref{eq:honest_ranks}; and $\MhatN$ is as $\MhatN$ but with the probability integral transform replacing the sequential ranks. We do not indicate $d$ in $\MhatN$, $\MtildeN$ and omit it in quantities like $\qkld$, $\Qd$, to simplify notation.

\medskip
\begin{lemma}\label{lemma:sequential:full:lrt}
    For all $N \in \mathbb{N}$,
    \begin{align*}
        \log\Big( \frac{\MhatN}{\MN} \Big) \leq \frac{3(d^2-1)}{2} \log(N) + 5d^2.
    \end{align*}
\end{lemma}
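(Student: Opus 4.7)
The plan is to bound $\log(\MhatN / \MN)$ directly from the closed-form expressions by applying Stirling's approximation to each factorial and carefully tracking constants. Expanding the definitions of $\MhatN$ and $\MN$ from \eqref{eq:test_martingale_d} gives
\[
    \log\!\left(\frac{\MhatN}{\MN}\right) = \log\frac{(N+d^{2}-1)!}{(d^{2}-1)!\, N^{N}} + \sum_{k,\ell=0}^{d-1}\bigl[b_{k\ell,N}\log b_{k\ell,N} - \log(b_{k\ell,N}!)\bigr],
\]
so everything reduces to bounding two groups of factorials.

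First, for the sum over bins, Stirling's lower bound $n! \geq \sqrt{2\pi n}(n/e)^{n}$ yields $b\log b - \log(b!) \leq b - \tfrac{1}{2}\log(2\pi b)$ whenever $b \geq 1$, while the contribution of empty bins vanishes by the convention $0\log 0 = 0$. Summing over all bins and using $\sum_{k,\ell} b_{k\ell,N} = N$ together with the fact that at least one bin is non-empty, the sum is bounded by $N - \tfrac{1}{2}\log(2\pi)$.

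Next, for $d \geq 2$, I apply the Stirling upper bound to $(N+d^{2}-1)!$ and the lower bound to $(d^{2}-1)!$. The key algebraic manipulation is the identity
\[
    (N+d^{2}-1)\log(N+d^{2}-1) - N\log N = N\log\!\bigl(1 + (d^{2}-1)/N\bigr) + (d^{2}-1)\log(N+d^{2}-1),
\]
together with $\log(1+x) \leq x$, which avoids any $N$-dependent terms blowing up in the resulting constants. After subtracting $N\log N$, the $-N$ arising from the Stirling expansion of $(N+d^{2}-1)!$ exactly cancels the $+N$ produced by the bin sum, leaving a bound of the form
\[
    \log\!\left(\frac{\MhatN}{\MN}\right) \leq (d^{2}-1) + \bigl(d^{2}-\tfrac{1}{2}\bigr)\log\frac{N+d^{2}-1}{d^{2}-1} + \frac{1}{12(N+d^{2}-1)} - \tfrac{1}{2}\log(2\pi).
\]

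Finally, I translate the ratio inside the log to $\log N$: writing $\log\frac{N+d^{2}-1}{d^{2}-1} = \log N + \log\bigl(\tfrac{1}{d^{2}-1} + \tfrac{1}{N}\bigr)$, the second term is at most $\log(4/3)$ whenever $d \geq 2$ and $N \geq 1$. A routine numerical check then shows that $(d^{2}-1) + (d^{2}-\tfrac{1}{2})\log(4/3) - \tfrac{1}{2}\log(2\pi) + \tfrac{1}{12}$ is dominated by $2d^{2}$ for every $d \geq 2$, and the $d = 1$ case is trivial since $\log(\MhatN / \MN) = 0$ there. The main obstacle I anticipate is precisely this last rearrangement: the Stirling estimates naturally produce $\log((N+d^{2}-1)/(d^{2}-1))$ rather than $\log N$, and in the small-$N$ regime $N < d^{2}-1$ this ratio can exceed $N$, so one has to argue that the resulting $O(d^{2})$ slack is absorbed by the additive constant $2d^{2}$ rather than inflating the coefficient of $\log N$.
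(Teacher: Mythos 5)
Your proof is correct: the decomposition of $\log(\MhatN/\MN)$ into the factorial ratio $\log\bigl((N+d^2-1)!/((d^2-1)!\,N^N)\bigr)$ plus $\sum_{k,\ell}[\bklN\log \bklN-\log(\bklN!)]$ is exact, the Robbins--Stirling bounds are applied on the correct sides, the $\pm N$ cancellation works as you describe, and the final absorption of $(d^2-1)+(d^2-\tfrac12)\log(4/3)-\tfrac12\log(2\pi)+\tfrac{1}{12}$ into $2d^2$ is a valid numerical check (your constant is in fact closer to $1.3\,d^2$, so the stated $2d^2$ has slack). The route differs from the paper's in its bookkeeping rather than its substance: the paper does not redo Stirling from scratch but imports the exact expansion of $\log(\MhatN/\MN)$ from the proof of Lemma 4 of Xie and Barron (2000) --- the classical regret analysis of the uniform-prior (add-one) mixture against the multinomial maximum likelihood --- and then bounds each term using their inequality (27), which keeps the argument short and makes the connection to known minimax-regret results explicit. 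Your version is self-contained and elementary, needs no external lemma, makes every constant visible, and cleanly covers the corner cases $d=1$ and small $N$ (where, as you rightly note, $\log\bigl((N+d^2-1)/(d^2-1)\bigr)$ can exceed $\log N$ and the discrepancy must be pushed into the additive $O(d^2)$ term rather than the $\log N$ coefficient); the price is a somewhat longer computation that reproves a known regret bound. Either argument establishes the lemma as stated.
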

\begin{proof}[of Lemma \ref{lemma:sequential:full:lrt}]
We have
\[
    \frac{\MhatN}{\MN} = \frac{(N-1+d^2)!}{(d^2-1)!N^N}\prod_{k,\ell=0}^{d-1}\frac{\bklN^{\bklN}}{(\bklN)!},
\]
and applying Stirling's formula, in the version of Lemma 10 by \citet{Duembgen2021}, to $\bklN^{\bklN}/(\bklN)!$ for which $\bklN > 0$ gives
\begin{align*}
    \log\left\{\frac{\bklN^{\bklN}}{(\bklN)!}\right\} & \leq \bklN\log(\bklN) - \log(2\pi)/2 - (\bklN - 1/2)\log(\bklN) \\
    & \qquad + \bklN - (12\bklN + 1)^{-1} \\
    & = -\log(2\pi)/2 + \log(\bklN)/2 + \bklN - (12\bklN + 1)^{-1}.
\end{align*}
Hence, with $K = \#\{(k,\ell)\colon \bklN > 0\} \leq d^2$, we have
\begin{align}
    \log\left\{\prod_{k,\ell=0}^{d-1}\frac{\bklN^{\bklN}}{(\bklN)!}\right\} & \leq -K\log(2\pi)/2 + \sum_{k,\ell\colon \bklN \neq 0} \log(\bklN)/2 + N - 1 - \frac{K}{12N+1} \nonumber \\
    & \leq -K\log(2\pi)/2 + d^2\log(N)/2 + N - 1 - \frac{K}{12N+1} \label{eq:stirling_bkln},
\end{align}
using that $\sum_{(k,\ell)\colon \bklN \neq 0} \bklN = N - 1$. Furthermore, 
\begin{align}
    \log\left\{\frac{(N-1+d^2)!}{(d^2-1)!N^N}\right\} & \leq -N\log(N) + (N-3/2+d^2)\log(N-1+d^2) \nonumber \\
    & \qquad -(N-1+d^2) + \frac{1}{12(N-1+d^2)} \label{eq:stirling_rest} \\
    & \qquad - (d^2-3/2)\log(d^2-1) + (d^2-1) - \frac{1}{12(d^2-1)+1} \nonumber
\end{align}
In a next step, we collect and bound all terms in \eqref{eq:stirling_bkln} and \eqref{eq:stirling_rest} that depend on $N$,
\begin{align}
    & d^2\log(N)/2 + N - \frac{K}{12N+1} -N\log(N) + (N-3/2+d^2)\log(N-1+d^2) \nonumber\\
    & \quad -(N-1+d^2) + \frac{1}{12(N-1+d^2)} \nonumber \\
    & = \log(N)\left(d^2/2 - N + N - 3/2 + d^2\right) + (N-3/2+d^2)\log\{1+(d^2-1)/N\} \nonumber \\
    & \quad - \frac{K}{12N + 1} + \frac{1}{12(N-1+d^2)} \nonumber \\
    & \leq 3(d^2-1)\log(N)/2 + (N-3/2)(d^2-1)/N + d^2\log\{1+(d^2-1)/N)\} + 1\nonumber \\
    & \leq 3(d^2-1)\log(N)/2 + d^2 + d^2\log(1+d^2) + 1 \label{eq:n_terms}.
\end{align}
Now we collect and bound all the terms in \eqref{eq:stirling_bkln} and \eqref{eq:stirling_rest} that do not depend on $N$,
\begin{align} 
    & -K\log(2\pi)/2 -1 +d^2-1 -(d^2-3/2)\log(d^2-1) - \frac{1}{12(d^2-1)+1} \nonumber \\
    & \leq d^2 - d^2\log(d^2-1) -2 - \log(2\pi)/2 + 3\log(d^2-1)/2 \label{eq:not_n_terms}
\end{align}
Combining \eqref{eq:n_terms} and \eqref{eq:not_n_terms}, we get the following bound for $\log(\MhatN/\MN)$,
\begin{align*}
    & 3(d^2-1)\log(N)/2 + 2d^2 + d^2\log(1+d^2) - d^2\log(d^2-1) + 3\log(d^2-1)/2 \\
    & \quad - 1 - \log(2\pi)/2 \\
    & \leq 3(d^2-1)\log(N)/2 + 2d^2 + d^2\log\{1+2/(d^2-1)\} + 3\log(d^2-1)/2 \\
    & \leq 3(d^2-1)\log(N)/2 + 5d^2
\end{align*} 
using that $d \geq 2$ and $\log\{1+2/(2^2-1)\} \leq 1$.
\end{proof}

For the following Lemmata, we define $\hat{q}_{k\ell} = \bkln/N$, $k,\ell=0,\dots,d-1$. The subscript $N$ is omitted whenever it is not necessary for the understanding.

\medskip
\begin{lemma}\label{lemma:sequential:mle}
    If $N \geq 3$, then 
    \begin{align*}
        \pr \Big\{ \max_{k,\ell = 0, \dots, d-1} |\qhatkl - \qkl| \leq \frac{40 d \log(N)^{1/2}}{N^{1/2}} \Big\} \geq 1 - 4\exp\Big\{ -127 d^{2} \log(N) \Big\}.
    \end{align*}
\end{lemma}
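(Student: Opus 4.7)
The natural approach is to introduce the ``honest'' bin counts $\btildeklN$ based on the probability integral transforms $(\Rtilden, \Stilden)$ of \eqref{eq:honest_ranks}, and to decompose
\begin{align*}
    \qhatkl - \qkl = \frac{\bklN - \btildeklN}{N} + \Big( \frac{\btildeklN}{N} - \qkl \Big).
\end{align*}
The second summand is the easy piece: $(\Rtilden, \Stilden)_{n=1}^N$ are i.i.d.\ with distribution $Q$, so $\btildeklN$ is binomial with parameters $N$ and $\qkl$. Hoeffding's inequality with deviation $t = 8 d \sqrt{\log(N)/N}$, combined with a union bound over the $d^2$ bins, yields failure probability at most $2 d^2 \exp(-128 d^2 \log N)$, which is within the target after absorbing the $d^2$ factor into the exponent for $d \geq 2$ and $N \geq 2$.

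For the first summand, the definitions \eqref{eq:randomized_ranks} and \eqref{eq:honest_ranks} give the deterministic comparison $|R_n - \Rtilden| \leq \|\hat F_n - F\|_{\infty}$ and $|S_n - \Stilden| \leq \|\hat G_n - G\|_{\infty}$, so the indicators $\indic{(R_n, S_n) \in \Bkld}$ and $\indic{(\Rtilden, \Stilden) \in \Bkld}$ can differ only when $(\Rtilden, \Stilden)$ lies within distance $\epsilon_n := \|\hat F_n - F\|_{\infty} \vee \|\hat G_n - G\|_{\infty}$ of $\partial \Bkld$. Applying the Dvoretzky--Kiefer--Wolfowitz inequality at each $n$ with deviation parameter of order $d$, together with a union bound over $n = 1, \dots, N$ and over the two CDFs, yields the event
\begin{align*}
    E = \Big\{ \epsilon_n \leq C d \sqrt{\log(N)/n} \text{ for all } n = 1, \dots, N \Big\}
\end{align*}
with probability at least $1 - \exp(-128 d^2 \log N)$ for a suitable absolute constant $C$. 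The linear-in-$d$ scaling is exactly what is needed, since the DKW exponent $2 n \epsilon_n^2 = 2 C^2 d^2 \log N$ must beat both the $\log N$ from the union over $n \leq N$ and the $128 d^2 \log N$ target.

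On $E$, the $Q$-mass of the $\epsilon_n$-neighbourhood of $\partial \Bkld$ is at most $8 \epsilon_n = O(d \sqrt{\log(N)/n})$, using the uniform marginals of $Q$ from Proposition \ref{prop:sequential_ranks}(ii): each of the four sides of the frame projects onto an interval of length $2\epsilon_n$ in one coordinate, whose $Q$-mass is $2\epsilon_n$ regardless of the side length $1/d$. Therefore on $E$,
\begin{align*}
    |\bklN - \btildeklN| \leq \sum_{n=1}^N \indic{(\Rtilden, \Stilden) \in \mathcal{N}_n},
\end{align*}
where $\mathcal{N}_n$ denotes the neighbourhood of $\partial \Bkld$ of width $C d \sqrt{\log(N)/n}$. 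The right-hand side is a sum of independent Bernoullis by i.i.d.\ of $(\Rtilden, \Stilden)_n$, with means summing to $O(d \sqrt{N \log N})$. A Bernstein bound and a union bound over the $d^2$ bins then give $|\bklN - \btildeklN|/N \leq O(d \sqrt{\log(N)/N})$ with the required probability.

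Combining the three high-probability events produces the claim with a constant of order $40$ after routine arithmetic. The main obstacle I anticipate is the dependence between $\epsilon_n$ and $(\Rtilden, \Stilden)$, since both are functions of $X_n, Y_n$; this is resolved by first passing to the event $E$ and replacing the data-dependent radius $\epsilon_n$ with the deterministic bound $C d \sqrt{\log(N)/n}$ before invoking independence of the $(\Rtilden, \Stilden)_n$ across $n$. A secondary nuisance is tracking the constants through the three exponential tails to reach precisely the factor $40d$, which is purely bookkeeping.
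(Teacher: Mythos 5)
Your decomposition into (rank counts versus oracle PIT counts) plus (oracle counts versus their means) is exactly the paper's, as is the toolkit: DKW for the first piece, Hoeffding and a union bound over the $d^2$ bins for the second. Your one structural deviation is real and, in one respect, cleaner: you pass to a uniform-in-$n$ DKW event $E$ with a \emph{deterministic} radius and then dominate the count discrepancy by a sum of independent Bernoulli indicators of deterministic boundary neighbourhoods of $(\tilde R_n,\tilde S_n)$, which sidesteps any independence question for the mismatch indicators. The paper instead works per $n$: it bounds $P(\text{mismatch at time } n)\leq 2d\epsilon_n+2e^{-2n\epsilon_n^2}$ with the small radius $\epsilon_n=\sqrt{\log(n)/n}$ (the DKW failure is absorbed additively into this probability), sums to get a mean of at most $\mu_N=8d\sqrt{N\log N}$, and then applies Hoeffding directly to the mismatch indicators.

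The concrete gap is quantitative: your route cannot produce the stated constant $40$, and this is not ``purely bookkeeping.'' To make $E$ fail with probability at most $\exp(-128d^2\log N)$ after the union bound over $n\leq N$ and the two CDFs, the DKW exponent $2n\epsilon_n^2=2C^2d^2\log N$ forces $C\geq 8$ or so; on $E$ the per-bin boundary neighbourhood then has $Q$-mass up to $8\epsilon_n\approx 64\,d\sqrt{\log(N)/n}$, so the means sum to about $16C\,d\sqrt{N\log N}\geq 128\,d\sqrt{N\log N}$ --- already more than three times the entire budget $40\,d\sqrt{N\log N}$, before adding the concentration deviation and the $8d\sqrt{N\log N}$ from the second piece. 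The paper's additive treatment keeps the radius free of $d$ and of any large constant, which is precisely how it lands on $5\mu_N/N=40\,d\sqrt{\log(N)}/\sqrt N$; your parameterization proves the same rate but with a constant on the order of $150$, while $40$ is used verbatim downstream (Lemma \ref{lemma:sequential:lrt}, Theorem \ref{theorem:stopping}). A secondary issue: your final Bernstein step has exponent of order $d\sqrt{N\log N}$, which beats the target $128d^2\log N$ only when $\sqrt N\gtrsim d\sqrt{\log N}$; that is not assumed (only $d,N\geq 2$), so you should either observe that the lemma is vacuous when $40d\sqrt{\log N}\geq\sqrt N$ (the bound then exceeds $1$) or use Hoeffding as the paper does, whose exponent $2t^2/N$ equals $128d^2\log N$ for every $d$ and $N$.
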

\begin{proof}[of Lemma \ref{lemma:sequential:mle}]
    Recall the definition of the probability integral transforms,
    \begin{align*}
        &\Rtilden = \Un F(\Xn) + (1-\Un)F(\Xn-), \\
        &\Stilden = \Vn G(\Yn) + (1-\Vn)G(\Yn-),
    \end{align*}
     and let $\qtildekl = \btildeklN / N$ denote the maximum likelihood estimator of $\qkl$ using the $\Rtilden$, $\Stilden$, $n = 1, \dots, N$ as observations.  Therefore, since
    \begin{align*}
        \max_{k,\ell = 0, \dots, d-1} |\qhatkl - \qkl| \leq \max_{k,\ell = 0, \dots, d-1} |\qhatkl - \qtildekl| + \max_{k,\ell = 0, \dots, d-1} |\qtildekl - \qkl|,
    \end{align*}
    it suffices to bound each of the two terms on the right hand side separately.
    
    To this end, define the random variables
    \begin{align*}
        \Kn = \ONE\Big(\lfloor d\Rtilden \rfloor \neq \lfloor d\Rn \rfloor\Big), \ \Ln = \ONE\Big(\lfloor d\Stilden \rfloor \neq \lfloor d\Sn \rfloor\Big), \ \Zn = \max(\Kn, \Ln).
    \end{align*}
    Thus, $\Zn$ is an indicator for whether the sequential rank and the probability integral transform fall into different bins.  Temporarily fix a value $n \in \Nbb$. For a sequence of positive numbers $\epsilonn$, define the events
    \begin{align*}
        \Ascr_{n} = \Big\{ \sup_{x \in \Rbb} |\Fhatn(x) - F(x)| \leq \epsilonn \Big\}.
    \end{align*}
    The Dvoretzky-Kiefer-Wolfowitz inequality implies that, for any $\epsilon > 0$, 
    \begin{align*}
        \pr\Big\{ \sup_{x \in \Rbb} |\Fhatn(x) - F(x)| > \epsilon \Big\} \leq 2 \exp(-2n\epsilon^{2}).
    \end{align*}
    On $\Ascr_{n}$, it follows that 
    \begin{align*}
        |\Rn - \Rtilden| 
        &= | \Un\{\Fhatn(\Xn) - F(\Xn)\} + (1 - \Un)\{\Fhatn(\Xn-) - F(\Xn-)\} | \\
        &\leq \Un | \Fhatn(\Xn) - F(\Xn) | + (1 - \Un) |\Fhatn(\Xn-) - F(\Xn-)| \\ 
        &\leq \epsilonn.
    \end{align*}
    Since $\Rtilden$ are uniform random variables, we have
    \begin{align*}
        \pr(\Kn = 1, \Ascr_{n}) 
        & = \sum_{k = 0}^{d-1} \pr\big\{ \Rn \in (k/d,(k+1)/d], \Rtilden \not\in (k/d,(k+1)/d], \Ascr_{n} \big\} \\
        & \leq \sum_{k = 0}^{d-1} \pr\big\{ \Rn \in (k/d,(k+1)/d], \\
        & \qquad\qquad\quad \Rtilden \in (k/d - \epsilonn,k/d] \cup ((k+1)/d, (k+1)/d + \epsilonn] \big\} \\
        &\leq 2 d \epsilonn,
    \end{align*}
    implying that $\pr(\Kn = 1) \leq 2d\epsilonn + 2 \exp(-2n\epsilonn^{2})$.
    Letting $\epsilonn = \{\log(n) / n\}^{1/2}$, then 
    \begin{align*}
        \pr(\Kn = 1) \leq 2 d \left\{\frac{\log(n)}{n}\right\}^{1/2} + \frac{2}{n^{2}} \leq 4d \left\{\frac{\log(n)}{n}\right\}^{1/2}
    \end{align*}
    and 
    \begin{align*}
        \sum_{n=1}^{N} \pr(\Kn = 1) & \leq \sum_{n=1}^N 4d \left\{\frac{\log(n)}{n}\right\}^{1/2} \\
        & \leq  4d\log(N)^{1/2} \sum_{n=1}^N n^{-1/2} \\
        & \leq 8d\{N\log(N)\}^{1/2},
    \end{align*}
    and we define $\mu_N = 8d\{N\log(N)\}^{1/2}$.    
    The same arguments can be applied to $\Ln$. By Proposition \ref{prop:sequential_ranks} (ii), it follows that $\Kn$ and $\Ln$ are independent Bernoulli random variables. Therefore, it follows from Hoeffding's inequality \citep[Theorem 2.2.6]{vershynin2018high} that
    \begin{align*}
        \pr \Big( \sum_{n=1}^{N} \Zn \geq 4 \muN \Big) & 
        \leq \pr\Big(\sum_{n=1}^{N} \Kn \geq 2 \muN\Big) + \pr\Big(\sum_{n=1}^{N} \Ln \geq 2 \muN\Big) \\
        & \leq 2\exp \Big( - \frac{2\muN^{2}}{N} \Big) \\
        & \leq \exp\Big\{ -128 d^{2} \log(N) \Big\}.
    \end{align*}
    Thus, the above implies that 
    \begin{align*}
        \pr \Big( \max_{k,\ell = 0, \dots, d-1} | \bklN - \btildeklN | \leq 4 \muN \Big) \geq 1 - 2\exp\{-128 d^{2} \log(N)\},
    \end{align*}
    or, equivalently by dividing by $N$, 
    \begin{align*}
        \pr \Big( \max_{k,\ell = 0, \dots, d-1} | \qhatkl - \qtildekl | \leq \frac{4 \muN}{N} \Big) \geq 1 - 2\exp\{-128 d^{2} \log(N)\}.
    \end{align*}
    Next, note that $\qtildekl$ is the sample mean of $N$ independent Bernoulli random variables with probability of success $\qkl$.  By Hoeffding's inequality and a union bound, we have
    \begin{align*}
        \pr \Big( \max_{k,\ell=0, \dots, d-1} | \qtildekl - \qkl | \geq \frac{\muN}{N} \Big) & \leq 2 \exp \Big\{ -\frac{2\muN^{2}}{N} + \log(d^{2})  \Big\} \\
        & \leq 2 \exp\{ -128 d^{2} \log(N) +\log(d^2)\} \\
        & \leq 2 \exp\{ -127 d^{2} \log(N) \},
    \end{align*}
    using $N \geq 3$ in the last inequality. Thus, combining our above calculations yields
    \begin{align*}
        \pr \Big( \max_{k,\ell = 0, \dots, d-1} |\qhatkl - \qkl| \leq \frac{40d \log(N)^{1/2}}{N^{1/2}} \Big) \geq 1 - 4\exp\Big\{ -127 d^{2} \log(N) \Big\},
    \end{align*}
    which finishes the proof.
\end{proof}

For the next lemma, we recall that we use the convention $0\log(0) = 0$.

\medskip
\begin{lemma}[Log-Lipschitz]\label{lemma:loglipschitz}
    For $f(x) = x\log(x)$,
    \begin{align*}
        |f(x) - f(y)| \leq -(y-x)\log(y-x)
    \end{align*}
    for all $0 \leq x \leq y \leq e^{-1}$.  
\end{lemma}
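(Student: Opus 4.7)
The plan is to first reduce the absolute value and then identify the target inequality with the nonnegativity of the binary entropy.

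Step one: remove the absolute value. On $[0,e^{-1}]$ we have $f'(x) = \log(x) + 1 \leq 0$, so $f$ is nonincreasing there. Hence for $0 \leq x \leq y \leq e^{-1}$,
\begin{equation*}
    |f(x) - f(y)| = f(x) - f(y) = x\log(x) - y\log(y),
\end{equation*}
with the convention $0\log(0) = 0$ covering the endpoint $x = 0$. Also, since $y - x \leq e^{-1} < 1$, the right-hand side $-(y-x)\log(y-x)$ is nonnegative, so the inequality is not vacuous.

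Step two: reduce to an entropy inequality. The claim is then equivalent to
\begin{equation*}
    y\log(y) - x\log(x) - (y-x)\log(y-x) \geq 0.
\end{equation*}
Assume $y > 0$ (the case $y = 0$ forces $x = 0$ and is trivial). Writing $\log(x) = \log(x/y) + \log(y)$ and $\log(y-x) = \log((y-x)/y) + \log(y)$, and using $x + (y-x) = y$, I get
\begin{equation*}
    y\log(y) - x\log(x) - (y-x)\log(y-x) = -x\log(x/y) - (y-x)\log((y-x)/y) = y \cdot H(x/y),
\end{equation*}
where $H(p) = -p\log(p) - (1-p)\log(1-p)$ is the binary entropy (in nats). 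Since $H(p) \geq 0$ for all $p \in [0,1]$ and $y \geq 0$, the right-hand side is nonnegative, which finishes the proof.

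There isn't really a hard step here; the only non-routine move is spotting that the quantity $y\log(y) - x\log(x) - (y-x)\log(y-x)$ is exactly $y\cdot H(x/y)$. An alternative is to define $g(x) = x\log(x) + (y-x)\log(y-x) - y\log(y)$ for fixed $y$, compute $g'(x) = \log(x/(y-x))$, note that $g$ is convex on $[0,y]$ with the unique minimum at $x = y/2$, and observe $g(0) = g(y) = 0$, so $g \leq 0$ throughout $[0,y]$; this gives the same inequality. The condition $y \leq e^{-1}$ is used \emph{only} to ensure monotonicity of $f$ on $[0,y]$ so that $|f(x) - f(y)| = f(x) - f(y)$, and (via $y - x < 1$) that the bound is a bound by a nonnegative quantity.
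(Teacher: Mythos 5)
Your proof is correct, and it shares the paper's overall skeleton: both arguments first use the fact that $f(x)=x\log(x)$ is nonincreasing on $[0,e^{-1}]$ to write $|f(x)-f(y)|=f(x)-f(y)$ (you verify this explicitly; the paper leaves it implicit), and both then reduce the claim to the superadditivity inequality $f(y)-f(x)-f(y-x)\geq 0$. Where you differ is in how that central inequality is established. The paper fixes $x$, sets $g(y)=f(y)-f(x)-f(y-x)$, and runs a calculus argument: $g(x)=0$ and $g'(y)=\log(y)-\log(y-x)\geq 0$, so $g\geq 0$ for $y\geq x$. You instead prove it by the exact algebraic identity
\begin{equation*}
    y\log(y)-x\log(x)-(y-x)\log(y-x)=y\,H(x/y),\qquad H(p)=-p\log(p)-(1-p)\log(1-p),
\end{equation*}
and invoke nonnegativity of the binary entropy. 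Your route avoids differentiation entirely, makes the equality cases ($x=0$ or $x=y$) transparent, and shows the inequality holds for all $0\leq x\leq y$ with the constraint $y\leq e^{-1}$ needed only to drop the absolute value and keep the bound nonnegative --- a point the paper does not spell out. Your sketched alternative (convexity of $x\mapsto x\log(x)+(y-x)\log(y-x)$ with equal boundary values) is essentially the paper's derivative argument with the roles of the variables swapped, so the two approaches are of comparable length and neither yields anything the other cannot.
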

\begin{proof}[of Lemma \ref{lemma:loglipschitz}]
Consider the function 
\begin{align*}
    g(y) = f(y) - f(x) - f(y-x).
\end{align*}
Note that $g(x) = 0$ and 
\begin{align*}
    g'(y) = f'(y) - f'(y-x) = \log(y) + 1 - \log(y-x) - 1 = \log(y) - \log(y-x) \geq 0
\end{align*}
for all $y \geq x$.  The last inequality follows from the monotonicity of $\log(\cdot)$.  Therefore, it follows that 
\begin{align*}
    f(y) - f(x) - f(y-x) \geq 0
\end{align*}
for all $0 \leq x \leq y$, implying that 
\begin{align*}
    |f(x) - f(y)| = f(x) - f(y) \leq -f(y-x) = -(y-x)\log(y-x)
\end{align*}
for $0 \leq x \leq y \leq e^{-1}$.
\end{proof}

\medskip
\begin{lemma}\label{lemma:sequential:lrt}
    If $N \geq 3$, $d \geq 2$, and $40d\log(N)^{1/2} \leq N^{1/2} e^{-1}$, then 
    \begin{align*}
        \pr \Big[ \log(\MhatN) \geq N \kl(\Q || \Ucal) - 40 d^{3} \{N\log(N)\}^{1/2} \Big] \geq 1 - 4\exp\Big\{ -127 d^{2} \log(N) \Big\}.
    \end{align*}
    If, in addition, $\kl(\Q || \Ucal) < 1/d^{4}$, then  
    \begin{align*}
        & \pr \Big( \log(\MhatN) \geq \Big[ \{N \kl(\Q || \Ucal)\}^{1/2} - \frac{40 d^{5} \log(N)^{1/2}}{1 - 1/\sqrt{2}} \Big] \{N \kl(\Q || \Ucal)\}^{1/2} \Big) \\
        & \quad \geq 1 - 4\exp\Big\{ -127 d^{2} \log(N) \Big\}.
    \end{align*}
\end{lemma}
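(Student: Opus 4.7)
The plan is to realize that $\log(\MhatN) = N\sum_{k,\ell}\hat q_{k\ell}\log(d^2\hat q_{k\ell}) = N\cdot\kl(\hat Q\|\Ucal)$ with $\hat Q$ the empirical distribution on the $d^2$ grid cells, $\hat q_{k\ell} = \bklN/N$, so both inequalities reduce to comparing $\kl(\hat Q\|\Ucal)$ with $\kl(Q\|\Ucal)$. I will condition throughout on the event
\[
E \defined \{\max_{k,\ell}|\hat q_{k\ell}-q_{k\ell}| \leq \delta_N\},\qquad \delta_N \defined 40d\sqrt{\log(N)/N},
\]
which by Lemma \ref{lemma:sequential:mle} has probability at least $1-4\exp(-128d^2\log N)$; the hypothesis $40d\sqrt{\log N} \leq \sqrt N\,e^{-1}$ ensures $\delta_N \leq e^{-1}$, so that Lemma \ref{lemma:loglipschitz} applies pointwise.

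For the first bound I will decompose $\kl(\hat Q\|\Ucal)-\kl(Q\|\Ucal) = \sum_{k,\ell}[f(\hat q_{k\ell})-f(q_{k\ell})]$ with $f(x)=x\log x$ and bound each summand on $E$. When both arguments lie in $[0,e^{-1}]$, Lemma \ref{lemma:loglipschitz} combined with the monotonicity of $x\mapsto -x\log x$ on $[0,e^{-1}]$ gives $|f(\hat q_{k\ell}) - f(q_{k\ell})| \leq -\delta_N\log\delta_N$. When at least one argument exceeds $e^{-1}$, $f$ is $1$-Lipschitz on $[e^{-1},1]$ (since $|f'(x)| = |1+\log x| \leq 1$ there), and a short case split at $e^{-1}$ yields a bound of the same order. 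Summing over the $d^2$ bins and multiplying by $N$ gives $Nd^2\delta_N\log(1/\delta_N) = 40d^3\sqrt{N\log N}\cdot\log(1/\delta_N)$, and $\log(1/\delta_N)\leq \tfrac12\log N$ produces exactly $20d^3\sqrt N\log^{3/2}(N)$.

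For the sharper second bound I will use the chain-rule identity
\[
\kl(\hat Q\|\Ucal) - \kl(Q\|\Ucal) = \kl(\hat Q\|Q) + \sum_{k,\ell}(\hat q_{k\ell}-q_{k\ell})\log(d^2q_{k\ell}),
\]
obtained by expanding both KL divergences and using $\sum(\hat q_{k\ell}-q_{k\ell})=0$. Dropping the nonnegative first term, I need to lower-bound the residual linear functional on $E$. The hypothesis $\kl(Q\|\Ucal)<1/d^4$ and Pinsker give $\|Q-\Ucal\|_1\leq\sqrt{2\kl(Q\|\Ucal)}<\sqrt 2/d^2$, so each $q_{k\ell}$ is close to $1/d^2$ and $|\log(d^2 q_{k\ell})|$ is small. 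On bins where $d^2 q_{k\ell}$ lies in a linearizable window (e.g.\ $[1/\sqrt 2,\sqrt 2]$), the inequality $|\log(d^2 q_{k\ell})|\lesssim d^2|q_{k\ell}-1/d^2|$ combined via Cauchy--Schwarz with $\sum(\hat q_{k\ell}-q_{k\ell})^2\leq d^2\delta_N^2$ and $\sum(q_{k\ell}-1/d^2)^2\leq 2\kl(Q\|\Ucal)$ gives an error of order $d^3\delta_N\sqrt{\kl(Q\|\Ucal)}$. The remaining bins are handled by a dyadic peeling over shells $\{(k,\ell):|d^2q_{k\ell}-1|\in[(1/\sqrt 2)^{j+1},(1/\sqrt 2)^j]\}$: on shell $j$ the factor $(1/\sqrt 2)^j$ inside $|\log|$ is traded against the $\ell^1$-mass of that shell, and the geometric sum in $1/\sqrt 2$ produces the $1/(1-1/\sqrt 2)$ denominator. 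Multiplying by $N$ and rearranging gives the stated quadratic-in-$\sqrt{N\kl(Q\|\Ucal)}$ form.

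The hard part will be the second part's dyadic peeling and the algebra that repackages the error as a single $C\sqrt{N\kl(Q\|\Ucal)}$ term rather than an unwieldy sum of $N\kl(Q\|\Ucal)$ and $\sqrt{N\kl(Q\|\Ucal)}$ contributions. The $d^5$ factor in the numerator appears to reflect the combined cost of peeling and Cauchy--Schwarz, which is consistent with the authors' later remark that the polynomial dependence on $d$ is likely an artifact of the proof technique. A secondary subtlety is the handling of bins with $q_{k\ell}=0$, where the chain-rule identity must be interpreted with care; those contributions can be absorbed into the peeling step.
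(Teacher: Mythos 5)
Your reduction $\log(\MhatN)=N\sum_{k,\ell}\hat{q}_{k\ell}\log(d^{2}\hat{q}_{k\ell})$, the restriction to the event of Lemma \ref{lemma:sequential:mle}, and the use of Lemma \ref{lemma:loglipschitz} for the first bound are exactly the paper's route, and your case split at $e^{-1}$ (Lipschitz constant $1$ of $x\log x$ on $[e^{-1},1]$) is in fact a more careful treatment of the range restriction than the paper gives; with the observation that the absolute difference of two quantities that both exceed $f(e^{-1})$ is at most the larger of their excesses, the per-bin bound $-\delta_N\log\delta_N$ and hence the constant $20$ are recoverable, so the first part is fine. For the second bound, your chain-rule identity with $\kl(\hat{Q}\|Q)\geq 0$ dropped is also precisely what the paper does (written there as $\log(\MhatN)\geq N\sum\hat{q}_{k\ell}\log(d^{2}q_{k\ell})$).

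The gap is in how you propose to control the linear term $\sum_{k,\ell}(\hat{q}_{k\ell}-q_{k\ell})\log(d^{2}q_{k\ell})$. First, the dyadic peeling is unnecessary: under $\kl(Q\|\Ucal)<1/d^{4}$, Pinsker gives $\max_{k,\ell}|d^{2}q_{k\ell}-1|\leq d^{2}\sqrt{\kl(Q\|\Ucal)/2}<1/\sqrt{2}$, so every bin lies in a linearizable window and no bin has $q_{k\ell}=0$; the paper just applies $x/(1+x)\leq\log(1+x)\leq x$ to get the uniform bound $|\log(d^{2}q_{k\ell})|\leq d^{2}\sqrt{\kl(Q\|\Ucal)/2}\,/\,(1-d^{2}\sqrt{\kl(Q\|\Ucal)/2})$ and then an $\ell^{1}$--$\ell^{\infty}$ (H\"older) bound over the $d^{2}$ bins with the $\ell^{\infty}$ control from Lemma \ref{lemma:sequential:mle}. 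In particular the $1-1/\sqrt{2}$ in the statement comes from this denominator, not from a geometric series. Second, the peeling as you sketch it would not deliver the stated bound: if the log factor $(1/\sqrt{2})^{j}$ on shell $j$ is traded only against the $\ell^{1}$-mass of that shell, then the number of bins the constraint $\sum_{k,\ell}(q_{k\ell}-1/d^{2})^{2}\leq 2\kl(Q\|\Ucal)$ permits in shell $j$ grows like $2^{j}$, so the shell contributions scale like $(\sqrt{2})^{j}$ and the sum over fine shells does not produce a convergent series carrying the needed $\sqrt{\kl(Q\|\Ucal)}$ factor; if instead you bound the $\ell^{1}$-mass of each shell by the trivial $d^{2}\delta_N$, the geometric sum converges but the $\sqrt{\kl(Q\|\Ucal)}$ factor is lost and the result cannot be repackaged into the stated quadratic form in $\sqrt{N\kl(Q\|\Ucal)}$. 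Rescuing the peeling requires taking the minimum of these two per-shell bounds and splitting at the crossover scale --- an idea absent from your sketch --- or simply using the uniform bound, which is what the paper does. Your Cauchy--Schwarz estimate on the central window is sound (and would even yield $d^{4}$ in place of $d^{5}$), but as written the treatment of the remaining bins is the step that fails.
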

\begin{proof}[of Lemma \ref{lemma:sequential:lrt}]
    Indeed, by definition, we have 
    \begin{align*}
        \log(\MhatN) = N \sum_{k,\ell=0}^{d-1} \qhatkl \log(\qhatkl d^{2}).
    \end{align*}
    From Lemma \ref{lemma:sequential:mle}, we have 
    \begin{align*}
        \pr \Big\{ \max_{k,\ell = 0, \dots, d-1} |\qhatkl - \qkl| \leq \frac{40 d \log(N)^{1/2}}{N^{1/2}} \Big\} \geq 1 - 4\exp\Big\{ -127 d^{2} \log(N) \Big\}.
    \end{align*}
    For the remainder of the proof, we restrict our attention to the above event.
    
    Now, letting $f(x) = x\log(x)$, 
    \begin{align*}
        \log(\MhatN) 
        &= N \sum_{k,\ell=0}^{d-1} f(\qhatkl) + N\log(d^{2}) \\ 
        &= N \sum_{k,\ell=0}^{d-1} f(\qkl) + N\log(d^{2}) + N \sum_{k,\ell=0}^{d-1} \Big\{ f(\qtildekl) - f(\qkl) \Big\} \\ 
        &\geq N \kl(\Q || \Ucal) - N \sum_{k,\ell=0}^{d-1} |f(\qtildekl - \qkl)| \\ 
        &\geq N \kl(\Q || \Ucal) + 40 d^{3} \{N \log(N)\}^{1/2} \log \Big\{ \frac{40d \log(N)^{1/2}}{N^{1/2}} \Big\} \\ 
        &\geq N \kl(\Q || \Ucal) - 40 d^{3} \{N\log(N)\}^{1/2},
    \end{align*}
    where we have applied Lemma \ref{lemma:loglipschitz} in the first inequality and the assumption $40d\log(N)^{1/2} \leq N^{1/2} e^{-1}$ in the second.  This yields the first claim. 

    For the second case, Pinsker's inequality implies that 
    \begin{align*}
        \max_{k,\ell=0,\dots,d-1} \Big| \qkl - \frac{1}{d^{2}} \Big| \leq \left\{\frac{1}{2} \kl(\Q || \Ucal )\right\}^{1/2},
    \end{align*}
    or, equivalently,
    \begin{align*}
        \max_{k,\ell=0,\dots,d-1} | \qkl d^{2} - 1 | \leq d^{2} \left\{\frac{1}{2} \kl(\Q || \Ucal )\right\}^{1/2}.
    \end{align*}
    Therefore, in the second case, the $\qkl$ are uniformly bounded away from zero.  Then, 
    \begin{align*}
        \log(\MhatN) 
        &\geq N \sum_{k,\ell=0}^{d-1} \qhatkl \log(\qkl d^{2}) \\ 
        &= N \sum_{k,\ell=0}^{d-1} \qkl \log(\qkl d^{2}) + N \sum_{k,\ell=0}^{d-1} \Big\{ (\qhatkl - \qkl) \log(\qkl d^{2}) \Big\} \\ 
        &\geq N \kl(\Q || \Ucal) - d^{2} N \Big(\max_{k,\ell=0,\dots, d-1} |\qhatkl - \qkl| \Big) \Big\{\max_{k,\ell=0,\dots, d-1} |\log(\qkl d^{2})| \Big\}.
    \end{align*}
    Using the inequality $x/(1+x) \leq \log(1+x) \leq x$ with $x = \qkl d^{2} - 1$, we have 
    \begin{align*}
        \max_{k,\ell=0,\dots, 1} |\log(\qkl d^{2})| 
        \leq \max\Big( \Big| \frac{\qkl d^{2} - 1}{\qkl d^{2}} \Big|, |\qkl d^{2} - 1| \Big)
        \leq \frac{d^{2} \{\kl(\Q || \Ucal) / 2\}^{1/2}}{1 - d^{2} \{\kl(\Q || \Ucal) / 2\}^{1/2}}.
    \end{align*}
    Combining our above calculations and using that $\kl(\Q || \Ucal) < d^{-4}$ yields
    \begin{align*}
        \log(\MhatN) 
        &\geq N \kl(\Q || \Ucal) - 40 d^{5} \{N \log(N)\}^{1/2} \frac{\{\kl(\Q || \Ucal) / 2\}^{1/2}}{1 - d^{2} \{\kl(\Q || \Ucal)/2\}^{1/2}} \\ 
        &\geq \Big[ \{N \kl(\Q || \Ucal)\}^{1/2} - \frac{40 d^{5} \log(N)^{1/2}}{1 - 2^{-1/2}} \Big] \{N \kl(\Q || \Ucal)\}^{1/2}.
    \end{align*}
    This finishes the proof.
\end{proof}

\medskip
\begin{proof}[of Theorem 2]
    Note that 
    \begin{align*}
        \pr(\tau \leq N) 
        &\geq \pr \Big( \MN \geq \frac{1}{\alpha} \Big) \\ 
        &\geq \pr \Big\{ \log\Big( \frac{\MN}{\MhatN} \Big) + \log(\MhatN) \geq -\log(\alpha) \Big\} \\ 
        &\geq \pr \Big\{ \log(\MhatN) \geq -\log(\alpha) - \log\Big( \frac{\MN}{\MhatN} \Big) \Big\}.
    \end{align*}
    The result now follows by applying Lemmata \ref{lemma:sequential:full:lrt} and \ref{lemma:sequential:lrt}, using the fact that
    \[
        3(d^2-1)\log(N)/2 + 5d^2  \leq 7d^3\{N\log(N)\}^{1/2}
    \]
    for the first part.
\end{proof}

\subsection{Proof of Proposition 4}
\begin{proof}
    For $\mathcal{M}_1$, the result holds because $M_{N, 1} \geq w_{d^*}M_N^{d^*}$. For $\mathcal{M}_0$, note that 
    \begin{align*}
        \frac{d^{2}}{d^{2} + n - 1} \leq f_{n}^{(d)} \leq d^{2}
    \end{align*}
    almost surely. Define $f_n^{(0)} \equiv 1$.  Then, by Jensen's inequality, we have
    \begin{align*}
        \log(M_{N,0}) 
        = \sum_{n=1}^N \log\Big(\sum_{d = 0}^{\infty}w_{d}f_n^{(d)}\Big)
        & \geq \sum_{n=1}^{N} \sum_{d = 0}^{\infty} w_{d} \log\Big( f_{n}^{(d)} \Big) \\
        & = \sum_{d = 0}^{\infty} w_{d} \sum_{n=1}^{N} \log\Big( f_{n}^{(d)} \Big) \\
        & = \sum_{d = 0}^{\infty} w_{d} \log\Big( M_{N}^{(d)} \Big).
    \end{align*}
    Now, temporarily fix $d \in \Nbb$.  Then, 
    \begin{align*}
        \log\Big( M_{N}^{(d)} \Big) 
        = \log\Big( \frac{M_{N}^{(d)}}{\Mhat_{N}^{(d)}} \Big) + \log\Big( \Mhat_{N}^{(d)} \Big)
    \end{align*}
    Considering each of the two terms separately, Lemma \ref{lemma:sequential:full:lrt} implies
    \begin{align*}
        \log\Big( \frac{M_{N}^{(d)}}{\Mhat_{N}^{(d)}} \Big) \geq -3(d^2-1) \log(N)/2 -5 d^{2}.
    \end{align*}
    Moreover,
    \begin{align*}
        \log\Big( \Mhat_{N}^{(d)} \Big) 
        = N \sum_{k,\ell=0}^{d-1} \frac{b_{k\ell,N}^{(d)}}{N} \log\Big(\frac{b_{k\ell, N}^{(d)}}{N} d^{2} \Big)
        \geq 0
    \end{align*}
    since the summation is the Kullback-Leibler divergence between the empirical distribution and the uniform distribution.  Hence, there exists a constant $c > 0$ such that 
    \begin{align*}
        \sum_{d\neq \dstar} w_{d} \log\Big( M_{N}^{(d)} \Big) 
        \geq - \sum_{d\neq \dstar} w_{d} \Big\{ 3(d^2-1)\log(N)/2 + 5 d^{2}\Big\} 
        \geq -c \log(N).
    \end{align*}
    Combining the above calculations yields
    \begin{align*}
        \log(M_{N,0}) \geq w_{\dstar} \log\Big( M_{N}^{(\dstar)} \Big) + \sum_{d\neq \dstar}^{\infty} w_{d} \log\Big( M_{N}^{(d)} \Big) 
        \geq w_{\dstar} \log\Big( M_{N}^{(\dstar)} \Big) - c \log(N).
    \end{align*}
    Now, letting 
    \begin{align*}
        \mathcal{N} = \Big\{ N \in \Nbb : N \kl(\Q^{(\dstar)} || \Ucal) - 44 (\dstar)^{3} \{N\log(N)\}^{1/2}  \geq \frac{-\log(\alpha) + c \log(N)}{w_{\dstar}} \Big\},
    \end{align*}
    it follows that $\Ebb(\tau)$ is equal to
    \begin{align*}
        \sum_{N \in \mathcal{N}} \pr(\tau \geq N) + \sum_{N \in \mathcal{N}^\comp} \pr(\tau \geq N) 
        \leq |\mathcal{N}| + \sum_{N \in \mathcal{N}^\comp} 4 \exp\Big\{ -127 (\dstar)^{2} \log(N) \Big\} < \infty,
    \end{align*}
    which finishes the proof.
\end{proof}

\section{Sequential adaptation of the BET} \label{sec:sequential_bet}
As already mentioned, our binning approach for constructing the test martingale is inspired by the binary expansion test by \citet{Zhang2019}. In this section, we explain how the BET can be unified with our methods for a slightly different version of our sequential test. Since the BET requires a grid with size of a power of $2$, we assume $d = 2^k$ throughout this section.

The BET leverages the observation that the problem of independence testing can be reduced to simpler tests for so-called cross-interactions. We refer to \citet{Zhang2019} for the detailed derivation and the interpretation of these cross-interactions as interactions between the binary number expansions of $F(X)$ and $G(Y)$. For our purpose here, the following simplified description is sufficient: if one divides the $d^2$ cells $\Bkld$, $k,l=0,\dots,d-1$, into two groups
\begin{equation} \label{eq:cross_interaction}
    A_1^d = \bigcup_{k,\ell \in \mathcal{S}} \Bkld, \quad A_2^d = [0,1]^2\setminus A_1,
\end{equation}
with an index set $S \subset \{0,\dots,d-1\}^2$ such that $\#S = d^2/2$, then $(R_n, S_n)$ lies in $A_1$ and $A_2$ with probability $1/2$ each under $H_0$. Clearly, not every division $A_1, A_2$ is suitable for testing $H_0$; for example, by taking~$\mathcal{S} = \{(k,\ell)\colon k = 0, \dots, d^2/2-1\}$ one could not detect any dependence between $R_n$ and $S_n$. \citet[Theorem 4.1]{Zhang2019} shows that for testing independence on a $(d\times d)$-discretization of $[0,1]^2$, it is sufficient to consider only $(d-1)^2$ cross-interactions, each corresponding to a dichotomization of the form \eqref{eq:cross_interaction}; Figure \ref{fig:illustration_seq_bet} illustrates these for $d = 4$.

\begin{figure}[t]
    \centering
    \includegraphics[width=\textwidth]{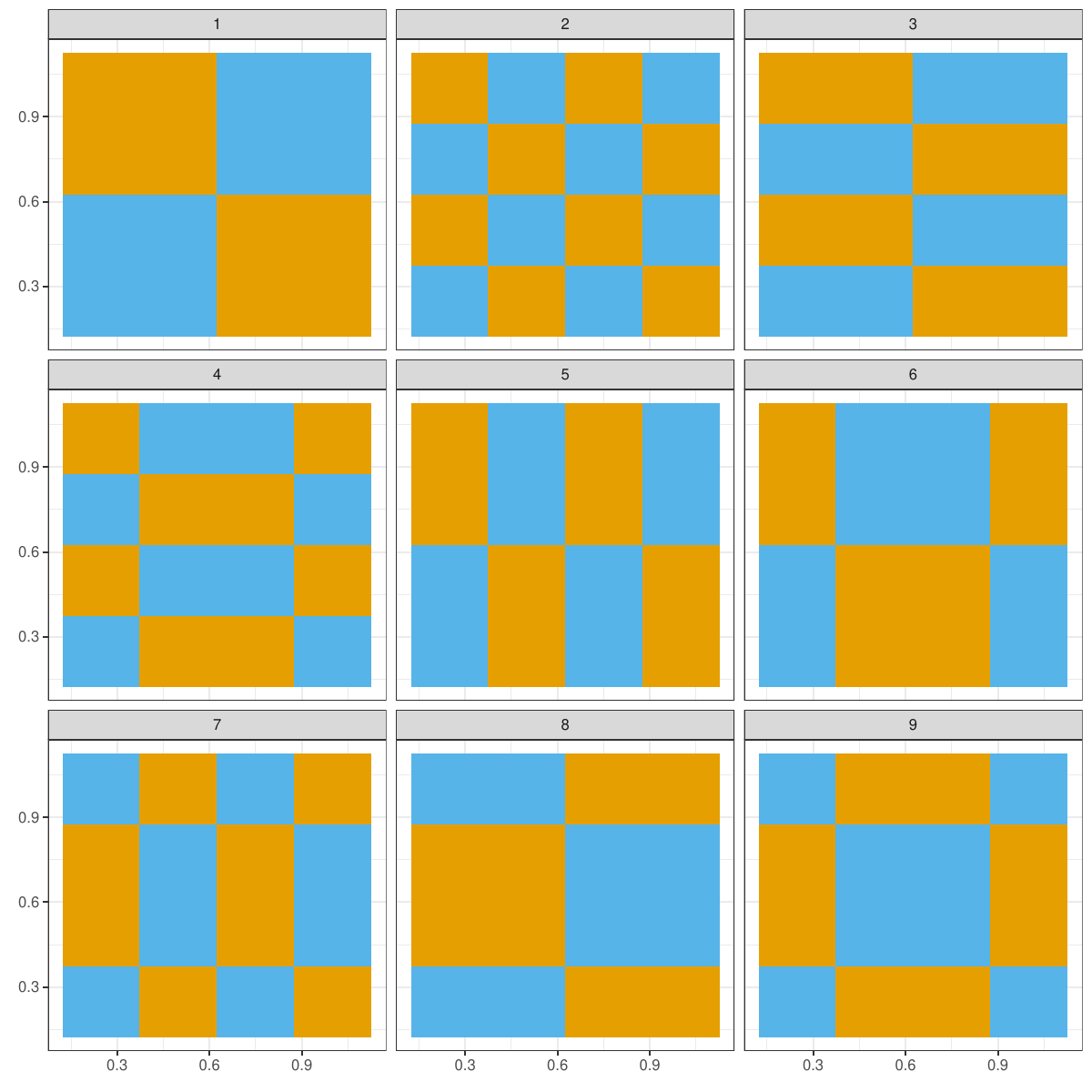}
    \caption{Interaction regions for the BET with $d = 4$.  \label{fig:illustration_seq_bet}}
\end{figure}

A test martingale for the hypothesis $P\{(R_n, S_n) \in A_1^{d}\} = 1/2$ can be constructed with exactly the same approach as in Section \ref{sec:construction}, replacing the $\Bkld$, $k,\ell=0,\dots,d-1$, by the two sets $A_1^{d}, A_2^{d}$. This yields $(d-1)^2$ test martingales $\mathcal{M}^{d,m} = (M_n^{(d,m)})_{n\in\mathbb{N}}$, $m = 1, \dots (d-1)^2$, one for each interaction encoded by a certain index set $\mathcal{S}_m$ in Equation \eqref{eq:cross_interaction}. The exact description of these index sets is in Section 3.3 of \citet{Zhang2019}, but it is not crucial for the understanding of our sequential version of the BET. The test martingales for all interactions can then be averaged pointwise to $M_N = \sum_{m=1}^{(d-1)^2}M_N^{(d,m)}/(d-1)^2$.

This sequential adaptation of the BET is guaranteed to have power if $\kl(\Qd\|\Ucal) > 0$. More precisely, in the proof of Theorem 2, only the number of bins on $[0,1]^2$ and the bin probabilities are relevant, but not the shape of the bins. Hence the statement of Theorem 2 also applies to the martingales $\mathcal{M}^{(d,m)}$, with the following adjustments. The Kullback-Leibler divergence $\kl(\Qd\|\Ucal)$ is replaced by the Kullback-Leibler divergence $\kl(\Q^{(d,m)}\|\Ucal)$ between the distribution with density
\begin{align*}
    q^{(d,m)}(r,s) = 2Q(A_1^{d,m})^{\mathbbm{1}\{(r,s)\in A_2^d\}}Q(A_2^{d,m})^{\mathbbm{1}\{(r,s)\in A_1^{d,m}\}}, \ (r,s)\in [0,1]^2;
\end{align*}
and the uniform distribution $\mathcal{U}$; and the discretization depth $d$ is replaced by $2$, since $\mathcal{M}^{d,m}$ is based on only two bins $A_1^{d,m}, A_2^{d,m}$.

Compared to our original approach, the advantage of the sequential BET is that each of the martingales $\mathcal{M}^{d,m}$ tests a simpler binomial hypothesis, which can be rejected with less data if the interaction captures the dependence between $X$ and $Y$. However, unless one has prior knowledge which interactions might be relevant, each martingale $\mathcal{M}^{d,m}$ only has weight $(d-1)^{-2}$ in the mixture $\sum_{m=1}^{(d-1)^2}M_N^{(d,m)}/(d-1)^2$. Moreover, it follows from the Jensen's inequality that $\kl(\Q^{(d,m)}\|\Ucal) \leq \kl(\Qd\|\Ucal)$; i.e.~the growth rate is smaller than that of our original test. Hence it depends on the specific application whether this or our original approach has more power.

\section{More implementation details} 
\subsection{Sequential rank computation} \label{sec:efficient_implementation}

The efficient computation of the normalized sequential ranks $\hat{F}_n(X_n)$ requires a data structure that allows to insert and order observations efficiently. Our implementation is based on the tree structure in the GNU Policy-Based Data Structures (\url{https://gcc.gnu.org/onlinedocs/libstdc++/ext/pb_ds/}), which allows inserting $X_{n+1}$ into the container for $X_1, \dots, X_n$ and querying its position with a complexity of $\log(n)$. Hence the overall complexity for computing the sequential ranks is of order $\mathcal{O}\{n\log(n)\}$. For a discretization depth $d$, the complexity for the computation of $M_n^{(d)}$ is $\mathcal{O}\{d^2n\log(n)\}$ if the correction for uniform marginals is applied, and $\mathcal{O}\{\log(d)n\log(n)\}$ for the estimator without this correction. In the latter case, the $\log(d)$ complexity is for finding the position of $(R_n, S_n)$ in the bins on $[0,1]^2$.

\subsection{Derandomization for distributions with atoms} \label{sec:derandomization}
The derandomization described in Section \ref{sec:implementation} relies on the fact that $R_n = \hat{F}_n(X_n)-U_n/n$, $S_n = \hat{G}_n(Y_n)-U_n/n$ in the case of continuous marginal; see also Remark \ref{remark:filtration}. Derandomization is more delicate if the marginal distributions of $(X,Y)$ are not continuous, because the variables $(U_n)_{n\in\mathbb{N}}$, $(V_n)_{n\in\mathbb{N}}$ are required to randomly break ties in the ranks, and a different strategy has to be employed.

A natural idea would be to simulate $B$ processes of randomization variables $\{U_n(b), V_n(b)\}_{n\in\mathbb{N}}$, $b = 1, \dots, B$, and then average the 
the corresponding test martingales $\{M_n(b)\}_{n\in\mathbb{N}}$ at each time point; here $\{M_n(b)\}_{n\in\mathbb{N}}$ can be a martingale for a single discretization $d$, or also one of the aggregated versions from Section \ref{sec:combination}. Unfortunately, this does not yield a martingale in general, because the processes $\{M_n(b)\}_{n\in\mathbb{N}}$ are measurable with respect to different filtrations. A valid method is to transform the test martingales to anytime-valid p-values, i.e.~$p_N(b) = \{\max{n = 1, \dots, N}M_n(b)\}^{-1}$, as in \citet{HenziZiegel2022} and \citet{Arnold2021}. Since $\{M_n(b)\}_{n\in\mathbb{N}}$, $b = 1, \dots, B$, are based on the same underlying data and only use different randomization, one can expect $p_N(b)$, $b = 1, \dots, B$, to be positively dependent. For such situations, merging by arithmetic or geometric mean,
\[
    p_N^A = \frac{2}{B}\sum_{b=1}^B \left\{\max_{n = 1, \dots, N}M_n(b)\right\}^{-1}, \quad p_N^G = \exp(1)\prod_{b=1}^B\left\{\max_{n = 1, \dots, N}M_n(b)\right\}^{-1/B},
\]
are powerful combination strategies \citep{Vovk2020}, which both guarantee $P(p_n^\xi \leq \alpha \text{ for some } n \in \mathbb{N}) \leq \alpha$.

\section{Additional figures, tables, and simulation results} \label{sec:moresim}
\subsection{Illustration of theoretical properties}
To illustrate the theoretical properties of our method, we display the test martingales for a single realization of the Linear simulation example from Table \ref{tab:simulations} with $l = 5$. Figure \ref{fig:illustration_small_large_n} shows test martingales for $d = 2, 4, 8, 16$, as well as their combinations with $\eta = 0,1$. In Figure \ref{fig:kl_growth_rate}, we depict the empirical growth rates $\sum_{i=1}^n \log\{f_n^{(d)}(R_i, S_i)\}/n$ and the KL-divergence $\kl(\Qd\|\Ucal)$ for the same example, with noise levels $l = 1, 5, 9$. As predicted by our theory, for large $n$, the test martingale with $d = 16$ grows at the fastest rate, with growth rate converging to $\kl(Q^{(16)}\|\Ucal)$ as $n$ increases. The martingales with smaller $d$ start to grow at smaller sample sizes, but have a smaller growth rate. The combination method with $\eta = 0$ is superior to all other martingales at small sample sizes, but is eventually exceeded by $(M_n^{(4)})_{n\in\mathbb{N}}$ and by the average of the martingales ($\eta = 1$), but only at a very large sample size. Density estimates for the distribution of the sequential ranks are shown in Figure \ref{fig:illustration_densities}. As can be seen in Figure \ref{fig:illustration_simple_sinkhorn}, in this particular simulation the correction with Sinkhorn's algorithm is uniformly better than the simple estimators.

\begin{figure}[b]
    \centering
    \includegraphics[width=\textwidth]{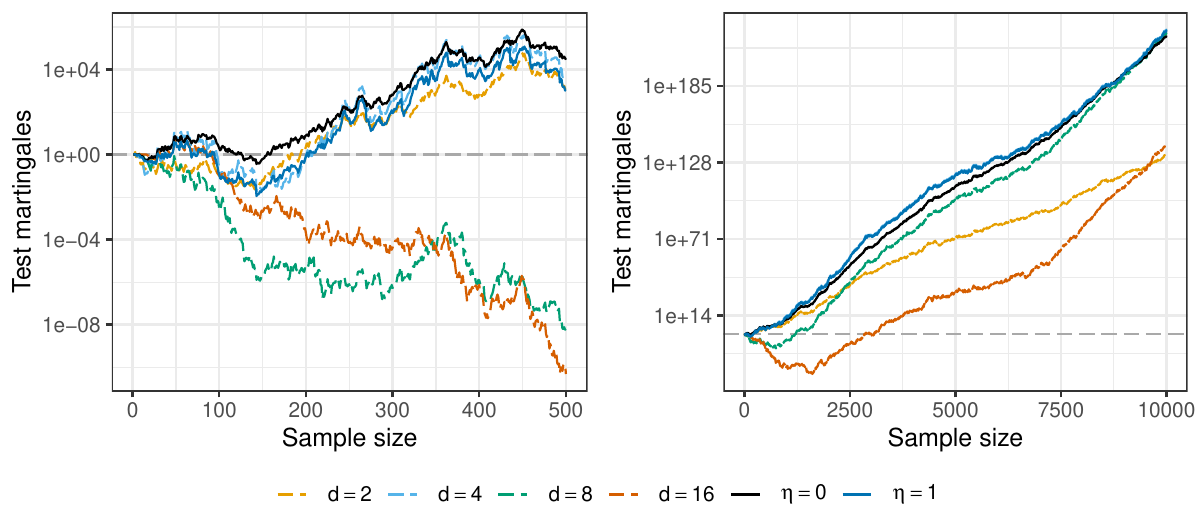}
    \caption{Test martingales and their combinations for the Linear simulation example with $l = 5$, for sample sizes up to $500$ (left panel) and $10\,000$ (right panel). \label{fig:illustration_small_large_n}}
\end{figure}

\begin{figure}[b]
    \centering
    \includegraphics[width=\textwidth]{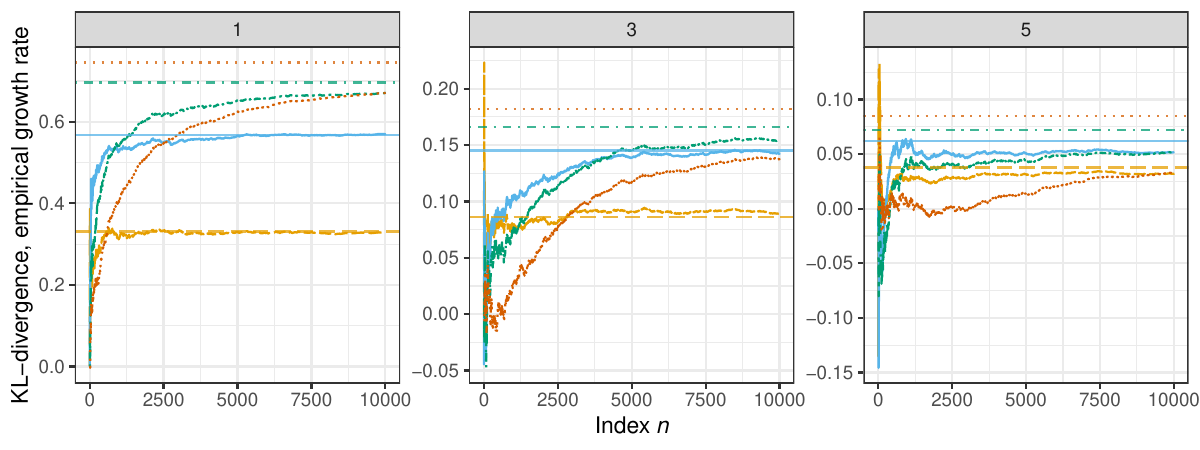}
    \caption{Empirical growth rate $\sum_{i=1}^n \log\{f_n^{(d)}(R_i, S_i)\}/n$ of realizations of the test martingales for the Linear example, together with the KL-divergence $\kl(\Qd\|\Ucal)$ as horizontal lines, for $l = 1, 5, 9$ (panels) and $d = 2, 4, 8, 16$ (dashed, solid, dot-dashed, dotted lines). \label{fig:kl_growth_rate}}
\end{figure}

\begin{figure}[b]
    \centering
    \includegraphics[width=\textwidth]{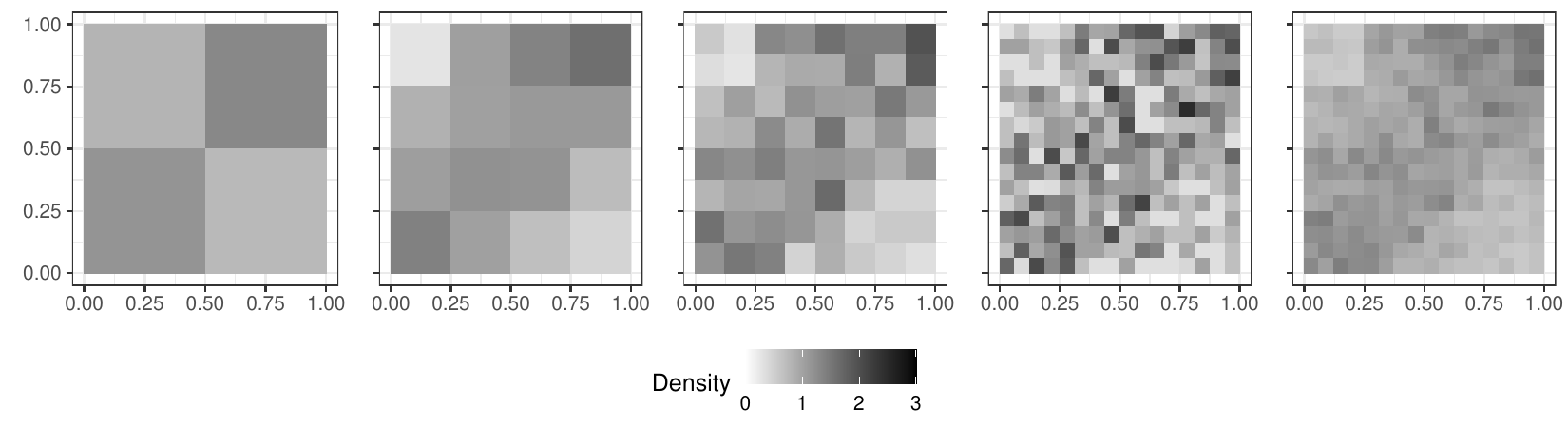}
    \caption{Density estimates for the Linear simulation example with $l = 5$, at $n = 500$. Then panels show $f_{500}^{(d)}$, $d = 2, 4, 8, 16$, as well as the aggregated density $\sum_{k=0}^4 f_{500}^{(2^k)}/5$, from left to right. \label{fig:illustration_densities}}
\end{figure}

\begin{figure}[b]
    \centering
    \includegraphics[width=\textwidth]{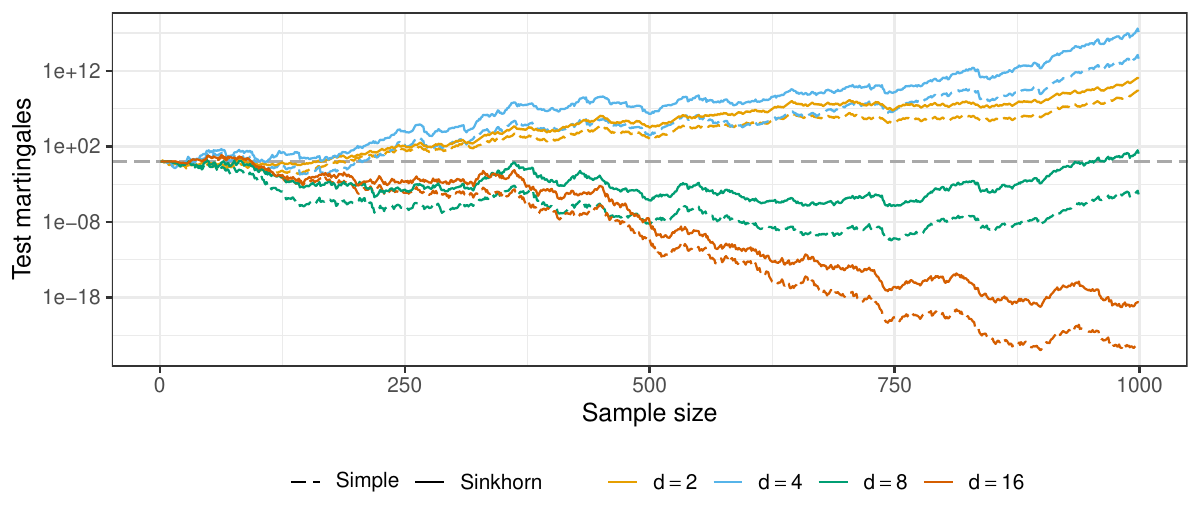}
    \caption{Test martingales as in Figure \ref{fig:illustration_small_large_n} for sample size up to $1000$, with ``Simple'' referring to the uncorrected martingales and ``Sinkhorn'' to the corrected version based on Sinkhorn's Algorithm. \label{fig:illustration_simple_sinkhorn}}
\end{figure}

\subsection{More simulation results}
Figure \ref{fig:rejection_rates_supplement} compares the rejection rates for all methods in the simulation examples from Section \ref{sec:simulations} and additional examples described in Table \ref{tab:simulations_suppl}, with noise levels $l = 1, 3, 5, 7, 9$. The conclusions are the same as in Section \ref{sec:simulations}. The sequential BET generally has less power than our other methods with $\eta = 0$, which we attribute to the fact that more observations are required to compensate the weight of $(d-1)^{-1}$ of each interaction term in the mixture. Figure \ref{fig:ville_illustration} shows the distribution function of $p_N = \{\max_{n=1,\dots,N}M_n\}^{-1}$ as defined in Section \ref{sec:ville_gap}. Table \ref{tab:seq_vs_nonseq_suppl} compares the truncated sequential test to the BET for the Checkerboard, Parabolic, and Sine simulation examples. Table \ref{tab:seq_vs_nonseq_suppl_256} is the same comparison for all simulation examples, but with the sequential test truncated at $n = 256$ instead of $n = 512$. In Tables \ref{tab:hoeffdingsd} and \ref{tab:chatterjee} we compare our truncated sequential test against the tests by \citet{Hoeffding1948} and \citet{Chatterjee2021}, for which we took the \textsf{R} implementations by \citet{EvenZohar2023} and \citet{Chatterjee2023}, respectively. The conclusions from these comparisons are the same as in Section \ref{sec:ville_gap}. The difference in power between our sequential test and the BET is smaller when the maximum sample size is $256$ instead of $512$, and our test generally has more power than the test by \citet{Hoeffding1948} in the Checkerboard, Circular, and Local examples and less in the others. The test by \citet{Chatterjee2021} has less power than the other methods, except for the Sine and Parabolic examples with $n = 64, 128$.

\begin{table}[b]
    \vspace{3cm}
	\centering
	\begin{tabular}{l l l}
		Scenario        &   Generation of $X$   &   Generation of $Y$   \\[0.25em]
		Checkerboard    & $X = W + \epsilon$    & $Y = \ONE(W = 2)(V_1+4\epsilon') + \ONE(W\neq 2)(V_2+4\epsilon'')$    \\[0.25em]
		Parabolic       & $X = U$               & $Y = (X - 0.5)^2 + 1.5\epsilon$    \\
		Sine            & $X = U$               & $Y = \sin(4\pi X) + 8\epsilon$    \\[0.5em]
	\end{tabular}
    \caption{Simulation examples, with the following variables: for $l = 1, 2, \dots, 10$, $\epsilon, \epsilon', \epsilon'' \sim \mathcal{N}\{0,(l/40)^2\}$; $U \sim \mathrm{Unif}[0,1]$; $W \sim \mathrm{Unif}\{1, 2, 3\}$; $V_1 \sim \mathrm{Unif}\{2, 4\}$; $V_2\sim\mathrm{Unif}\{1,3,5\}$.    \label{tab:simulations_suppl}}
    \vspace{1cm}
\end{table}

\begin{figure}[t]
    \centering
    \includegraphics[width=\textwidth]{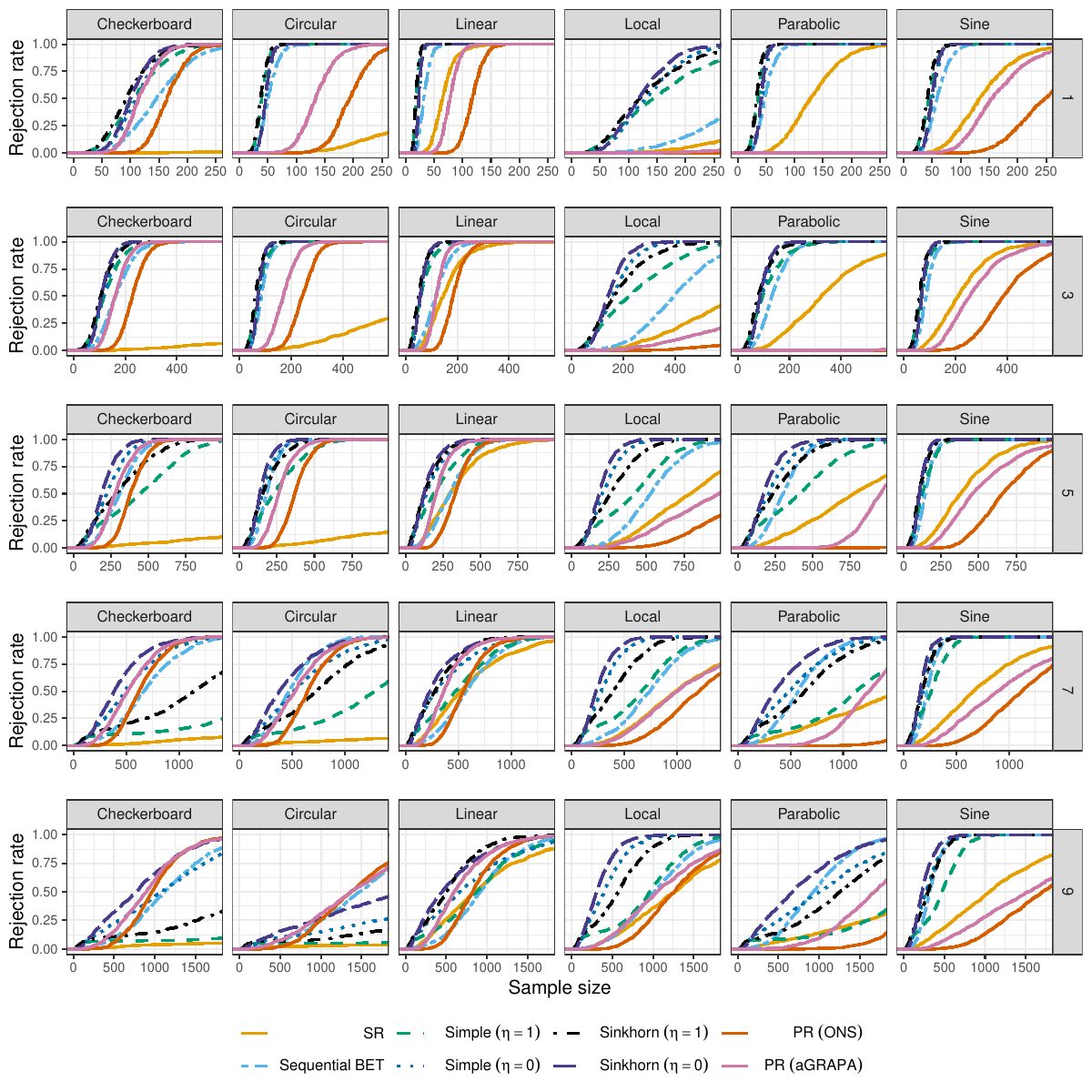}
    \caption{Rejection rates of the different methods for noise levels $l = 1, 3, 5,7, 9$ (panel rows). \label{fig:rejection_rates_supplement}}
\end{figure}

\begin{figure}
    \centering
    \includegraphics[width=\textwidth]{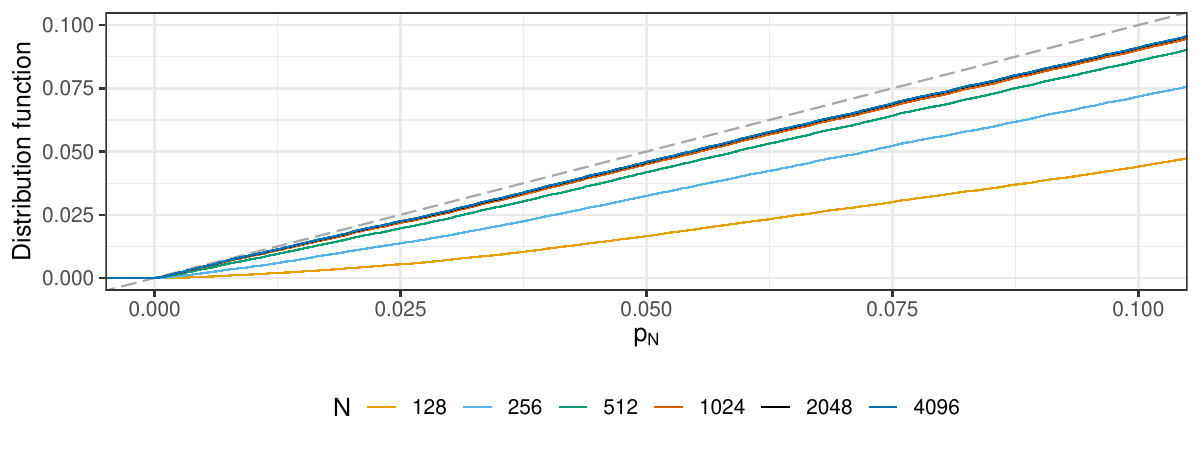}
    \caption{Distribution function of $p_N = (\max_{n=1,\dots,N}M_n)^{-1}$ as defined in Section \ref{sec:ville_gap}, estimated on $100\,000$ simulations. \label{fig:ville_illustration}}
\end{figure}

\begin{table}[t]
    \centering
    \small
	\begin{tabular}{llllllllcc}
		Simulation $\quad$ & $l \quad$ & \multicolumn{2}{l}{$\kl(Q^{(d)}\|\Ucal)\quad$} & \multicolumn{4}{l}{BET with $n$ samples $\qquad\qquad$} & \multicolumn{2}{c}{Sequential rank test} \\[0.5em]
	& & 2	& 8 & 64 & 128 & 256 & 512 & Power & Mean sample size\\[0.25em]

 & 1 & 0.00 & 0.31 & 0.45 & 0.93 & 1.00 & 1.00 & 1.00 & 100\\
 & 3 & 0.00 & 0.26 & 0.33 & 0.88 & 1.00 & 1.00 & 1.00 & 110\\
Checkerboard & 5 & 0.00 & 0.10 & 0.10 & 0.35 & 0.90 & 1.00 & 1.00 & 195\\
 & 7 & 0.00 & 0.04 & 0.04 & 0.10 & 0.37 & 0.78 & 0.71 & 353\\
 & 9 & 0.00 & 0.02 & 0.04 & 0.06 & 0.21 & 0.48 & 0.35 & 429\\[0.1em]
 & 1 & 0.00 & 0.60 & 1.00 & 1.00 & 1.00 & 1.00 & 1.00 & 40\\
 & 3 & 0.00 & 0.16 & 0.48 & 0.89 & 1.00 & 1.00 & 1.00 & 89\\
Parabolic & 5 & 0.00 & 0.06 & 0.12 & 0.35 & 0.88 & 1.00 & 0.99 & 186\\
 & 7 & 0.00 & 0.03 & 0.07 & 0.14 & 0.51 & 0.87 & 0.71 & 332\\
 & 9 & 0.00 & 0.02 & 0.03 & 0.09 & 0.28 & 0.57 & 0.40 & 415\\[0.1em]
 & 1 & 0.00 & 0.56 & 1.00 & 1.00 & 1.00 & 1.00 & 1.00 & 49\\
 & 3 & 0.00 & 0.30 & 0.91 & 1.00 & 1.00 & 1.00 & 1.00 & 69\\
Sine & 5 & 0.00 & 0.15 & 0.47 & 0.93 & 1.00 & 1.00 & 1.00 & 108\\
 & 7 & 0.00 & 0.09 & 0.19 & 0.61 & 0.99 & 1.00 & 1.00 & 160\\
 & 9 & 0.00 & 0.06 & 0.11 & 0.35 & 0.90 & 1.00 & 0.96 & 238\\[0.5em]
	\end{tabular}
    \caption{Like Table \ref{tab:seq_vs_nonseq}, but for the Checkerboard, Parabolic, and Sine example. \label{tab:seq_vs_nonseq_suppl}}
\end{table}

\begin{table}[t]
    \centering
    \small
    \begin{tabular}{lllllcc}
Simulation & $l$ & \multicolumn{3}{c}{BET with $n$ samples} & \multicolumn{2}{c}{Sequential rank test} \\[0.5em]

& & 64 & 128 & 256 & Power & Mean\\[0.25em]
 & 1 & 0.45 & 0.93 & 1.00 & 1.00 & 97\\
 & 3 & 0.33 & 0.88 & 1.00 & 1.00 & 106\\
Checkerboard & 5 & 0.10 & 0.35 & 0.90 & 0.81 & 173\\
 & 7 & 0.04 & 0.10 & 0.37 & 0.34 & 224\\
 & 9 & 0.04 & 0.06 & 0.21 & 0.19 & 237\\[0.25em]
 
 & 1 & 1.00 & 1.00 & 1.00 & 1.00 & 44\\
 & 3 & 0.88 & 1.00 & 1.00 & 1.00 & 69\\
Circular & 5 & 0.28 & 0.72 & 1.00 & 0.94 & 135\\
 & 7 & 0.06 & 0.15 & 0.59 & 0.37 & 218\\
 & 9 & 0.02 & 0.04 & 0.13 & 0.12 & 243\\[0.25em]
 
 & 1 & 1.00 & 1.00 & 1.00 & 1.00 & 22\\
 & 3 & 0.75 & 0.99 & 1.00 & 1.00 & 56\\
Linear & 5 & 0.25 & 0.63 & 0.96 & 0.92 & 123\\
 & 7 & 0.12 & 0.33 & 0.66 & 0.62 & 181\\
 & 9 & 0.07 & 0.18 & 0.44 & 0.35 & 215\\[0.25em]
 
 & 1 & 0.12 & 0.44 & 0.95 & 0.99 & 116\\
 & 3 & 0.08 & 0.28 & 0.88 & 0.96 & 138\\
Local & 5 & 0.06 & 0.17 & 0.68 & 0.83 & 170\\
 & 7 & 0.05 & 0.13 & 0.45 & 0.62 & 193\\
 & 9 & 0.06 & 0.10 & 0.31 & 0.44 & 211\\[0.25em]
 
 & 1 & 1.00 & 1.00 & 1.00 & 1.00 & 39\\
 & 3 & 0.48 & 0.89 & 1.00 & 1.00 & 85\\
Parabolic & 5 & 0.12 & 0.35 & 0.88 & 0.81 & 161\\
 & 7 & 0.07 & 0.14 & 0.51 & 0.41 & 214\\
 & 9 & 0.03 & 0.09 & 0.28 & 0.22 & 232\\[0.25em]
 
 & 1 & 1.00 & 1.00 & 1.00 & 1.00 & 47\\
 & 3 & 0.91 & 1.00 & 1.00 & 1.00 & 66\\
Sine & 5 & 0.47 & 0.93 & 1.00 & 1.00 & 103\\
 & 7 & 0.19 & 0.61 & 0.99 & 0.92 & 149\\
 & 9 & 0.11 & 0.35 & 0.90 & 0.66 & 190\\[0.5em]

    \end{tabular}
    \caption{Like Table \ref{tab:seq_vs_nonseq} but with our sequential test truncated at $n = 256$. \label{tab:seq_vs_nonseq_suppl_256}}
\end{table}

\begin{table}[t]
    \centering
    \small
    \begin{tabular}{llllllcc}
Simulation & $l$ & \multicolumn{4}{c}{Hoeffdings's $D$ with $n$ samples} & \multicolumn{2}{c}{Sequential rank test} \\[0.5em]
& & 64 & 128 & 256 & 512 & Power & Mean\\[0.25em]

 & 1 & 0.12 & 0.18 & 0.34 & 0.97 & 1.00 & 100\\
 & 3 & 0.12 & 0.15 & 0.31 & 0.91 & 1.00 & 110\\
Checkerboard & 5 & 0.09 & 0.10 & 0.15 & 0.36 & 1.00 & 195\\
 & 7 & 0.08 & 0.10 & 0.11 & 0.20 & 0.71 & 353\\
 & 9 & 0.08 & 0.08 & 0.10 & 0.14 & 0.35 & 429\\[0.25em]
 
 & 1 & 1.00 & 1.00 & 1.00 & 1.00 & 1.00 & 45\\
 & 3 & 0.29 & 0.94 & 1.00 & 1.00 & 1.00 & 72\\
Circular & 5 & 0.06 & 0.13 & 0.59 & 1.00 & 1.00 & 144\\
 & 7 & 0.05 & 0.05 & 0.08 & 0.26 & 0.68 & 342\\
 & 9 & 0.05 & 0.04 & 0.03 & 0.06 & 0.17 & 467\\[0.25em]
 
 & 1 & 1.00 & 1.00 & 1.00 & 1.00 & 1.00 & 23\\
 & 3 & 1.00 & 1.00 & 1.00 & 1.00 & 1.00 & 59\\
Linear & 5 & 0.80 & 0.98 & 1.00 & 1.00 & 1.00 & 135\\
 & 7 & 0.53 & 0.82 & 0.98 & 1.00 & 0.86 & 252\\
 & 9 & 0.35 & 0.60 & 0.88 & 1.00 & 0.58 & 357\\[0.25em]
 
 & 1 & 0.24 & 0.48 & 0.88 & 1.00 & 1.00 & 121\\
 & 3 & 0.20 & 0.38 & 0.74 & 0.99 & 1.00 & 144\\
Local & 5 & 0.15 & 0.22 & 0.52 & 0.92 & 1.00 & 189\\
 & 7 & 0.13 & 0.22 & 0.38 & 0.80 & 0.95 & 248\\
 & 9 & 0.12 & 0.15 & 0.29 & 0.70 & 0.83 & 305\\[0.25em]
 
 & 1 & 1.00 & 1.00 & 1.00 & 1.00 & 1.00 & 40\\
 & 3 & 0.74 & 0.99 & 1.00 & 1.00 & 1.00 & 89\\
Parabolic & 5 & 0.26 & 0.62 & 0.96 & 1.00 & 0.99 & 186\\
 & 7 & 0.13 & 0.29 & 0.66 & 0.98 & 0.71 & 332\\
 & 9 & 0.12 & 0.17 & 0.36 & 0.77 & 0.40 & 415\\[0.25em]
 
 & 1 & 1.00 & 1.00 & 1.00 & 1.00 & 1.00 & 49\\
 & 3 & 0.91 & 1.00 & 1.00 & 1.00 & 1.00 & 69\\
Sine & 5 & 0.54 & 0.90 & 1.00 & 1.00 & 1.00 & 108\\
 & 7 & 0.28 & 0.60 & 0.95 & 1.00 & 1.00 & 160\\
 & 9 & 0.21 & 0.41 & 0.76 & 0.99 & 0.96 & 238\\[0.5em]
 
    \end{tabular}
    \caption{Like Table \ref{tab:seq_vs_nonseq}, but comparing against the test by \citet{Hoeffding1948}. \label{tab:hoeffdingsd}}
\end{table}

\begin{table}[t]
    \centering
    \small
    \begin{tabular}{llllllcc}
Simulation & $l$ & \multicolumn{4}{c}{Chatterjee's test with $n$ samples} & \multicolumn{2}{c}{Sequential rank test} \\[0.5em]
& & 64 & 128 & 256 & 512 & Power & Mean\\[0.25em]

 & 1 & 0.12 & 0.17 & 0.23 & 0.37 & 1.00 & 100\\
 & 3 & 0.10 & 0.16 & 0.22 & 0.34 & 1.00 & 110\\
Checkerboard & 5 & 0.08 & 0.10 & 0.10 & 0.17 & 1.00 & 195\\
 & 7 & 0.07 & 0.08 & 0.09 & 0.11 & 0.71 & 353\\
 & 9 & 0.06 & 0.06 & 0.08 & 0.08 & 0.35 & 429\\[0.25em]
 
 & 1 & 0.59 & 0.84 & 0.98 & 1.00 & 1.00 & 45\\
 & 3 & 0.25 & 0.39 & 0.61 & 0.84 & 1.00 & 72\\
Circular & 5 & 0.12 & 0.14 & 0.21 & 0.30 & 1.00 & 144\\
 & 7 & 0.06 & 0.06 & 0.10 & 0.11 & 0.68 & 342\\
 & 9 & 0.06 & 0.05 & 0.07 & 0.07 & 0.17 & 467\\[0.25em]
 
 & 1 & 1.00 & 1.00 & 1.00 & 1.00 & 1.00 & 23\\
 & 3 & 0.69 & 0.92 & 1.00 & 1.00 & 1.00 & 59\\
Linear & 5 & 0.22 & 0.38 & 0.56 & 0.82 & 1.00 & 135\\
 & 7 & 0.13 & 0.19 & 0.24 & 0.39 & 0.86 & 252\\
 & 9 & 0.10 & 0.13 & 0.16 & 0.23 & 0.58 & 357\\[0.25em]
 
 & 1 & 0.17 & 0.21 & 0.33 & 0.51 & 1.00 & 121\\
 & 3 & 0.13 & 0.16 & 0.26 & 0.38 & 1.00 & 144\\
Local & 5 & 0.10 & 0.14 & 0.18 & 0.28 & 1.00 & 189\\
 & 7 & 0.10 & 0.14 & 0.17 & 0.22 & 0.95 & 248\\
 & 9 & 0.08 & 0.10 & 0.14 & 0.18 & 0.83 & 305\\[0.25em]
 
 & 1 & 1.00 & 1.00 & 1.00 & 1.00 & 1.00 & 40\\
 & 3 & 0.67 & 0.93 & 1.00 & 1.00 & 1.00 & 89\\
Parabolic & 5 & 0.25 & 0.41 & 0.61 & 0.83 & 0.99 & 186\\
 & 7 & 0.11 & 0.18 & 0.30 & 0.45 & 0.71 & 332\\
 & 9 & 0.11 & 0.12 & 0.18 & 0.24 & 0.40 & 415\\[0.25em]
 
 & 1 & 1.00 & 1.00 & 1.00 & 1.00 & 1.00 & 49\\
 & 3 & 1.00 & 1.00 & 1.00 & 1.00 & 1.00 & 69\\
Sine & 5 & 0.76 & 0.95 & 1.00 & 1.00 & 1.00 & 108\\
 & 7 & 0.40 & 0.66 & 0.89 & 0.99 & 1.00 & 160\\
 & 9 & 0.26 & 0.41 & 0.60 & 0.85 & 0.96 & 238\\[0.5em]
 
    \end{tabular}
    \caption{Like Table \ref{tab:seq_vs_nonseq}, but comparing against the test by \citet{Chatterjee2021}. \label{tab:chatterjee}}
\end{table}

\end{document}